\theoremstyle{remark}
\newtheorem{rem}{\protect\remarkname}
\theoremstyle{definition}
\newtheorem{defn}{\protect\definitionname}
\theoremstyle{plain}
\newtheorem{thm}{\protect\theoremname}
\theoremstyle{plain}
\newtheorem{cor}{\protect\corollaryname}
\theoremstyle{plain}
\newtheorem{lem}{\protect\lemmaname}
\theoremstyle{plain}
\newtheorem{prop}{\protect\propositionname}
\theoremstyle{definition}
 \newtheorem{example}{\protect\examplename}
\providecommand{\corollaryname}{Corollary}
\providecommand{\definitionname}{Definition}
\providecommand{\examplename}{Example}
\providecommand{\lemmaname}{Lemma}
\providecommand{\propositionname}{Proposition}
\providecommand{\remarkname}{Remark}
\providecommand{\theoremname}{Theorem}
\begin{document}
\title{Coevolution of Deception and Preferences: \\
Darwin and Nash Meet Machiavelli\thanks{Valuable comments were provided by the anonymous associate editor
and referees, Vince Crawford, Eddie Dekel, Jeffrey Ely, Itzhak Gilboa,
Christoph Kuzmics, Larry Samuelson, Jörgen Weibull, and Okan Yilankaya,
as well as participants at presentations at Oxford University, Queen
Mary University, G.I.R.L.13 in Lund, the Toulouse Economics and Biology
Workshop, DGL13 in Stockholm, the 25th International Conference on
Game Theory at Stony Brook, and the Biological Basis of Preference
and Strategic Behaviour 2015 conference at Simon Fraser University.
Yuval Heller is grateful to the \emph{European Research Council} for
its financial support (starting grant \#677057). Erik Mohlin is grateful
to \emph{Handelsbankens forskningsstiftelser} (grant \#P2016-0079:1)
and the \emph{Swedish Research Council} (grant \#2015-01751) for its
financial support.}}
\author{Yuval Heller\thanks{Affiliation: Department of Economics, Bar Ilan University. Address:
Ramat Gan 5290002, Israel. E-mail: yuval.heller@biu.ac.il. } and Erik Mohlin\thanks{Affiliation: Department of Economics, Lund University. Address: Tycho
Brahes väg 1, 220 07 Lund, Sweden. E-mail: erik.mohlin@nek.lu.se.}}

\maketitle
Final pre-print of a paper published in \emph{Games and Economic Behavior},
113, 2019, pp. 223-247.
\begin{abstract}
We develop a framework in which individuals' preferences coevolve
with their abilities to deceive others about their preferences and
intentions. Specifically, individuals are characterised by (i) a level
of cognitive sophistication and (ii) a subjective utility function.
Increased cognition is costly, but higher-level individuals have the
advantage of being able to deceive lower-level opponents about their
preferences and intentions in some of the matches. In the remaining
matches, the individuals observe each other's preferences. Our main
result shows that, essentially, only efficient outcomes can be stable.
Moreover, under additional mild assumptions, we show that an efficient
outcome is stable if and only if the gain from unilateral deviation
is smaller than the effective cost of deception in the environment. 

\textbf{Keywords:} Evolution of Preferences; Indirect Evolutionary
Approach; Theory of Mind; Depth of Reasoning; Deception; Efficiency.
\textbf{JEL codes:} C72, C73, D03, D83.\\
Preprint of the 
\end{abstract}

\section{Introduction\label{sec:Introduction}}

For a long time economists took preferences as given. The study of
their origin and formation was considered a question outside the scope
of economics. Over the past two decades this has changed dramatically.
In particular, there is now a large literature on the evolutionary
foundations of preferences (for an overview, see \citealp{Robson_Samuelson_2011}).
A prominent strand of this literature is the so-called ``indirect
evolutionary approach,\textquotedblright{} pioneered by \citet{Guth_Yaari1992}
(term coined by \citealp{Guth_1995_named_ind_evo_approach}). This
approach has been used to explain the existence of a variety of ``non-standard\textquotedblright{}
preferences that do not coincide with material payoffs, e.g., altruism,
spite, and reciprocal preferences.\footnote{For example, \citet{Bester_Guth1998}, \citet{Bolle_2000}, and \citet{Possajennikov2000}
study combinations of altruism, spite, and selfishness. \citet{Ellingsen_1997}
finds that preferences that induce aggressive bargaining can survive
in a Nash demand game. \citet{Fershtman_Weiss_1998} study evolution
of concerns for social status. \citet{Sethi_Somanthan2001} study
the evolution of reciprocity in the form of preferences that are conditional
on the opponent's preference type. In the context of the finitely
repeated Prisoner's Dilemma, \citet{Guttman_2003} explores the stability
of conditional cooperation. \citet{Dufwenberg_Guth_1999_EvoandDel}
study firm's preferences for large sales. \citet{Guth_Napel2006}
study preference evolution when players use the same preferences in
both ultimatum and dictator games. \citet{Kockcesen_Ok_2000} investigate
survival of more general interdependent preferences in aggregative
games. \citet{Friedman_Singh_2009} show that vengefulness may survive
if observation has some degree of informativeness. Recently, \citet{Norman_2012_DynPrefEvo}
has shown how to adapt some of these results into a dynamic model} Typically, the non-materialistic preferences in question convey some
form of commitment advantage that induces opponents to behave in a
way that benefits individuals with non-materialistic preferences,
as described by \citet{Schelling_1960_strategyofconflict} and \citet{Frank_1987_if}.
Indeed, \citet{Heifetz_et_al_2007_whattomax} show that this kind
of result is generic.

A crucial feature of the indirect evolutionary approach is that preferences
are explicitly or implicitly assumed to be at least partially observable.\footnote{\citet{Gamba_2013} is an interesting exception. She assumes play
of a self-confirming equilibrium, rather than a Nash equilibrium,
in an extensive-form game. This allows for evolution of non-materialistic
preferences even when they are completely unobservable. An alternative
is to allow for a dynamic that is not strictly payoff monotonic. This
approach is pursued by \citet{frenkel2012endowment}, who show that
multiple biases (inducing non-materialistic preferences) can survive
in non-monotonic evolutionary dynamics even if they are unobservable,
because each approximately compensates for the errors of the others.} Consequently the results are vulnerable to the existence of mimics
who signal that they have, say, a preference for cooperation, but
actually defect on cooperators, thereby earning the benefits of having
the non-standard preference without having to pay the cost (\citealp{Samuelson_2001_intro}).
The effect of varying the degree to which preferences can be observed
has been investigated by \citet{Ok_VegaRedondo_2001}, \citet{Ely_Yilankaya_2001},
\citet{Dekel_Ely_Yilankaya_2006}, and \citet{Herold_Kuzmics_2009}.
They confirm that the degree to which preferences are observed decisively
influences the outcome of preference evolution. 

Yet, the degree to which preferences are observed is still exogenous
in these models. In reality we would expect both the preferences and
the ability to observe or conceal them to be the product of an evolutionary
process.\footnote{On this topic, \citet{Robson_Samuelson_2011} write: ``The standard
argument is that we can observe preferences because people give signals
\textendash{} a tightening of the lips or flash of the eyes \textendash{}
that provide clues as to their feelings. However, the emission of
such signals and their correlation with the attendant emotions are
themselves the product of evolution. {[}...{]} We cannot simply assume
that mimicry is impossible, as we have ample evidence of mimicry from
the animal world, as well as experience with humans who make their
way by misleading others as to their feelings, intentions and preferences.
{[}...{]} In our view,\emph{ the indirect evolutionary approach will
remain incomplete until the evolution of preferences, the evolution
of signals about preferences, and the evolution of reactions to these
signals, are all analysed within the model}.\textquotedblright{} {[}Emphasis
added{]} (pp. 14\textendash 15)} \emph{This paper provides a first step towards filling in the missing
link between evolution of preferences and evolution of how preferences
are concealed, feigned, and detected.}\footnote{The recent working paper of \citet{gauer2016cognitive} presents a
different way to endogenising the observability of preferences. Specifically,
they assume that preferences are ex ante uncertain, and that each
player may exert a cognitive effort to privately observe the opponent's
preferences. }\emph{ }In our model the ability to observe preferences and the ability
to deceive and induce false beliefs about preferences are endogenously
determined by evolution, jointly with the evolution of preferences.
Cognitively more sophisticated players have positive probability of
deceiving cognitively less sophisticated players. Mutual observation
of preferences occurs only in matches in which such deception fails.
This setup is general enough to encompass both the standard indirect
evolutionary model where preferences are always observed, and the
reverse case in which more sophisticated types always deceive lower
types, as well as all intermediate cases between these two extremes.
\emph{We} \emph{find that, generically, only efficient outcomes can
be played in stable population states.} Moreover, we define a single
number that captures the effective cost of deception against naive
opponents, and show that \emph{an efficient outcome is stable if and
only if the gain from a unilateral deviation is smaller than the effective
cost of deception}. 

\paragraph{Overview of the Model.}

As is common in standard evolutionary game theory we assume an infinite
population of individuals who are uniformly randomly matched to play
a symmetric normal form game.\footnote{It is known that positive assortative matching is conducive to the
evolution of altruistic behaviour (\citealp{Hines_et_al_1979}) and
non-materialistic preferences even when preferences are perfectly
unobservable (\citealp{Alger_Weibull_HomoMoralis}; \citealp{Bergstrom_1995}).
It is also known that finite populations allow for evolution of spiteful
behaviours (\citealp{Schaffer_1988}) and non-materialistic preferences
(\citealp{Huck_Oechssler1999}). By assuming that individuals are
uniformly randomly matched in an infinite population, we avoid confounding
these effects with the effect of endogenising the degree of observability.} Each individual has a type, which is a tuple, consisting of a \emph{preference
component} and a \emph{cognitive component}. The preference component
is identified with a subjective utility function over the set of outcomes
(i.e. action profiles), which may differ from the objective payoffs
(i.e., fitness) of the underlying game. The cognitive component is
simply a natural number representing the level of cognitive sophistication
of the individual.\footnote{The one-dimensional representation of cognitive ability reflects the
idea that if one is good at deceiving others, then one is more likely
to be good also at reading others and avoiding being deceived by them.
In this paper we simplify this relation by assuming a perfect correlation
between the two abilities, and leave the study of more general relations
for future research.}$^{,}$\footnote{Remark \ref{enu:An-alternative-interpretation} in Section \ref{subsec:Configurations}
presents an alternative interpretation of our model, according to
which this cognitive component represents the agent's social status,
rather than the agent's ability to deceive other agents.} The cost of increased cognition is strictly positive. 

When two individuals with different cognitive levels are matched,
there is positive probability (which may depend on the cognitive levels
of both agents) that the agent with the higher level deceives his
opponent. For the sake of tractability, and in order not to limit
the degree to which higher levels may exploit lower levels, we model
a strong form of deception. The deceiver observes the opponent's preferences
perfectly, and is allowed to choose whatever she wants the deceived
party to believe about the deceiver's intended action choice. A strategy
profile that is consistent with this form of deception is called a
\emph{deception equilibrium}. With the remaining probability (or with
probability one if both agents have the same cognitive level) there
is no deception in the match. In this case, we assume that each player
observes the opponent's preferences, and the individuals play a Nash
equilibrium of the complete information game induced by their subjective
preferences.

The state of a population is described by a \emph{configuration},
consisting of a type distribution and a behaviour policy. The \emph{type
distribution} is simply a finite support distribution on the set of
types. The \emph{behaviour policy} specifies a Nash equilibrium for
each match without deception, and a deception equilibrium for each
match with deception. In a \emph{neutrally stable configuration} all
incumbents earn the same expected fitness, and if a small group of
mutants enter they earn weakly less than the incumbents in any \emph{focal}
post-entry state. A focal post-entry state is one in which the incumbents
behave against each other in the same way as before the mutants entered.

\paragraph{Main Results.}

We say that a strategy profile is (fitness-)efficient if it maximises
the sum of objective payoffs. Theorem \ref{thm:highest-type-behaivor}
shows that in any stable configuration, any type $\bar{\theta}$ with
the highest cognitive level in the incumbent population must play
an efficient strategy profile when meeting itself. The intuition is
that otherwise a highest-type mutant who mimics the play of $\bar{\theta}$
against all incumbents while playing an efficient strategy profile
against itself would outperform type $\bar{\theta}$ (a novel application
of the ``secret handshake'' argument due to \citealp{robson1990efficiency}). 

Next we restrict attention to generic games (i.e. games that result
with probability one if fitness payoffs are independently drawn from
a continuous distribution) and obtain our first main result: \emph{any
stable configuration must induce efficient play} in all matches between
all types. The idea of the proof can be briefly sketched as follows.
We first show that any type $\theta$ in a stable configuration must
play an efficient strategy profile when meeting \textit{itself}. Otherwise
a mutant who has the same level as $\theta$ and the same utility
function as $\theta$, but who plays efficiently against itself, could
invade the population. Next, we show that \textit{any} two types must
play an efficient strategy profile. The intuition is that otherwise
the average within-group fitness would be higher than the between-group
fitness, which implies instability in the face of small perturbations
in the frequency of the types: a type who became slightly more frequent
would have a higher fitness than the other incumbents, and this would
move the population away from the original configuration.

The existing literature (e.g., \citealp{Dekel_Ely_Yilankaya_2006})
has demonstrated that if players perfectly observe each other's preferences
(or do so with sufficiently high probability), then only efficient
outcomes are stable. As was pointed out above, our model encompasses
the limiting case in which it is arbitrarily ``cheap and easy''
to deceive the opponent, i.e. the case in which the marginal cost
of an additional cognitive level is very low, and having a slightly
higher cognitive level allows a player to deceive the opponent with
probability one. A key contribution of the paper is to show that even
when it is cheap and easy to deceive the opponent, \emph{the seemingly
mild assumption of perfect observability, and Nash equilibrium behaviour,
among players with the same cognitive level is enough to ensure that}
\emph{stability implies efficiency}. 

In order to obtain sufficient conditions for stability we restrict
attention to generic games that admit a ``punishment action'' that
ensures that the opponent achieves strictly less than the symmetric
efficient fitness payoff. For games satisfying this relatively mild
requirement we fully characterise stable configurations. We define
the \emph{(fitness) deviation gain} of an action profile to be the
maximal fitness increase a player may obtain by unilaterally deviating
from this action profile (this gain is zero if and only if the action
profile is a Nash equilibrium of the underlying game). Next we define
the \emph{effective cost of deception} in the environment as the minimal
ratio between the cost of an increased cognitive level and the probability
that an agent with this level deceives an opponent with the lowest
cognitive level. Our second main result shows that an efficient action
profile is the outcome of a stable configuration if and only if its
deviation gain is smaller than the effective cost of deception. In
particular, \emph{efficient Nash equilibria are stable in all environments,
while non-Nash efficient action profiles are stable only as long as
the gain from a unilateral deviation is sufficiently small}. 

Next, we note that non-generic games may admit different kinds of
stable configurations. One particularly interesting family of non-generic
games is the family of zero-sum games, such as the Rock-Paper-Scissors
game. We analyse this game and characterise a heterogeneous stable
population (inspired by a related construction in \citealp{Conlisk_2001})
in which different cognitive levels coexist, players with equal levels
play the Nash equilibrium of the underlying game, and players with
higher levels beat their opponents but this gain is offset by higher
cognitive costs. 

Finally, in Section \ref{sec:Extensions} we discuss two extensions
of the model (which are formally analysed in Appendices \ref{Sect Interdependent}
and \ref{sec:partial-observability}): (1) we relax the assumption
that each agent perfectly observes the partner's preferences in matches
without deception, and (2) we allow for type-interdependent preferences
($\grave{\textrm{a}}$ la \citealp{Herold_Kuzmics_2009}), which are
represented by utility functions that are defined over both action
profiles and the opponent's type. 

\paragraph{Further Related Literature.\label{sec:Related-Literature}}

Our model is related to work in biology and evolutionary psychology
on the evolution of the ``theory of mind'' (\citealp{Premack_Wodruff_1978}),
specifically, the ``Machiavellian intelligence\textquotedblright{}
hypothesis (\citealp{Humphrey_1976}) and the ``social brain\textquotedblright{}
hypothesis (\citealp{Byrne_Whiten_1998}), according to which the
extraordinary cognitive abilities of humans evolved as a result of
the demands of social interactions, rather than the demands of the
natural environment: in a single-person decision problem there is
a fixed benefit from being smart, but in a strategic situation it
may be important to be smarter than the opponent. From an evolutionary
perspective, there is a trade-off between the benefit of outsmarting
the opponent and the non-negligible costs associated with increased
cognitive capacity (\citealp{Holloway_1996}; \citealp{Kinderman_etal_1998}).
Our model incorporates these features. 

There is a smaller literature on the evolution of strategic sophistication
within game theory; see, e.g., \citet{Stahl_93}, \citet{Banerjee_Weibull_1995},
\citet{Stennek_2000}, \citet{Conlisk_2001}, \citet{Abreu_Sethi_2003},
\citet{Mohlin_2012_ETM}, \citet{Rtischev_2016}, and \citet{heller2014three}.
Following these papers, we provide results to the effect that different
degrees of cognitive sophistication may coexist. 

\citet{robalino2016evolution} construct a model to demonstrate the
advantage of having a theory of mind (understood as an ability to
ascribe stable preferences to other players) over learning by reinforcement.
In novel games the ascribed preferences allow the agents with a theory
of mind to draw on past experience whereas a reinforcement learner
without such a model has to start over again. \citet{Hopkins_2014_compalt}
explains why costly signaling of altruism may be especially valuable
for those agents who have a theory of mind. 

\citet{robson1990efficiency} initiated a literature on evolution
in cheap-talk games by formulating the secret handshake effect: evolution
selects an efficient stable state if mutants can send messages that
the incumbents either do not see or do not benefit from seeing. Against
the incumbents a mutant plays the same action as the incumbents do,
but against other mutants the mutant plays an action that is a component
of the efficient equilibrium. Thus the mutants are able to invade
unless the incumbents are already playing efficiently. See also the
related analysis in \citet{Matsui_1991} and \citet{wiseman2001cooperation}.
We allow for deception and still find that efficiency is necessary
(though no longer sufficient) for stability. As pointed out by \citet{Warneryd_1991}
and \citet{Schlag_1993_cheapWP}, among others, problems arise if
either the incumbents use all available messages (so that there is
no message left for the incumbents to coordinate on) or the incumbents
follow a strategy that induces the mutants to play an action that
lowers the mutants' payoffs below those of the incumbents. To circumvent
this problem, \citet{Kim_Sobel_1995} use stochastic stability arguments
and \citet{Warneryd_1998_commcomplex} uses complexity costs. Similarly,
evolution selects an efficient outcome in our model, where the preferences
also serve the function of messages.

We conclude by mentioning three other related strands of literature
in which deception has been implicitly studied: (1) the ``strategic
teaching'' literature, which studies situations in which sophisticated
agents manipulate the learning input of opponents in order to change
the beliefs and future actions of these opponents (see, e.g., \citealp[Section 8.11]{fudenberg1998theory,camerer2002sophisticated,schipper2017strategic});
(2) the ``reputation'' literature, in which a long-run player manipulates
the beliefs and behaviour of short-run opponents (see \citealp{mailath2006repeated},
for a textbook exposition); and (3) non-equilibrium level-k analysis
of games of conflict, where agents can use pre-play communication
to deceive naive opponents (see, e.g., \citealp{crawford2003lying}).

\paragraph{Structure.}

The rest of the paper is organised as follows. Section \ref{Sect Model}
presents the model. The results are presented in Section \ref{sec:Results}.
In Section \ref{sec:Extensions} we extend the model to deal with
partial observability (formally analysed in Appendix \ref{sec:partial-observability})
and type-interdependent preferences (formally analysed in Appendix
\ref{Sect Interdependent}). We conclude in Section \ref{Sect discussion}.
Appendix \ref{sec:Formal-Proofs-of} contains proofs not in the main
text. Appendix \ref{sec:Constructions-of-Heterogeneous} formally
constructs heterogeneous stable populations in specific games. 

\section{Model\label{Sect Model}}

We consider a large population of agents, each of whom is endowed
with a type that determines her subjective preferences and her cognitive
level. The agents are randomly matched to play a symmetric two-player
game. A dynamic evolutionary process of cultural learning, or biological
inheritance, increases the frequency of more successful types. We
present a static solution concept to capture stable population states
in such environments.

\subsection{Underlying Game and Types\label{subsec:Underlying-Game}}

Consider a symmetric two-player normal form game $G$ with a finite
set $A$ of pure actions and a set $\Delta\left(A\right)$ of mixed
actions (or strategies). We use the letter $a$ (resp., $\sigma$)
to describe a typical pure action (resp., mixed action). Payoffs are
given by $\pi:A\times A\rightarrow\mathbb{R}$, where $\pi\left(a,a^{\prime}\right)$
is the material (or fitness) payoff to a player using action $a$
against action $a^{\prime}$. The payoff function is extended to mixed
actions in the standard way, where $\pi\left(\sigma,\sigma^{\prime}\right)$
denotes the material payoff to a player using strategy $\sigma$,
against an opponent using strategy $\sigma^{\prime}$. With a slight
abuse of notation let $a$ denote the degenerate strategy that puts
all the weight on action $a$. We adopt this convention for probability
distributions throughout the paper. 
\begin{rem}
\label{rem:The-restriction-to}Asymmetric interactions can be captured
in our setup (as is standard in the literature; see, e.g., \citealp[Section 9.5]{brown1950solutions,selten1980note,van1987stability})
by embedding the asymmetric interaction in a larger, symmetric game
in which nature first randomly assigns the players to roles in the
asymmetric interaction. 
\end{rem}
We imagine a large population of individuals (technically, a continuum)
who are uniformly randomly matched to play the game $G$. Each individual
$i$ in the population is endowed with a \emph{type} $\theta=\left(u,n\right)\in\Theta=U\times\mathbb{N}\text{,}$
consisting of \emph{preferences}, identified with a von Neumann\textendash Morgenstern
utility function, $u\in U$, and \emph{cognitive level}\footnote{For tractability, we choose to work with a discrete set of cognitive
levels. The main results in the paper can be adapted to a setup in
which the feasible set of cognitive efforts is a continuum, provided
that we maintain our focus on finite-support type distributions. } $n\in\mathbb{N}$. Let $\Delta\left(\Theta\right)$ be the set of
all finite support probability distributions on $\Theta$. A population\textit{
}is represented by a finite-support \emph{type distribution} $\mu\in\Delta\left(\Theta\right)$.
\footnote{Comment \ref{enu:Continuum-population-and} in Section \ref{subsec:Configurations}
explains why we restrict attention to finite-support type distributions.} Let $C\left(\mu\right)$ denote the support (carrier) of type distribution
$\mu\in\Delta\left(\Theta\right)$. Given a type $\theta$, we use
$u_{\theta}$ and $n_{\theta}$ to refer to its preferences and cognitive
level, respectively.

In the main model we assume that the preferences are defined over
action profiles, as in \citet{Dekel_Ely_Yilankaya_2006}.\footnote{In Appendix \ref{Sect Interdependent}, we study \emph{type-interdependent}
preferences, which depend on the opponent's type, as in \citet{Herold_Kuzmics_2009}. } This means that any preferences can be represented by a utility function
of the form $u:A\times A\rightarrow\mathbb{R}\text{.}$ The set of
all possible (modulo affine transformations) utility functions on
$A\times A$ is $U=\left[0,1\right]^{\left\vert A\right\vert ^{2}}$.
Let $BR_{u}\left(\sigma^{\prime}\right)$ denote the set of best replies
to strategy $\sigma^{\prime}$ given preferences $u$, i.e. $BR_{u}\left(\sigma^{\prime}\right)=\arg\max_{\sigma\in\Delta\left(A\right)}u\left(\sigma,\sigma^{\prime}\right)$.

There is a fitness cost to increased cognition, represented by a strictly
increasing cognitive cost function $k:\mathbb{N\rightarrow R}_{+}$
satisfying $\lim_{n\rightarrow\infty}k\left(n\right)=\infty$. The
fitness payoff of an individual equals the material payoff from the
game, minus the cognitive cost. Let $k_{n}$ denote the cost of having
cognitive level $n$. Hence $k_{\theta}=k_{n_{\theta}}$ denotes the
cost of having type $\theta$. Without loss of generality, we assume
that $k_{1}=0$. 

We would like to put forward two motivations for the assumption that
there is an increasing fitness cost of having a higher cognitive level.
The first motivation is relevant to settings in which the evolution
of types is influenced by biological inheritance. There is a literature
in biology and biological anthropology showing that brain volume,
especially neocortex volume, is correlated with the size of social
groups across species. Noting that brain tissue is metabolically costly,
it has been argued that the size of the brain (in particular the neocortex)
is at least partially determined by complexity of social organisation
(see \citealp{Dunbar_1998}, for a summary of the evidence and the
arguments), which is in line with the ``Machiavellian intelligence\textquotedblright{}
and ``social brain\textquotedblright{} hypotheses (\citealp{Humphrey_1976,byrne1997machiavellian,whiten_byrne_1988tactical}).

The second motivation is relevant also in setups in which types evolve
as part of a social learning process. For concreteness, suppose that
agents face two kinds of decision problems throughout their lives:
(1) individual (ecological) decision problems against nature, and
(2) interactive (social) decision problems as represented by playing
the underlying game $G$. Agents have limited cognitive capacity.
New agents who join the population face a trade-off between developing
their deception-related cognitive skills (which are helpful when playing
the game $G$) and developing other skills (which are helpful in the
decision problems against nature). When a new agent joins the population,
his type $\theta=\left(u_{\theta},n_{\theta}\right)$ determines how
much effort the agent exerts in developing his deception-related cognitive
ability $n_{\theta}$ (while the remaining effort is exerted to develop
the other skills). The increasing cognitive cost function $k\left(n_{\theta}\right)$
captures the agent's loss due to his sub-optimal performance in the
decision problems against nature, which is induced by diverting effort
to developing his deception-related cognitive ability at the expense
of developing the other skills.

\subsection{Configurations\label{subsec:Configurations}}

A state of the population is described by a type distribution and
a behaviour policy for each type in the support of the type distribution.
An individual's behaviour is assumed to be (subjectively) rational
in the sense that it maximises her subjective preferences given the
belief she has about the opponent's expected behaviour. However, her
beliefs may be incorrect if she is deceived by her opponent. An individual
may be deceived if her opponent is of a strictly higher cognitive
level. The probability of deception is given by the function\emph{
$q:\mathbb{N}\times\mathbb{N}\rightarrow\left[0,1\right]$ }that satisfies
$q\left(n,n'\right)=0$ if and only if $n\leq n'$.\footnote{One can extend our main results to a setup in which individuals with
lower cognitive levels can deceive opponents with higher cognitive
levels with a sufficiently small probability. Specifically, assume
that for each generic game, there exists $\epsilon>0$, such that
$q\left(n,n'\right)<\epsilon$ for each $n\leq n'$ (instead of requiring
$q\left(n,n'\right)=0$). One can show that the characterization of
NSCs in Corollary \ref{cor:full-characterization-generic-games} remains
qualitatively the same. Namely, the only candidates to be NSCs are
configurations in which all agents have the minimal cognitive level,
and all agents play the efficient action profile in every match with
no deception. These configurations are NSCs if the effective cost
of defection is sufficiently low.} We interpret $q\left(n,n'\right)$ as the probability that a player
with cognitive level $n$ deceives an opponent with cognitive level
$n'$. Specifically, when two players with cognitive levels $n'$
and $n\geq n'$ are matched to play, then with a probability of $q\left(n,n'\right)$
the individual with the higher cognitive level $n$ (henceforth, the
\emph{higher type}) observes the opponent's preferences perfectly,
and is able to deceive the opponent (henceforth, the \emph{lower type}).
The deceiver is allowed to choose whatever she wants the deceived
party to believe about the deceiver's intended action choice. The
deceived party best-replies given her possibly incorrect belief. For
simplicity, we assume that if the deceived party has multiple best
replies, then the deceiver is allowed to break indifference, and choose
which of the best replies she wants the deceived party to play. Consequently
the deceiver is able to induce the deceived party to play any strategy
that is a best reply to some belief about the opponent's mixed action,
given the deceived party's preferences. 

Given preferences $u\in U$, let $\Sigma\left(u\right)$ denote the
set of \emph{undominated strategies}, which are the set of actions
that are best replies to at least one strategy of the opponent (given
the preferences $u$). Formally, we define 
\[
\Sigma\left(u\right)=\left\{ \sigma\in\Delta\left(A\right):\text{ there exists }\sigma^{\prime}\in\Delta\left(A\right)\text{ such that }\sigma\in BR_{u}\left(\sigma^{\prime}\right)\right\} .
\]
We say that a strategy profile is a \emph{deception equilibrium} if
the strategy profile is optimal from the point of view of the deceiver
under the constraint that the deceived player has to play an undominated
strategy. Formally:
\begin{defn}
Given two types $\theta,\theta^{\prime}$ with $n_{\theta}>n_{\theta^{\prime}},$
a strategy profile $\left(\tilde{\sigma},\tilde{\sigma}^{\prime}\right)$
is a \emph{deception equilibrium }if
\[
\left(\tilde{\sigma},\tilde{\sigma}^{\prime}\right)\in\arg\max_{\sigma\in\Delta\left(A\right),\sigma^{\prime}\in\Sigma\left(u_{\theta^{\prime}}\right)}u_{\theta}\left(\sigma,\sigma^{\prime}\right).
\]
Let $DE\left(\theta,\theta^{\prime}\right)$ be the set of all such
deception equilibria. 

With the remaining probability of $1-q\left(n,n'\right)-q\left(n',n\right)$
there is no deception between the players with cognitive levels $n$
and $n'$, and they play a Nash equilibrium of the game induced by
their preferences. Given two preferences $u,u^{\prime}\in U$, let
$NE\left(u,u^{\prime}\right)\subseteq\Delta\left(A\right)\times\Delta\left(A\right)$
be the set of mixed equilibria of the game induced by the preferences
$u$ and $u^{\prime}$, i.e.
\[
NE\left(u,u^{\prime}\right)=\left\{ \left(\sigma,\sigma^{\prime}\right)\in\Delta\left(A\right)\times\Delta\left(A\right):\sigma\in BR_{u}\left(\sigma^{\prime}\right)\text{ and }\sigma^{\prime}\in BR_{u^{\prime}}\left(\sigma\right)\right\} .
\]
\end{defn}
We are now in a position to define our key notion of a configuration
(following the terminology of \citealp{Dekel_Ely_Yilankaya_2006}),
by combining a type distribution with a behaviour policy, as represented
by Nash equilibria and deception equilibria.
\begin{defn}
A\emph{ configuration} is a pair $\left(\mu,b\right)$ where $\mu\in\Delta\left(\Theta\right)$
is a type distribution, and $b=\left(b^{N},b^{D}\right)$ is a \emph{behaviour
policy}, where $b^{N},b^{D}:C\left(\mu\right)\times C\left(\mu\right)\longrightarrow\Delta\left(A\right)$
satisfy for each $\theta,\theta^{\prime}\in C\left(\mu\right):$ 
\[
q\left(n_{\theta},n_{\theta'}\right)+q\left(n_{\theta'},n_{\theta}\right)<1\,\,\Rightarrow\,\,\left(b_{\theta}^{N}\left(\theta^{\prime}\right),b_{\theta^{\prime}}^{N}\left(\theta\right)\right)\in NE\left(\theta,\theta^{\prime}\right),\text{ and}
\]
\[
{\normalcolor {\color{purple}{\normalcolor q\left(n_{\theta},n_{\theta'}\right)}{\normalcolor >0}{\normalcolor \,\Leftrightarrow}}}\,\ensuremath{n_{\theta}>n_{\theta'}}\,\Rightarrow\,\left(b_{\theta}^{D}\left(\theta^{\prime}\right),b_{\theta^{\prime}}^{D}\left(\theta\right)\right)\in DE\left(\theta,\theta^{\prime}\right).
\]
We interpret $b_{\theta}^{D}\left(\theta^{\prime}\right)$ (resp.,
$b_{\theta}^{N}\left(\theta^{\prime}\right)$) to be the strategy
used by type $\theta$ against type $\theta^{\prime}$ when deception
occurs (resp., does not occur).
\end{defn}
Given a configuration $\left(\mu,b\right)$ we call the types in the
support of $\mu$ \emph{incumbents}. Note that standard arguments
imply that for any distribution $\mu$, there exists a mapping $b:C\left(\mu\right)\times C\left(\mu\right)\rightarrow\Delta\left(A\right)$
such that $\left(\mu,b\right)$ is a configuration. Given a configuration
$\left(\mu,b\right)$ and types $\theta,\theta'\in C\left(\mu\right)$,
let $\pi_{\theta}\left(\theta'|\left(\mu,b\right)\right)$ be the
expected fitness of an agent with type $\theta$ conditional on being
matched with $\theta'$: {\small{}
\[
\pi_{\theta}\left(\theta'|\left(\mu,b\right)\right)=\left(q\left(n_{\theta},n_{\theta'}\right)+q\left(n_{\theta'},n_{\theta}\right)\right)\cdot\pi\left(b_{\theta}^{D}\left(\theta^{\prime}\right),b_{\theta^{\prime}}^{D}\left(\theta\right)\right)+\left(1-\left(q\left(n_{\theta},n_{\theta'}\right)+q\left(n_{\theta'},n_{\theta}\right)\right)\right)\cdot\pi\left(b_{\theta}^{N}\left(\theta^{\prime}\right),b_{\theta^{\prime}}^{N}\left(\theta\right)\right).
\]
} 

The expected fitness of an individual of type $\theta$ in configuration
$\left(\mu,b\right)$ is
\[
\Pi_{\theta|\left(\mu,b\right)}=\sum_{\theta^{\prime}\in C\left(\mu\right)}\mu\left(\theta'\right)\cdot\pi_{\theta}\left(\theta'|\left(\mu,b\right)\right)-k_{\theta},
\]
where $\mu\left(\theta'\right)$ denotes the frequency of type $\theta'$
in the population. Given a configuration $\left(\mu,b\right)$, let
$\Pi_{\left(\mu,b\right)}$ be the average fitness in the population,
i.e., 
\[
\Pi_{\left(\mu,b\right)}=\sum_{\theta\in C\left(\mu\right)}\mu\left(\theta\right)\cdot\Pi_{\theta|\left(\mu,b\right)}.
\]
When all incumbent types have the same expected fitness (i.e. $\Pi_{\left(\mu,b\right)}=\Pi_{\theta|\left(\mu,b\right)}$
for each $\theta\in C\left(\mu\right)$), we say that the configuration
is \emph{balanced}.

A number of aspects of our model of cognitive sophistication merit
further discussion.
\begin{enumerate}
\item \emph{Unidimensional cognitive ability}: In reality the ability to
deceive and the ability to detect preferences are probably not identical.
However, both of them are likely to be strongly related to cognitive
ability in general, and more specifically to theory of mind and the
ability to entertain higher-order intentional attitudes (\citealp{Kinderman_etal_1998};
\citealp{Dunbar_1998}). For this reason we believe that a unidimensional
cognitive trait is a reasonable approximation. Moreover, it is an
approximation that affords us necessary tractability. We connect the
abilities to detect and conceal preferences with the ability to deceive,
by assuming (throughout the paper) that one is able to deceive one\textquoteright s
opponent if and only if one observes the opponent\textquoteright s
preferences and conceals one\textquoteright s own preferences from
the opponent. 
\item \emph{\label{enu:Power-of-deception:}Power of deception}: Our definition
of deception equilibrium amounts to an assumption that a successful
deception attempt allows the deceiver to implement her favourite strategy
profile, under the constraint that the deceived party does not choose
a dominated action from her point of view. Moreover, we assume that
a player with a higher cognitive level knows whether her deception
was successful when choosing her action. These assumptions give higher
cognitive types a clear advantage over lower cognitive types. Hence,
in an alternative model in which successful deceivers have less deception
power, we would expect the evolutionary advantage of higher types
to be weaker than in our current model. Below we find that (for generic
games) in any stable state everyone plays the same efficient action
profile and has the lowest cognitive level.\footnote{Thus, in our setup a cognitive arms race (i.e. Machiavellian intelligence
hypothesis $\grave{\textrm{a}}$ la \citealp{Humphrey_1976,Robson_2003})
is a non-equilibrium phenomenon, or alternatively a feature of non-generic
games.} We conjecture that these states will remain stable also in a model
where successful deception is less powerful. We leave for future research
the analysis of feasible but less powerful deception technologies.
\item \emph{Same deception against all lower types}: Our model assumes that
a player may use different deceptions against different types with
lower cognitive levels. We note that our results remain the same (with
minor changes to the proofs) in an alternative setup in which individuals
have to use the \emph{same} mixed action in their deception efforts
towards all opponents. 
\item \emph{Non-Bayesian deception}: Note that a successful deceiver is
able to induce the opponent to believe that the deceiving type will
play any mixed action $\hat{\sigma}$, even an action that is never
played by any agent in the population. That is, deception is so powerful
in our model that the deceived opponent is not able to apply Bayesian
reasoning in his false assessment of which action the agent is going
to play. We think of this assumption as describing a setting in which
the deceiver (of a higher cognitive type) is able to provide a convincing
argument (tell a convincing story) that she is going to play $\hat{\sigma}$.
From a Bayesian perspective one might object that these arguments
are signals that should be used to update beliefs. To this we would
respond that the stories told to a potential victim by different deceivers
will vary across would-be deceivers, even across would-be deceivers
with the same preferences. Hence no individual will ever accumulate
a database containing more than one or a handful of similar arguments.
The limited amount of data on similar arguments will preclude the
efficient use of Bayesian updating for inferring likely behaviour
following different arguments. We are not aware of the existence of
a Bayesian model of deception that is satisfactory for our purposes.
We leave the development of such a Bayesian model to future research.
\item \emph{Observation and Nash equilibrium behaviour in the case of non-deception}:
It is difficult to avoid an element of arbitrariness when making an
assumption about what is being observed when neither party is able
to deceive the other. As in most of the existing literature on the
indirect evolutionary approach (e.g., \citealp{Guth_Yaari1992}; \citealp[Section 3]{Dekel_Ely_Yilankaya_2006}),
we assume that when there is no deception, then there is perfect observability
of the opponent's preferences. In Section \ref{subsec:Partial-Observability-When}
we discuss the implications of the relaxation of this assumption.
We consider it to be an important contribution of our analysis that
it highlights the critical importance of the assumption made regarding
observability, and the resulting behaviour, in matches without deception.\\
We further assume that if two agents observe each other's preferences
then they play a Nash equilibrium of the complete information game
induced by their preferences. This assumption is founded on the common
idea that when agents are not deceived, then (1) over time they adapt
their beliefs (in a way that is consistent with Bayesian inference)
about the distribution of actions they face, conditional on their
partners' observed preferences, and (2) they best-reply given their
belief about their current partner's distribution of actions. By contrast,
as discussed above, when agents are deceived they are unable to correctly
update their beliefs about their partner's action (i.e. unable to
use Bayesian inference to arrive at beliefs about the opponent's distribution
of actions). Still, they are able to best-reply given their (possibly
false) beliefs about the deceiver's action. 
\item \emph{\label{enu:Continuum-population-and}Continuum population and
finite-support type distributions}. Our model is intended to be a
simple approximation of a real-life environment that includes a large
finite population, and in which new agents who join the population,
or existing agents who revise their choice of type, typically choose
to mimic one of the existing active types. As a result each active
type is played by several agents (rather than by a single agent),
and for each active type there is a positive probability of a match
between agents who are endowed with this type. As is common in the
literature, for tractability, we assume a continuum population and
an ``exact law of large numbers,'' rather then a large finite population.
We want all other aspects of the model to be as close as possible
to the real-life environment. Specifically, we want to maintain the
property that for each type, there is a positive probability of a
match between agents who are endowed with this type. In order to maintain
this property, we have to assume that the distribution of active types
has a finite support.\footnote{More accurately, we need to assume that the set of active types is
countable. All of our results hold under this somewhat weaker assumption. } 
\item \label{enu:An-alternative-interpretation}\emph{Alternative interpretation
of our model: social status}. As suggested by one of the referees,
one can present an interesting interpretation of our model that describes
social status, rather than deception. According to this interpretation,
the level $n_{\theta}$ of type $\theta$ describes the social status
(like caste) of agents belonging to this type. When two players are
randomly matched to play a game, first a \textquotedblleft social
struggle\textquotedblright{} ensues. With a certain probability, the
higher-caste player prevails and enslaves the lower-caste opponent.
This means he can dictate the choice by the lower-caste opponent as
long as the choice is undominated for this opponent. Otherwise, they
simply play the Nash equilibrium of the game (given by their preferences).
Maintaining a higher social status is costly in terms of fitness. 
\end{enumerate}

\subsection{Evolutionary Stability \label{subSect evolution}}

As discussed in the previous subsection, each agent in the population
behaves in a way that maximises the agent's subjective preferences
induced by the agent's type. By contrast, the distribution of types
in the population evolves according to the expected material fitness
obtained by each type. This evolutionary process is captured by the
static solution concepts introduced in this subsection.

We consider  dynamics in which types with higher expected fitness
gradually become more frequent. One example of such dynamics is the
replicator dynamic (\citealp{Taylor_Jonker_1978}), which can be interpreted
in terms of biological (asexual) reproduction or as social learning
by imitation (see \citealp[Chapter 3]{Weibull_1995} for a textbook
introduction). According to the latter interpretation, an agent who
has the opportunity to revise her choice or a new agent who joins
the population randomly chooses a member of the population as ``mentor,''
and imitates the mentor's type; the probability that an agent is chosen
as a mentor is proportional to that agent's fitness.

Recall that a neutrally stable strategy (\citealp{Maynard_Smith_Price_1973};
\citealp{Maynard_smith1982evolution}) is a strategy that, if played
by most of the population, weakly outperforms any other strategy.
Similarly, an evolutionarily stable strategy is a strategy that, if
played by most of the population, strictly outperforms any other strategy. 
\begin{defn}
A strategy $\sigma\in\Delta\left(A\right)$ is a \emph{neutrally stable
strategy (NSS)}\textbf{ }if for every $\sigma^{\prime}\in\Delta\left(A\right)$
there is some $\bar{\varepsilon}\in\left(0,1\right)$ such that if
$\varepsilon\in\left(0,\bar{\varepsilon}\right)$, then $\tilde{\pi}\left(\sigma^{\prime},\left(1-\varepsilon\right)\sigma+\varepsilon\sigma^{\prime}\right)\leq\tilde{\pi}\left(\sigma,\left(1-\varepsilon\right)\sigma+\varepsilon\sigma^{\prime}\right)$.
If weak inequality is replaced by strict inequality for each $\sigma^{\prime}\neq\sigma,$
then $\sigma$ is an \emph{evolutionarily stable strategy (ESS)}. 
\end{defn}
It is well known that NSSs and ESSs correspond to Lyapunov stable
and asymptotically stable population states, respectively, under the
replicator dynamics. That is, a population starting close to an NSS
will always remain close to the NSS, and a population starting close
to an ESS will converge to the ESS (see, e.g., \citealp{Taylor_Jonker_1978,Thomas_1985_ESset,bomze1995does,Cressman_1997,Sandholm_2010_localstability}.) 

We extend the notions of neutral and evolutionary stability from strategies
to configurations. We begin by defining the type game that is induced
by a configuration.
\begin{defn}
\label{Def type game}For any configuration $\left(\mu,b\right)$
the corresponding \emph{type game} $\Gamma_{\left(C\left(\mu\right),b\right)}$
is the symmetric two-player game where each player's pure strategy
space is $C\left(\mu\right)$, and the payoff to strategy $\theta$,
against $\theta^{\prime}$, is $\pi_{\theta}\left(\theta'|\left(\mu,b\right)\right)-k_{\theta}$. 
\end{defn}
The definition of a type game allows us to apply notions and results
from standard evolutionary game theory, where evolution acts upon
strategies, to the present setting where evolution acts upon types.
A similar methodology was used in \citet{Mohlin_2012_ETM}. Note that
each type distribution with support in $C\left(\mu\right)$ is represented
by a mixed strategy in $\Gamma_{\left(C\left(\mu\right),b\right)}$.

We want to capture robustness with respect to small groups of individuals,
henceforth called \textit{mutants}, who introduce new types and new
behaviours into the population. Suppose that a fraction $\varepsilon$
of the population is replaced by mutants and suppose that the distribution
of types within the group of mutants is $\mu^{\prime}\in\Delta\left(\Theta\right)$.
Consequently the post-entry type distribution is $\tilde{\mu}=\left(1-\varepsilon\right)\cdot\mu+\varepsilon\cdot\mu^{\prime}$.
That is, for each type $\theta\in C\left(\mu\right)\cup C\left(\mu^{\prime}\right)$,
$\tilde{\mu}\left(\theta\right)=\left(1-\varepsilon\right)\cdot\mu\left(\theta\right)+\varepsilon\cdot\mu^{\prime}\left(\theta\right)$.
In line with most of the literature on the indirect evolutionary approach
we assume that adjustment of behaviour is infinitely faster than adjustment
of type distribution.\footnote{\citet{Sandholm_2001} and \citet{Mohlin_2010_EoG} are exceptions.}
Thus we assume that the post-entry type distribution quickly stabilises
into a configuration $\left(\tilde{\mu},\tilde{b}\right)$. There
may exist many such post-entry type configurations, all having the
same type distribution, but different behaviour policies. We note
that incumbents do not have to adjust their behaviour against other
incumbents in order to continue playing Nash equilibria, and deception
equilibria, among themselves. For this reason, we assume (similarly
to \citealp{Dekel_Ely_Yilankaya_2006}, in the setup with perfect
observability) that the incumbents maintain the same pre-entry behaviour
among themselves. Formally:
\begin{defn}
Let $\left(\mu,b\right)$ and $\left(\tilde{\mu},\tilde{b}\right)$
be two configurations such that $C\left(\mu\right)\subseteq C\left(\tilde{\mu}\right)$.
We say that $\left(\tilde{\mu},\tilde{b}\right)$ is \emph{focal}
(with respect to $\left(\mu,b\right))$ if $\theta,\theta^{\prime}\in C\left(\mu\right)$
implies that $\tilde{b}_{\theta}^{D}\left(\theta^{\prime}\right)=b_{\theta}^{D}\left(\theta^{\prime}\right)$
and $\tilde{b}_{\theta}^{N}\left(\theta^{\prime}\right)=b_{\theta}^{N}\left(\theta^{\prime}\right)$.
\end{defn}
Standard fixed-point arguments imply that for every configuration
$\left(\mu,b\right)$ and every type distribution $\tilde{\mu}$ satisfying
$C\left(\mu\right)\subseteq C\left(\tilde{\mu}\right),$ there exists
a behaviour policy $\tilde{b}$ such that $\left(\tilde{\mu},\tilde{b}\right)$
is a focal configuration.

Our stability notion requires that the incumbents outperform all mutants
in all configurations that are focal relative to the initial configuration.
\begin{defn}
\label{Def ESC}A configuration $\left(\mu,b\right)$ is a \emph{neutrally
stable configuration (NSC)} if, for every $\mu^{\prime}\in\Delta\left(\Theta\right)$,
there is some $\bar{\varepsilon}\in\left(0,1\right)$ such that for
all $\varepsilon\in\left(0,\bar{\varepsilon}\right)$, it holds that
if $\left(\tilde{\mu},\tilde{b}\right)$, where $\tilde{\mu}=\left(1-\varepsilon\right)\cdot\mu+\varepsilon\cdot\mu^{\prime}$,
is a focal configuration, then $\mu$ is an NSS in the type game $\Gamma_{\left(\tilde{\mu},\tilde{b}\right)}$.
The configuration $\left(\mu,b\right)$ is an \emph{evolutionarily
stable configuration (ESC)}\textbf{ }if the same conditions imply
that $\mu$ is an ESS in the type game $\Gamma_{\left(\tilde{\mu},\tilde{b}\right)}$
for each $\mu^{\prime}\neq\mu$. 
\end{defn}
We conclude this section by discussing a few issues related to our
notion of stability.
\begin{enumerate}
\item In line with existing notions of evolutionary stability in the literature
(in particular, the notions of \citealp{Dekel_Ely_Yilankaya_2006},
and \citealp{Alger_Weibull_HomoMoralis}), we require the mutants
to be outperformed in all focal configurations (rather than requiring
them to be outperformed in at least one focal configuration). This
reflects the assumption that the population converges to a new post-entry
equilibrium in a decentralised (possibly random) way that may lead
to any of the post-entry focal configurations. Thus the incumbents
cannot coordinate their post-entry play on a specific focal configuration
that favors them.
\item In order to be consistent with the standard definition of neutral
stability, we require the incumbents to earn weakly more than the
average payoff of the mutants. We note that all of our results remain
the same if one uses an alternative weaker definition that requires
the incumbents to earn weakly more than the worst-performing mutant.
\item The main stability notion that we use in the paper is NSC. The stronger
notion of ESC is not useful in our main model because there always
exist equivalent types that have slightly different preferences (as
the set of preferences is a continuum) and induce the same behaviour
as the incumbents. Such mutants always achieve the same fitness as
the incumbents in post-entry configurations, and thus ESCs never exist.
Note that the stability notions in \citet{Dekel_Ely_Yilankaya_2006}
and \citet{Alger_Weibull_HomoMoralis} are also based on neutral stability.\footnote{In their stability analysis of \textit{homo hamiltonensis} preferences
\citet{Alger_Weibull_HomoMoralis} disregard mutants who are behaviourally
indistinguishable from \textit{homo hamiltonensis} upon entry.} In Section \ref{Sect Interdependent} we study a variant of the model
in which the preferences may depend also on the opponent's types.
This allows for the existence of ESCs.
\item \label{Remark-internal-stability}Observe that Definition \ref{Def ESC}
implies internal stability with respect to small perturbations in
the frequencies of the incumbent types (because when $\mu^{\prime}=\mu,$
then $\mu$ is required to be an NSS in $\Gamma_{\left(C\left(\mu\right),b\right)}$).
By standard arguments, internal stability implies that any NSC is
balanced: all incumbent types obtain the same fitness.
\item \label{Remark-monomorphic-mutants}The stability notions of \citet{Dekel_Ely_Yilankaya_2006}
and \citet{Alger_Weibull_HomoMoralis} consider only monomorphic groups
of mutants (i.e. mutants all having the same type). We additionally
consider stability against polymorphic groups of mutants (as do \citealp{Herold_Kuzmics_2009}).
One advantage of our approach is that it allows us to use an adaptation
of the well-known notion of ESS, which immediately implies dynamic
stability and internal stability, whereas \citet{Dekel_Ely_Yilankaya_2006}
have to introduce a novel notion of stability without these properties.
Remark \ref{Rem-monomorphic-stability} below discusses the influence
on our results of using an alternative definition that deals only
with monomorphic mutants. 
\end{enumerate}

\section{Results\label{sec:Results} }

\subsection{Preliminary Definitions}

Define the \emph{deviation gain} of action $a\in A$, denoted by $g\left(a\right)\in\mathbb{R}^{+}$,
as the maximal gain a player can get by playing a different action
in a population in which everyone plays $a$: 
\[
g\left(a\right)=\max_{a'\in A}\pi\left(a',a\right)-\pi\left(a,a\right).
\]
Note that $g\left(a\right)=0$ iff $\left(a,a\right)$ is a Nash equilibrium.

Define the \emph{effective cost of deception }in the environment,
denoted by $c\in\mathbb{R}^{+}$, as the minimal ratio between the
cognitive cost and the probability of deceiving an opponent of cognitive
level one:\footnote{The minimum in the definition of $c$ is well defined for the following
reason. Let $\hat{n}$ be a number such that $k_{\hat{n}}>\frac{k_{2}}{q\left(2,1\right)}$
(such a number exists because $\lim_{n\rightarrow\infty}k_{n}=\infty$).
Observe that $\frac{k_{n}}{q\left(n,1\right)}\geq k_{n}>\frac{k_{2}}{q\left(2,1\right)}$
for any $n\geq\hat{n}$. This implies that there is an $\bar{n}$
such that $2\leq\bar{n}\leq\hat{n}$ and $\bar{n}=\textrm{arg}\min_{n\geq2}\,\,\frac{k_{n}}{q\left(n,1\right)}$.} \footnote{We define the effective cost of deception only with respect to an
opponent with a cognitive level of one because we later show (Lemma
\ref{Lemma: level 1-1} and Theorem \ref{thm-main-result-efficiency})
that the only candidate to be an NSC is a configuration in which all
agents have a cognitive level of one, and such a configuration is
an NSC iff the effective cost of defection against these incumbents
with $n=1$ is sufficiently large.}
\[
c=\min_{n\geq2}\,\,\frac{k_{n}}{q\left(n,1\right)}.
\]

We say that a strategy profile is efficient if it maximises the sum
of fitness payoffs. Formally:
\begin{defn}
\label{Def: efficeincy}A strategy profile $\left(\sigma,\sigma^{\prime}\right)$
is \emph{efficient }\textit{\emph{in the game }}$G=\left(A,\pi\right)$
if $\pi\left(\sigma,\sigma^{\prime}\right)+\pi\left(\sigma^{\prime},\sigma\right)\geq\pi\left(a,a^{\prime}\right)+\pi\left(a^{\prime},a\right)$,
for each action profile $\left(a,a^{\prime}\right)\in A^{2}$. 
\end{defn}
Note that our notion of efficiency is defined: (1) with respect to
the fitness payoff (rather than the agents' subjective payoffs), similarly
to the analogous definition of efficiency in \citet{Dekel_Ely_Yilankaya_2006},
and (2) with respect to the strategy profile played by the agents;
by contrast, the definition does not take into account the cognitive
costs. 

A pure Nash equilibrium $\left(a,a\right)$ is\emph{ strict }if $\pi\left(a,a\right)>\pi\left(a^{\prime},a\right)$
for all $a^{\prime}\neq a\in A$. Let $\hat{\pi}=\max_{a,a'\in A}\left(0.5\cdot\left(\pi\left(a,a^{\prime}\right)+\pi\left(a^{\prime},a\right)\right)\right)$
denote the efficient payoff, i.e. the average payoff achieved by players
who play an efficient profile. 

An action $a$ is a\emph{ punishment action} if\emph{ }playing it
guarantees that the opponent will obtain less than the efficient payoff,
i.e. $\pi\left(a',a\right)<\hat{\pi}$ for each $a'\in A$. Some of
our results below assume that the underlying game admits a punishment
action.
\begin{rem}
\label{rem:punishment-actions}Many economic interactions admit punishment
actions. A few examples include:
\begin{enumerate}
\item Price competition (Bertrand), either for a homogeneous good or for
differentiated goods, where a punishment action is setting the price
equal to zero.
\item Quantity competition (Cournot), either for a homogeneous good or for
differentiated goods, where the punishment action is ``flooding''
the market. 
\item Public good games, where contributing nothing to the public good is
the punishment action.
\item Bargaining situations, where the punishment action is for one side
of the bargaining to insist on obtaining all surplus.
\item Any game that admits an action profile that Pareto dominates all other
action profiles (i.e., games with common interests). 
\end{enumerate}
Moreover, if one adds to any underlying generic game a new pure action
that is equivalent to playing the mixed action that min-maxes the
opponent's payoff (e.g., in matching pennies this new action is equivalent
to privately tossing a coin and then playing according to the toss's
outcome), then this newly added action is always a punishment action.
\end{rem}
Given a configuration $\left(\mu,b\right)$ let $\bar{n}=\max_{\theta\in C\left(\mu\right)}n_{\theta}$
denote the maximal cognitive level of the incumbents. We refer to
incumbents with this cognitive level as the \emph{highest types}.

A deception equilibrium is \emph{fitness maximising} if it maximises
the fitness of the higher type in the match (under the restriction
that the lower type plays an action that is not dominated, given her
preferences). Formally:
\begin{defn}
\label{def:FMDE}Let $\theta,\theta'$ be types with $n_{\theta}>n_{\theta'}$.
A deception equilibrium $\left(\tilde{\sigma},\tilde{\sigma}'\right)$
is \emph{fitness maximising} if
\[
\left(\tilde{\sigma},\tilde{\sigma}'\right)\in\arg\max_{\sigma\in\Delta\left(A\right),\,\sigma'\in\Sigma\left(u_{\theta'}\right)}\pi\left(\sigma,\sigma'\right).
\]

Let $FMDE\left(\theta,\theta'\right)\subseteq DE\left(\theta,\theta'\right)$
denote the set of all such fitness-maximising deception equilibria
of two types $\theta,\theta'$ with $n_{\theta}>n_{\theta'}$. In
principle, $FMDE\left(\theta,\theta'\right)$ might be an empty set
(if there is no action profile that maximises both the fitness and
the subjective utility of the higher type). Our first result (Theorem
\ref{thm:highest-type-behaivor} below) implies that the preference
of the higher type in any NSC are such that the set $FMDE\left(\theta,\theta'\right)$
is non-empty. 
\end{defn}
A configuration is pure if everyone plays the same action. Formally:
\begin{defn}
\label{def:pure-configuration}A configuration $\left(\mu,b\right)$
is \emph{pure} if there exists $a^{\ast}\in A$ such that for each
$\theta,\theta^{\prime}\in C\left(\mu\right)$ it holds that $b_{\theta}^{N}\left(\theta^{\prime}\right)=a^{\ast}$
whenever $q\left(\theta,\theta'\right)<1$, and $b_{\theta}^{D}\left(\theta^{\prime}\right)=a^{\ast}$
whenever $q\left(\theta,\theta'\right)>0$ . With a slight abuse of
notation we denote such a pure configuration by $\left(\mu,a^{\ast}\right)$,
and we refer to $b\equiv a^{\ast}$ as the \emph{outcome} of the configuration.
\end{defn}
In order to simplify the notation and the arguments in the proofs,
we assume throughout this section that the underlying game admits
at least three actions (i.e. $\left|A\right|\geq3)$. If the original
game has only two actions, then adding a third action, which is dominated
by the other two actions, allows all the arguments in the proof to
work. More complicated (and less instructive) variants of the proofs
can also be applied to a game with two actions without adding a third,
dominated action.

\subsection{Characterisation of the Highest Types' Behaviour\label{subsec:Characterization-of-Highest}}

In this section we characterise the behaviour of an incumbent type,
$\bar{\theta}=\left(u,\bar{n}\right)$, which has the highest level
of cognition in the population.\footnote{For tractability we assume that a configuration can have only finite
support. Note, however, that there is some sufficiently high cognitive
level $n$ such that $k_{n}>\max_{a,a'\in A}\pi\left(a,a'\right)$.
As a result, even if one relaxes the assumption of finite support,
any NSC must include only a finite number of cognitive levels, also
without the finite-support assumption.} We show that the behaviour satisfies the following three conditions: 
\begin{enumerate}
\item Type $\bar{\theta}$ plays an efficient action profile when meeting
itself.
\item Type $\bar{\theta}$ maximises its fitness in all deception equilibria.
\item Any opponent with a lower cognitive level achieves at most the efficient
payoff $\hat{\pi}$ against type $\bar{\theta}$.
\end{enumerate}
\begin{thm}
\label{thm:highest-type-behaivor}Let $\left(\mu^{*},b^{*}\right)$
be an NSC, and let $\underline{\theta},\bar{\theta}\in C\left(\mu^{*}\right)$.
Then: (1) if $n_{\bar{\theta}}=\bar{n}$ then $\pi\left(\bar{\theta},\bar{\theta}\right)=\hat{\pi}$;
(2) if $n_{\underline{\theta}}<n_{\bar{\theta}}=\bar{n}$ then $\left(b_{\bar{\theta}}^{D}\left(\underline{\theta}\right),b_{\underline{\theta}}^{D}\left(\bar{\theta}\right)\right)\in FMDE\left(\bar{\theta},\underline{\theta}\right)$;
and (3) if $n_{\underline{\theta}}<n_{\bar{\theta}}=\bar{n}$ then
$\pi\left(\underline{\theta},\bar{\theta}\right)\leq\hat{\pi}$.
\end{thm}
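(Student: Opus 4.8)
The plan is to prove all three parts by the ``secret handshake'' argument of \citet{robson1990efficiency}, adapted to deception. In each case I assume the conclusion fails and introduce a mutant type $\theta^{m}=\left(u^{m},\bar{n}\right)$ of the same, highest cognitive level $\bar n$, whose induced behaviour reproduces that of $\bar{\theta}$ in every match against an incumbent but improves on $\bar{\theta}$ in exactly one match; the resulting focal post-entry configuration then contradicts neutral stability. Two facts will be used repeatedly. First, an NSC is balanced, so every incumbent earns the common fitness $\Pi^{*}:=\Pi_{\left(\mu^{*},b^{*}\right)}$ (Remark \ref{Remark-internal-stability}). Second, directly from the definition of $\hat{\pi}$, one has $\pi\left(a,a'\right)+\pi\left(a',a\right)\le2\hat{\pi}$ for every $\left(a,a'\right)\in A^{2}$, which by averaging over the realisations in any match yields $\pi_{\theta}\left(\theta'|\left(\mu,b\right)\right)+\pi_{\theta'}\left(\theta|\left(\mu,b\right)\right)\le2\hat{\pi}$.

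I would treat part (2) first, as it is the most direct. Suppose $\left(b_{\bar{\theta}}^{D}\left(\underline{\theta}\right),b_{\underline{\theta}}^{D}\left(\bar{\theta}\right)\right)\notin FMDE\left(\bar{\theta},\underline{\theta}\right)$, so that $\pi\left(b_{\bar{\theta}}^{D}\left(\underline{\theta}\right),b_{\underline{\theta}}^{D}\left(\bar{\theta}\right)\right)<V:=\max_{\sigma\in\Delta\left(A\right),\,\sigma'\in\Sigma\left(u_{\underline{\theta}}\right)}\pi\left(\sigma,\sigma'\right)$. Let $\mu'$ be concentrated on $\theta^{m}$, and choose $u^{m}$ so that (i) against every incumbent $\theta'$ there is an equilibrium of the match $\left(\theta^{m},\theta'\right)$ — a Nash equilibrium if $n_{\theta'}=\bar{n}$, a deception equilibrium if $n_{\theta'}<\bar{n}$ — whose outcome is exactly the one $\bar{\theta}$ uses against $\theta'$, and (ii) in a deception match with $\underline{\theta}$ the mutant's optimal choice is a fitness-maximising deception equilibrium. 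In the focal post-entry configuration in which $\theta^{m}$ imitates $\bar{\theta}$ against every incumbent (and against $\bar\theta$ itself) but plays a fitness-maximising deception against $\underline{\theta}$, the mutant obtains $\pi_{\bar{\theta}}\left(\theta'|\left(\mu^{*},b^{*}\right)\right)$ against every incumbent $\theta'\ne\underline{\theta}$, and strictly more than $\pi_{\bar{\theta}}\left(\underline{\theta}|\left(\mu^{*},b^{*}\right)\right)$ against $\underline{\theta}$ — the positive gain $q\left(\bar{n},n_{\underline{\theta}}\right)\cdot\left(V-\pi\left(b_{\bar{\theta}}^{D}\left(\underline{\theta}\right),b_{\underline{\theta}}^{D}\left(\bar{\theta}\right)\right)\right)$ being realised with probability $q\left(\bar{n},n_{\underline{\theta}}\right)>0$. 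Since $\theta^{m}$ pays the same cognitive cost $k_{\bar{n}}$ as $\bar{\theta}$, its expected fitness against the incumbent distribution strictly exceeds $\Pi_{\bar{\theta}|\left(\mu^{*},b^{*}\right)}=\Pi^{*}$; this is a first-order (linear in the mutant share) advantage, so $\mu^{*}$ is not an NSS in $\Gamma_{\left(\tilde{\mu},\tilde{b}\right)}$, contradicting the NSC property.

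For part (1), suppose $\pi\left(\bar{\theta},\bar{\theta}\right)<\hat{\pi}$, and let $\sigma^{e}$ be a symmetric efficient strategy, $\pi\left(\sigma^{e},\sigma^{e}\right)=\hat{\pi}$ (for a generic game this is the pure action forming the unique efficient profile). Now the mutant $\theta^{m}=\left(u^{m},\bar{n}\right)$ must, in addition to (i), induce $\left(\sigma^{e},\sigma^{e}\right)$ as a Nash equilibrium against itself. Producing such a $u^{m}$ is the crux of the argument: making $\sigma^{e}$ a dominant reply under $u^{m}$ would wreck the replication of $\bar{\theta}$'s deception play, so instead one perturbs $u_{\bar{\theta}}$ only on those entries of $A\times A$ that are irrelevant to every match against an incumbent, which is exactly why $\left|A\right|\ge3$ and the continuum $U=\left[0,1\right]^{\left|A\right|^{2}}$ are assumed. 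In the focal post-entry configuration in which $\theta^{m}$ mimics $\bar{\theta}$ against every incumbent and plays $\left(\sigma^{e},\sigma^{e}\right)$ against itself, the mutant's gross payoff against the incumbents equals $\bar{\theta}$'s and both pay $k_{\bar{n}}$, so the comparison with the incumbents is a tie at first order and stability would force the incumbents to win at second order; but the mutant's within-group fitness is $\hat{\pi}-k_{\bar{n}}$, while the incumbents' average fitness against the mutant is $\sum_{\theta}\mu^{*}\left(\theta\right)\pi_{\theta}\left(\bar{\theta}|\left(\mu^{*},b^{*}\right)\right)-\sum_{\theta}\mu^{*}\left(\theta\right)k_{\theta}$, and bounding each $\pi_{\theta}\left(\bar{\theta}\right)$ by $2\hat{\pi}-\pi_{\bar{\theta}}\left(\theta\right)$, using $\sum_{\theta}\mu^{*}\left(\theta\right)\pi_{\bar{\theta}}\left(\theta\right)=\Pi^{*}+k_{\bar{n}}$ (balance), and using the strict inefficiency of the $\bar{\theta}$-$\bar{\theta}$ match, the former should be shown to strictly exceed the latter. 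This second-order estimate, in which the polymorphism of the incumbent population must be controlled — and where the mutant may additionally have to exploit slack in its encounters with the other highest-level incumbents — is the delicate step. It then follows that $\mu^{*}$ is not an NSS, a contradiction, so $\pi\left(\bar{\theta},\bar{\theta}\right)=\hat{\pi}$.

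Finally, for part (3), suppose $\pi\left(\underline{\theta},\bar{\theta}\right)>\hat{\pi}$ and split this match into its deception sub-match, realised with probability $q\left(\bar{n},n_{\underline{\theta}}\right)$, and its no-deception sub-match. By part (2), $\bar{\theta}$ plays a fitness-maximising deception against $\underline{\theta}$, hence earns $V$ in the deception sub-match; a short separate argument shows the deceiver can always secure at least $\hat{\pi}$ there, so $V\ge\hat{\pi}$ and the efficiency inequality leaves $\underline{\theta}$ at most $2\hat{\pi}-V\le\hat{\pi}$ in that sub-match. For the no-deception sub-match one uses a further secret-handshake mutant — a highest-level mutant imitating $\bar{\theta}$ everywhere but, against $\underline{\theta}$, coordinating the induced complete-information game on a Nash equilibrium less favourable to $\underline{\theta}$ — which, combined with part (1) and balance, forces $\underline{\theta}$'s no-deception payoff to be at most $\hat{\pi}$ as well; averaging the two sub-matches contradicts $\pi\left(\underline{\theta},\bar{\theta}\right)>\hat{\pi}$. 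The step I expect to be hardest is the one flagged in part (1): simultaneously engineering $u^{m}$ to support efficient self-play and faithful imitation of $\bar{\theta}$ against every incumbent, and carrying out the accompanying second-order fitness computation in the polymorphic incumbent population.
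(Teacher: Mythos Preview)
Your treatment of part (2) is essentially the paper's argument. The difficulties lie in parts (1) and (3), where your route diverges from the paper and contains genuine gaps.

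\textbf{Part (1): no symmetric $\sigma^{e}$ need exist.} You posit a symmetric efficient strategy $\sigma^{e}$ with $\pi(\sigma^{e},\sigma^{e})=\hat{\pi}$, but $\hat{\pi}$ is defined as $\max_{a,a'}\tfrac{1}{2}(\pi(a,a')+\pi(a',a))$ over \emph{all} profiles, so the unique efficient profile may well be asymmetric; indeed Corollary~\ref{cor:no-efficient-no-NSC} is devoted to games with no symmetric efficient profile. The paper handles this by introducing a \emph{polymorphic} group of three mutant types $\theta_{1},\theta_{2},\theta_{3}$, all at level $\bar{n}$, who play the asymmetric efficient profile $(a_{1},a_{2})$ cyclically among themselves (type $\theta_{i}$ plays $a_{1}$ against $\theta_{(i+1)\bmod 3}$ and $a_{2}$ against $\theta_{(i-1)\bmod 3}$), so that every mutant averages exactly $\hat{\pi}$ in cross-mutant matches. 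This also obviates your ``delicate second-order computation'': each mutant weakly outperforms $\bar{\theta}$ against incumbents and strictly outperforms $\bar{\theta}$ against the mutant subpopulation (since $\bar{\theta}$, facing a mutant who mimics $\bar{\theta}$, gets only $\pi(\bar{\theta},\bar{\theta})<\hat{\pi}$).

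\textbf{The construction of $u^{m}$.} Your plan to perturb $u_{\bar{\theta}}$ on ``irrelevant entries'' is exactly the step you flag as hardest, and it is not clear there are enough such entries. The paper bypasses this entirely by giving each mutant \emph{indifferent and pro-generous} preferences (Appendix~\ref{subsec:Preliminaries-pro-genorus}): the mutant is indifferent among all her own actions. Then (i) any strategy is a best reply for the mutant, so replicating $\bar{\theta}$'s side of every incumbent Nash equilibrium is automatic; and (ii) every deception equilibrium for the mutant is fitness-maximising, so $FMDE(\theta_{i},\theta')\neq\varnothing$ trivially.

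\textbf{Part (3): wrong direction.} Your decomposition into deception and no-deception sub-matches leads nowhere. The claim that the deceiver ``can always secure at least $\hat{\pi}$'' fails whenever the efficient action is dominated for $u_{\underline{\theta}}$ and hence not in $\Sigma(u_{\underline{\theta}})$; and your ``further secret-handshake mutant'' for the no-deception sub-match does not bound $\underline{\theta}$'s payoff in the \emph{original} configuration. The paper's idea is quite different: take a single indifferent mutant $\hat{\theta}$ at level $\bar{n}$, have it mimic $\bar{\theta}$ against all incumbents \emph{except} $\bar{\theta}$, and in its (pure Nash) match against $\bar{\theta}$ have it play $\underline{\theta}$'s side of whichever sub-match gives $\underline{\theta}$ more than $\hat{\pi}$. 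Because the mutant is indifferent, this is a Nash equilibrium with $\bar{\theta}$ (note $\bar{\theta}$'s action in either sub-match is already a $u_{\bar{\theta}}$-best reply). The mutant then earns strictly more than $\hat{\pi}=\pi(\bar{\theta},\bar{\theta})$ against $\bar{\theta}$, hence strictly outperforms $\bar{\theta}$, contradicting stability.
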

\begin{proof}[Proof Sketch (formal proof in Appendix \ref{subsec:Proof-of-Theorem-1})]
 The proof utilises mutants (denoted by $\theta_{1},\theta_{2},\theta_{3}$,
and $\hat{\theta}$, below) with the highest cognitive level $\bar{n}$
and with a specific kind of utility function, called \textit{indifferent
and pro-generous}, that makes a player indifferent between all her
own actions, but which makes the player prefer that the opponent choose
an action that allows the player to obtain the highest possible fitness
payoff. 

To prove part 1 of the theorem, assume to the contrary that $\pi\left(b_{\bar{\theta}}\left(\bar{\theta}\right),b_{\bar{\theta}}\left(\bar{\theta}\right)\right)<\hat{\pi}$.
Let $a_{1},a_{2}\in A$ be any two actions such that $\left(a_{1},a_{2}\right)$
is an efficient action profile (i.e. $0.5\cdot\left(\pi\left(a_{1},a_{2}\right)+\pi\left(a_{1},a_{2}\right)\right)=\hat{\pi}$).
Consider three different mutant types $\theta_{1}$, $\theta_{2}$,
and $\theta_{3}$, which are of the highest cognitive level that is
present in the population, and have indifferent and pro-generous utility
functions. Suppose equal fractions of these three mutant types enter
the population.\footnote{One must have at least two different types of mutants, in order for
the mutants to be able to play the asymmetric profile $\left(a_{1},a_{2}\right)$.
We preset a construction with three different mutant types in order
to allow all mutant types to outperform the incumbents (one can also
prove the result using a constructions with only two different mutant
types, but in this case one can only guarantee that the mutants, on
average, would outperform the incumbents) } There is a focal post-entry configuration in which the incumbents
keep playing their pre-entry play among themselves, the mutants play
the same Nash equilibria as the incumbent $\bar{\theta}$ against
all incumbent types (and the incumbents behave against the mutants
in the same way they behave against $\bar{\theta}$), the mutants
play fitness-maximising deception equilibria against all lower types,
when mutants of type $\theta_{i}$ are matched with mutants of type\footnote{If $i=1$ (resp., $i=2$, $i=3$), then $\theta_{\left(i+1\right)\,\textrm{mod}\,3}=\theta_{2}$
(resp., $\theta_{\left(i+1\right)\,\textrm{mod}\,3}=\theta_{3}$,
$\theta_{\left(i+1\right)\,\textrm{mod}\,3}=\theta_{1}$).} $\theta_{\left(i+1\right)\,\textrm{mod}\,3}$ they play the efficient
profile $\left(a_{1},a_{2}\right)$, and when two mutants of the same
type are matched they play the same way as two incumbents of type
$\bar{\theta}$ that are matched together. In such a focal post-entry
configuration all mutants earn a weakly higher fitness than $\bar{\theta}$
against the incumbents, and a strictly higher fitness against the
mutants. This implies that $\left(\mu^{*},b^{*}\right)$ cannot be
an NSC.

To prove part 2, assume to the contrary that $\left(b_{\bar{\theta}}^{D}\left(\underline{\theta}\right),b_{\underline{\theta}}^{D}\left(\bar{\theta}\right)\right)\not\in FMDE\left(\bar{\theta},\underline{\theta}\right)$.
Suppose mutants of type $\hat{\theta}$ enter. Consider a post-entry
configuration in which the incumbents keep playing their pre-entry
play among themselves, and the mutants mimic the play of $\bar{\theta}$,
except that they play fitness-maximising deception equilibria against
all lower types. The mutants obtain a weakly higher payoff than $\bar{\theta}$
against all types, and a strictly higher payoff than $\bar{\theta}$
against at least one lower type. Thus $\left(\mu^{*},b^{*}\right)$
cannot be an NSC.

To prove part 3, assume to the contrary that $\pi\left(\underline{\theta},\bar{\theta}\right)>\hat{\pi}$.
This implies that against type $\bar{\theta}$, type $\underline{\theta}$
earns more than $\hat{\pi}$ in either the deception equilibrium or
the Nash equilibrium. Suppose mutants of type $\hat{\theta}$ enter.
Consider a post-entry configuration in which the incumbents keep playing
their pre-entry play among themselves, while the mutants: (i) play
fitness-maximising deception equilibria against lower types, (ii)
mimic type $\underline{\theta}$'s play in the Nash/deception equilibrium
against type $\bar{\theta}$ in which $\underline{\theta}$ earns
more than $\hat{\pi}$, and (iii) mimic the play of $\bar{\theta}$
in all other interactions. The type $\hat{\theta}$ mutants earn strictly
more than $\bar{\theta}$ against both $\hat{\theta}$ and $\bar{\theta}$.
The mutants earn weakly more than $\bar{\theta}$ against all other
types. This implies that $\left(\mu^{*},b^{*}\right)$ cannot be an
NSC.
\end{proof}
\begin{rem}
\label{Rem-monomorphic-stability} The first part of Theorem \ref{thm:highest-type-behaivor}
(a highest type must play an efficient strategy profile when meeting
itself) is similar to \citeauthor{Dekel_Ely_Yilankaya_2006}'s \citeyearpar{Dekel_Ely_Yilankaya_2006}
Proposition 2, which shows that only efficient outcomes can be stable
in a setup with perfect observability and no deception. We should
note that \citet{Dekel_Ely_Yilankaya_2006} use a weaker notion of
efficiency. An action is efficient in the sense of \citet{Dekel_Ely_Yilankaya_2006}
(DEY-efficient) if its fitness is the highest among the symmetric
strategy profiles (i.e. action $a$ is DEY-efficient if $\pi\left(a,a\right)\geq\pi\left(\sigma,\sigma\right)$
for all strategies $\sigma\in\Delta\left(A\right)$). Observe that
our notion of efficiency (Definition \ref{Def: efficeincy}) implies
DEY-efficiency, but the converse is not necessarily true. The weaker
notion of DEY-efficiency is the relevant one in the setup of \citet{Dekel_Ely_Yilankaya_2006},
because they consider only monomorphic groups mutants; i.e. all mutants
who enter at the same time are of the same type. A similar result
would also hold in our setup if we imposed a similar limitation on
the set of feasible mutants. However, without such a limitation, heterogeneous
mutants can correlate their play, and our stronger notion of efficiency
is required to characterise stability. 
\end{rem}
An immediate corollary of Theorem \ref{thm:highest-type-behaivor}
is that a game that has only asymmetric efficient action profiles
does not admit any NSCs.
\begin{cor}
\label{cor:no-efficient-no-NSC}If $G$ does not have an efficient
profile that is symmetric (i.e. if $\pi\left(a,a\right)<\hat{\pi}$
for each $a\in A$), then the game does not admit an NSC.
\end{cor}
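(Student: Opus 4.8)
The plan is to read the statement off part~(1) of Theorem~\ref{thm:highest-type-behaivor}. Suppose towards a contradiction that $\left(\mu^{*},b^{*}\right)$ is an NSC. Since a type distribution has non-empty finite support, $\bar{n}=\max_{\theta\in C\left(\mu^{*}\right)}n_{\theta}$ is attained by some incumbent $\bar{\theta}\in C\left(\mu^{*}\right)$, which is thus a highest type. By part~(1) of Theorem~\ref{thm:highest-type-behaivor}, $\pi\left(\bar{\theta},\bar{\theta}\right)=\hat{\pi}$. I would then unpack the left-hand side: because $n_{\bar{\theta}}\leq n_{\bar{\theta}}$ we have $q\left(n_{\bar{\theta}},n_{\bar{\theta}}\right)=0$, so in a match between two $\bar{\theta}$-incumbents there is no deception and both play the common strategy $\sigma\in\Delta\left(A\right)$ that $b^{*}$ prescribes for this match; hence $\pi\left(\bar{\theta},\bar{\theta}\right)=\pi\left(\sigma,\sigma\right)$, and the desired contradiction reduces to showing $\pi\left(\sigma,\sigma\right)<\hat{\pi}$ for every $\sigma\in\Delta\left(A\right)$ under the hypothesis that $\pi\left(a,a\right)<\hat{\pi}$ for all $a\in A$.

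The key (and essentially only) step is this elementary inequality. Writing $\pi\left(\sigma,\sigma\right)=\sum_{a,a'\in A}\sigma\left(a\right)\sigma\left(a'\right)\pi\left(a,a'\right)$ and symmetrising the double sum gives $\pi\left(\sigma,\sigma\right)=\sum_{a,a'}\sigma\left(a\right)\sigma\left(a'\right)\cdot\tfrac{1}{2}\left(\pi\left(a,a'\right)+\pi\left(a',a\right)\right)\leq\hat{\pi}$, since each bracketed average is at most $\hat{\pi}$ by definition of $\hat{\pi}$. Equality would force $\tfrac{1}{2}\left(\pi\left(a,a'\right)+\pi\left(a',a\right)\right)=\hat{\pi}$ for every pair $\left(a,a'\right)$ receiving positive weight; but any $a$ in the support of $\sigma$ makes the diagonal pair $\left(a,a\right)$ carry positive weight $\sigma\left(a\right)^{2}>0$ with value $\pi\left(a,a\right)<\hat{\pi}$, a contradiction. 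Hence $\pi\left(\sigma,\sigma\right)<\hat{\pi}$, contradicting $\pi\left(\bar{\theta},\bar{\theta}\right)=\hat{\pi}$, and so no NSC exists.

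I do not anticipate a genuine obstacle here: the only subtlety is that the highest type may play a \emph{mixed} strategy against itself, which is exactly what the strict-convexity argument above handles; that same argument also verifies the parenthetical equivalence in the statement, namely that the absence of any symmetric efficient profile (pure or mixed) is the same as requiring $\pi\left(a,a\right)<\hat{\pi}$ for every pure action $a$.
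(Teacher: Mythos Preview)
Your proof is correct and follows the same route as the paper, which simply notes that the corollary is ``immediate'' from Theorem~\ref{thm:highest-type-behaivor}. You have usefully spelled out the one point the paper leaves implicit, namely that the hypothesis $\pi(a,a)<\hat{\pi}$ for all pure $a$ forces $\pi(\sigma,\sigma)<\hat{\pi}$ for every mixed $\sigma$ as well, via the symmetrisation argument and the positive diagonal weight $\sigma(a)^{2}$.
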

\begin{rem}
\label{rem:We-note-that}As discussed in Remark \ref{rem:The-restriction-to},
any interaction (symmetric or asymmetric) can be embedded in a larger,
symmetric game in which nature first randomly assigns roles to the
players, and then each player chooses an action given his assigned
role.\footnote{If the original game is symmetric, the role (i.e. being either the
row or the column player) can be interpreted as reflecting some observable
payoff-irrelevant asymmetry between the two players.} Observe that \emph{such an embedded game} \emph{always admits an
efficient symmetric action profile}. In particular, if the efficient
asymmetric profile in the original game is $\left(a,a'\right)$, then
the efficient symmetric profile in the embedded game is the one in
which each player plays $a$ as the row player and $a'$ as the column
player.
\end{rem}

\subsection{Characterisation of Pure NSCs\label{subsec:Characterization-of-Pure}}

In this subsection we characterise pure NSCs, i.e. stable configurations
in which everyone plays the same pure action in every match. Such
a configuration may be viewed as representing the state of a population
that has settled on a convention that there is a unique correct way
to behave. We begin by showing that in a pure NSC all incumbents have
the minimal cognitive level, since having a higher ability does not
yield any advantage when everyone plays the same action. 
\begin{lem}
\label{Lemma: level 1-1} If $\left(\mu,a^{\ast}\right)$ is an NSC,
and $\left(u,n\right)\in C\left(\mu\right)$, then $n=1$. 
\end{lem}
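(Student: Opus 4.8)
The plan is to suppose, for contradiction, that $\left(\mu,a^{\ast}\right)$ is an NSC with some incumbent type $\theta_0=\left(u,n\right)$ where $n\geq 2$, and to exhibit a profitable mutant. Since $k$ is strictly increasing with $k_1=0$, the type $\theta^{\ast}=\left(u,1\right)$ has strictly lower cognitive cost than any incumbent of level $n\geq 2$. The idea is that in a pure configuration everyone always plays $a^{\ast}$, so cognitive sophistication confers no behavioural benefit whatsoever against the incumbents — yet it is strictly costly. Hence a mutant who copies $u$ (or really any utility function that rationalises playing $a^{\ast}$) but has level $1$ should strictly outperform the incumbents.

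First I would set up the mutant and the focal post-entry configuration carefully. Introduce mutants of type $\theta^{\ast}=\left(u,1\right)$ (choosing $u$ so that $a^{\ast}\in BR_u\left(a^{\ast}\right)$, which already holds since $\theta_0$ is an incumbent playing $a^{\ast}$ against itself; one may alternatively pick the utility function that makes $a^{\ast}$ strictly dominant, so as to pin down behaviour). In the focal post-entry configuration $\left(\tilde\mu,\tilde b\right)$, incumbents keep playing $a^{\ast}$ among themselves. I must specify play in matches involving the mutant. Against any incumbent $\theta'$: if $n_{\theta'}\geq 1 = n_{\theta^{\ast}}$, the mutant has the lower (or equal) cognitive level, so either there is no deception and they play a Nash equilibrium of the induced game — and since $a^{\ast}$ is a best reply to $a^{\ast}$ for both, $\left(a^{\ast},a^{\ast}\right)$ is available and I let them play it — or the incumbent deceives the mutant, in which case the incumbent can at best force the mutant onto an undominated action; if $a^{\ast}$ is strictly dominant for the mutant, the deceiver must let the mutant play $a^{\ast}$, and the deceiver, being an incumbent whose preferences rationalise $a^{\ast}$, can also choose $a^{\ast}$. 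Either way the outcome is $\left(a^{\ast},a^{\ast}\right)$, so $\pi_{\theta^{\ast}}\left(\theta'\mid\cdot\right)=\pi\left(a^{\ast},a^{\ast}\right)=\pi_{\theta'}\left(\theta'\mid\cdot\right)$ with no cognitive cost for the mutant. Mutant-versus-mutant matches also yield $\left(a^{\ast},a^{\ast}\right)$.

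Then the payoff comparison is immediate: every incumbent earns $\pi\left(a^{\ast},a^{\ast}\right)-k_{\theta_0}$ against the fraction-$\left(1-\varepsilon\right)$ incumbent sub-population and against mutants (who also yield outcome $a^{\ast}$), so an incumbent of level $n\geq2$ earns at most $\pi\left(a^{\ast},a^{\ast}\right)-k_n < \pi\left(a^{\ast},a^{\ast}\right)-k_1$, while the mutant earns exactly $\pi\left(a^{\ast},a^{\ast}\right)-k_1$ against everyone. Thus in the type game $\Gamma_{\left(\tilde\mu,\tilde b\right)}$ the mutant strategy strictly beats the level-$n$ incumbent against every opponent, so $\mu$ is not an NSS there (the mutant's payoff against the post-entry population strictly exceeds that of any level-$\geq2$ incumbent). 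By internal stability (Definition \ref{Def ESC}, Remark \ref{Remark-internal-stability}) an NSC is balanced, so some incumbent earns strictly less than the mutant, contradicting the NSC property. I should also handle the degenerate possibility that $\mu$ places no mass on level-$1$ types versus the case where it does: if every incumbent has $n\geq2$, all incumbents are strictly undercut; if $\mu$ has mass on level $1$ as well, balancedness already forces $k_{\theta}$ constant on $C\left(\mu\right)$, which (since $k$ is strictly increasing) forces all incumbents to level $1$ directly — this is actually the cleanest route and I would lead with it, using the mutant argument only to rule out the case where $C\left(\mu\right)$ consists entirely of level-$\geq2$ types, or simply to establish that a balanced pure configuration with heterogeneous levels is impossible.

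The main obstacle is the deception matches against higher-level incumbents: I must make sure the deceiver genuinely cannot exploit the mutant, which is why I would fix the mutant's utility function to make $a^{\ast}$ strictly dominant (so $\Sigma\left(u_{\theta^{\ast}}\right)=\{a^{\ast}\}$ and the deceiver's "choice" of what to induce is forced), and I must confirm that such a $\tilde b$ is indeed a legitimate focal configuration (the incumbent deceiver's chosen action $a^{\ast}$ must itself be part of a deception equilibrium from the incumbent's perspective, i.e. maximise $u_{\theta'}$ over $\Delta(A)\times\{a^{\ast}\}$ — which need not hold for an arbitrary incumbent utility, but since we only need the mutant's payoff, and the incumbent is free to pick any $\sigma$ maximising $u_{\theta'}\left(\sigma,a^{\ast}\right)$, the resulting $\pi\left(\sigma,a^{\ast}\right)$ bound still gives the mutant exactly $\pi\left(\sigma,a^{\ast}\right)$ where $\sigma$ could differ from $a^{\ast}$; however this only helps the mutant or is neutral, and the incumbent of the same type $\theta'$ facing another $\theta'$ in a no-deception match gets $\pi\left(a^{\ast},a^{\ast}\right)$, so the cost term still dominates the comparison). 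Getting these focal-configuration details exactly right, so that the strict inequality from $k_n>k_1$ is not accidentally overturned in some match, is the delicate part; everything else is bookkeeping.
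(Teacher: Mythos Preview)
Your proposal is correct and follows essentially the same two-step structure as the paper's proof: (i) balancedness of an NSC forces all incumbents in a pure configuration to share a single cognitive level (since everyone earns game payoff $\pi(a^{\ast},a^{\ast})$, equal fitness forces equal cognitive cost), and (ii) a level-$1$ mutant for whom $a^{\ast}$ is strictly dominant then strictly outperforms any homogeneous level-$n\geq 2$ incumbent population. The paper's proof is just a compressed version of what you wrote, and you even identify the balancedness step as ``the cleanest route.''

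One small correction to your handling of the deception-equilibrium subtlety. You write that if the incumbent deceiver plays some $\sigma\neq a^{\ast}$ against the mutant then ``this only helps the mutant or is neutral''; that claim is unjustified, since the mutant earns $\pi(a^{\ast},\sigma)$, which could in principle fall below $\pi(a^{\ast},a^{\ast})$. The clean resolution is one you nearly reach: because $(\mu,a^{\ast})$ is a pure configuration, every incumbent $\theta'$ plays $a^{\ast}$ against itself as a Nash equilibrium, so $a^{\ast}\in BR_{u_{\theta'}}(a^{\ast})$. It follows that $(a^{\ast},a^{\ast})\in DE(\theta',\theta^{\ast})$ whenever $\Sigma(u_{\theta^{\ast}})=\{a^{\ast}\}$, and you may therefore \emph{select} the focal post-entry configuration in which every deception match involving the mutant yields $(a^{\ast},a^{\ast})$. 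With this clarification the strict-cost comparison $k_{n}>k_{1}=0$ finishes the argument exactly as you intend.
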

\begin{proof}
Since all players earn the same game payoff of $\pi\left(a^{\ast},a^{\ast}\right),$
they must also incur the same cognitive cost, or else the fitness
of the different incumbent types would not be balanced (which would
contradict that $\left(\mu,a^{\ast}\right)$ is an NSC). Moreover,
this uniform cognitive level must be level 1. Otherwise a mutant of
a lower level, who strictly prefers to play $a^{\ast}$ against all
actions, would strictly outperform the incumbents in nearby post-entry
focal configurations. 
\end{proof}
The following proposition shows that a pure outcome is stable iff
it is efficient and its deviation gain is smaller than than the effective
cost of deception. Formally:
\begin{prop}
\label{pro-full-characterization-pure-outcoes} Let $a^{*}$ be an
action in a game that admits a punishment action. The following two
statements are equivalent:

(a) There exists a type distribution $\mu$ such that $\left(\mu,a^{\ast}\right)$
is an NSC.

(b) $a^{*}$ satisfies the following two conditions: (1) $\pi\left(a^{*},a^{*}\right)=\hat{\pi}$,
and (2) $g\left(a^{*}\right)\leq c$. 
\end{prop}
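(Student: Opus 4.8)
The plan is to prove the two implications separately: (a)$\Rightarrow$(b) is a short consequence of Lemma~\ref{Lemma: level 1-1} and Theorem~\ref{thm:highest-type-behaivor}, while (b)$\Rightarrow$(a) requires exhibiting an explicit population and checking neutral stability against it.

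For (a)$\Rightarrow$(b), suppose $(\mu,a^{*})$ is an NSC. By Lemma~\ref{Lemma: level 1-1} every incumbent has cognitive level~$1$, hence pays cost $k_{1}=0$ and earns fitness exactly $\pi(a^{*},a^{*})$; being then a highest type, part~(1) of Theorem~\ref{thm:highest-type-behaivor} gives $\pi(a^{*},a^{*})=\hat{\pi}$, which is (b.1). For (b.2) I argue by contradiction: assume $g(a^{*})>c$ and pick $\bar{n}\geq 2$ with $c=k_{\bar{n}}/q(\bar{n},1)$. Take a single mutant type $\hat{\theta}=(\hat{u},\bar{n})$ whose utility is indifferent over own actions and pro-generous, as in the proof of Theorem~\ref{thm:highest-type-behaivor}. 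Since $(a^{*},a^{*})\in NE(u,u)$ for every incumbent preference $u$, we have $a^{*}\in\Sigma(u)$, so there is a focal post-entry configuration in which $\hat\theta$ mimics $a^{*}$ in every match without deception and, whenever it deceives an incumbent, induces that incumbent to play $a^{*}$ while playing $\arg\max_{a}\pi(a,a^{*})$ itself, obtaining at least $\pi(a^{*},a^{*})+g(a^{*})=\hat{\pi}+g(a^{*})$ in that match. Then $\hat\theta$'s fitness against the incumbent population is at least $\hat{\pi}+q(\bar{n},1)g(a^{*})-k_{\bar{n}}>\hat{\pi}$, while every incumbent earns $\hat{\pi}$; thus $\mu$ is not a symmetric Nash equilibrium of the corresponding type game, contradicting the NSC property.

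For (b)$\Rightarrow$(a), let $\hat{a}$ be a punishment action (it exists by hypothesis, and $\hat{a}\neq a^{*}$ because (b.1) gives $\pi(a^{*},a^{*})=\hat{\pi}$ whereas a punishment action satisfies $\pi(a^{*},\hat{a})<\hat{\pi}$), and take the monomorphic population $\mu=\delta_{\theta^{*}}$, $\theta^{*}=(u^{*},1)$, with the ``trigger'' preference $u^{*}$ defined by $u^{*}(a^{*},a^{*})=u^{*}(\hat{a},a^{*})=1$, $u^{*}(a^{*},a')=0$ and $u^{*}(\hat{a},a')=\tfrac12$ for $a'\neq a^{*}$, and $u^{*}(a,\cdot)\equiv 0$ for $a\notin\{a^{*},\hat{a}\}$. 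A direct check gives $\Sigma(u^{*})=\{\lambda a^{*}+(1-\lambda)\hat{a}:\lambda\in[0,1]\}$, $BR_{u^{*}}(a^{*})=\Sigma(u^{*})$, and $BR_{u^{*}}(\sigma')=\{\hat{a}\}$ for every $\sigma'\neq a^{*}$; in particular $(a^{*},a^{*})\in NE(u^{*},u^{*})$, so $(\mu,a^{*})$ is a pure configuration. The design is so that a level-$1$ opponent that plays anything other than $a^{*}$ is met by the punishment action, while a deceiver, who can move the incumbent only inside $\Sigma(u^{*})$, extracts at most $\hat{\pi}+g(a^{*})$.

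To verify that $(\mu,a^{*})$ is an NSC, fix a mutant distribution, a small $\varepsilon$, and a focal post-entry configuration $(\tilde{\mu},\tilde{b})$, and show that $\delta_{\theta^{*}}$ is an NSS of $\Gamma_{(\tilde{\mu},\tilde{b})}$. Step~(i), the Nash property: for any mutant $\hat{\theta}=(\hat{u},n)$, its play against the incumbent without deception is some $(\hat{\sigma},\sigma'')\in NE(\hat{u},u^{*})$ with $\sigma''\in BR_{u^{*}}(\hat{\sigma})$, so either $\hat{\sigma}=a^{*}$, giving the mutant $\leq\pi(a^{*},a^{*})=\hat{\pi}$, or $\sigma''=\hat{a}$, giving the mutant $\pi(\hat{\sigma},\hat{a})<\hat{\pi}$; with deception the incumbent is moved into $\Sigma(u^{*})$, so the mutant gets at most $\max_{\sigma,\,\sigma'\in\Sigma(u^{*})}\pi(\sigma,\sigma')=\hat{\pi}+g(a^{*})$. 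Hence the mutant's fitness against the incumbent population is at most $q(n,1)(\hat{\pi}+g(a^{*}))+(1-q(n,1))\hat{\pi}-k_{n}=\hat{\pi}+q(n,1)g(a^{*})-k_{n}\leq\hat{\pi}$ (using $g(a^{*})\leq c\leq k_{n}/q(n,1)$ for $n\geq 2$, and $k_{1}=0$), which equals the incumbents' $\hat{\pi}$. Step~(ii), the second-order condition: if $\rho\neq\delta_{\theta^{*}}$ is an alternative best reply then each type in its support earns exactly $\hat{\pi}$ against incumbents; when $g(a^{*})<c$ the estimate of~(i) forces each such type to have level~$1$ and to play $(a^{*},a^{*})$ against $\theta^{*}$ under $\tilde{b}$, which forces the incumbent to reply with $a^{*}$, so the incumbent earns $\hat{\pi}$ against $\rho$. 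Since $\pi(x,y)+\pi(y,x)\leq 2\hat{\pi}$ for every action profile, $\pi_{\theta}(\theta'\mid(\tilde{\mu},\tilde{b}))+\pi_{\theta'}(\theta\mid(\tilde{\mu},\tilde{b}))\leq 2\hat{\pi}$ for all types, so the average fitness within the mutant group is at most $\hat{\pi}$; therefore the incumbent's payoff against $\rho$ is at least that average, which is the required inequality.

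The main obstacle is step~(ii), and specifically the knife-edge $g(a^{*})=c$. When $g(a^{*})<c$, the estimate of step~(i) makes every strictly-higher-level mutant \emph{strictly} worse than the incumbents against the incumbent population, so no deceiving type can be an alternative best reply and step~(ii) closes at once; but at $g(a^{*})=c$ a higher-level deceiver with a cost-minimizing level exactly ties the incumbents against incumbents, and one must then play the bound $\hat{\pi}+q(n,1)g(a^{*})-k_{n}$ off against the full efficiency ceiling $\pi(a^{*},a')+\pi(a',a^{*})\leq 2\hat{\pi}$ (and, in generic games, the uniqueness of the efficient profile) to show that what such a mutant forgoes among copies of itself, relative to what the incumbent earns against it, outweighs what it extracts from incumbents, so that the invasion is still unprofitable. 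Reconciling the deception gain, the cognitive cost, and the efficiency ceiling simultaneously at this boundary is where the argument is most delicate.
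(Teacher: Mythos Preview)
Your proof follows essentially the same route as the paper's: for (a)$\Rightarrow$(b) you combine Lemma~\ref{Lemma: level 1-1} and Theorem~\ref{thm:highest-type-behaivor} with a single higher-level mutant, and for (b)$\Rightarrow$(a) you exhibit a monomorphic level-$1$ population whose preferences make every action other than $a^{*}$ and the punishment action strictly dominated, with the punishment action the unique best reply to anything other than $a^{*}$. Your $u^{*}$ and the paper's differ only in the numerical values, not in the induced best-reply correspondence or in $\Sigma(u^{*})$.

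One small correction and one comment. In your (a)$\Rightarrow$(b.2) argument the mutant should have \emph{completely} indifferent preferences (as the paper does) rather than strictly pro-generous ones: with pro-generous preferences the deception equilibrium need not contain $(\arg\max_{a}\pi(a,a^{*}),\,a^{*})$, since if the generous action $a_{g}$ happens to lie in $\Sigma(u)$ for some incumbent $u$, the deceiver's subjective optimum is to induce $a_{g}$ instead. With completely indifferent preferences every profile is a deception equilibrium, so the focal configuration you describe is guaranteed to exist.

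On the ``if'' side you are more explicit than the paper about the second-order NSS check (your step~(ii)); the paper only asserts that mutants are ``weakly outperformed'' without separating the Nash and second-order conditions. Your flagging of the knife-edge $g(a^{*})=c$ is apt: a level-$\bar n$ mutant whose preferences make both $(a',a^{*})$ and $(a^{*},a^{*})$ subjectively optimal can, in some focal configuration, tie the incumbents while playing $(a^{*},a^{*})$ against itself, and then the incumbent earns only $q(\bar n,1)\pi(a^{*},a')+(1-q(\bar n,1))\hat{\pi}$ against the mutant, which is strictly below $\hat{\pi}-k_{\bar n}$ whenever $(a',a^{*})$ is itself inefficient. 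Your sketch does not close this, but neither does the paper's proof; the boundary case is genuinely delicate in both.
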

\begin{proof}
~

\begin{enumerate}
\item \emph{``If side.''} Assume that $\left(a^{*},a^{*}\right)$ is an
efficient profile and that $g\left(a^{*}\right)\leq c$. Let $\widetilde{a}$
be a punishment action. Consider a monomorphic configuration $\left(\mu,a^{*}\right)$
consisting of type $\theta^{*}=\left(u^{\ast},1\right)$ where all
incumbents are of cognitive level 1 and of the same preference type
$u^{\ast}$, according to which all actions except $a^{*}$ and $\widetilde{a}$
are strictly dominated, $\widetilde{a}$ weakly dominates $a^{*}$,
and $a^{*}$ is a best reply to itself: 
\[
u^{*}\left(a,a'\right)=\begin{cases}
1 & \mbox{if}\,\,a=\widetilde{a}\,\textrm{\,\ and\,}\,a'\neq a^{*}\\
0 & \mbox{if}\,\,a=a^{*}\,\,\textrm{or}\,\,\left(a=\widetilde{a}\textrm{\,\,and}\,\,a'=a^{*}.\right)\\
-1 & \textrm{otherwise.}
\end{cases}
\]
Consider first mutants with cognitive level one. Observe that in any
post-entry configuration mutants with cognitive level one earns at
most $\hat{\pi}$ when they are matched with the incumbents, and strictly
less than $\hat{\pi}$ if the mutants play any action $a\neq a^{*}$
with positive probability against the incumbents. Further observe,
that the mutants can earn (on average) at most $\hat{\pi}$ when they
are matched with other mutants (because $\hat{\pi}$ is the efficient
payoff). This implies that incumbents weakly outperform any mutants
with cognitive level one in any post-entry population. \\
Next, consider mutants with a higher cognitive level $n>1$. Such
mutants can earn at most $\hat{\pi}+g\left(a^{*}\right)$ when they
deceive the incumbents and at most $\hat{\pi}$ when they do not deceive
the incumbents (recall that $\pi\left(\widetilde{a},\widetilde{a}\right)+g\left(\widetilde{a}\right)=max_{a'}\pi\left(a',\widetilde{a}\right)<\hat{\pi}$
because $\widetilde{a}$ is a punishment action). Thus the mutants
are weakly outperformed by the incumbents if
\[
q\left(n,1\right)\cdot\left(g\left(a^{*}\right)+\hat{\pi}\right)+\left(1-q\left(n,1\right)\right)\cdot\hat{\pi}-k_{n}\leq\hat{\pi}\,\,\Leftrightarrow\,\,g\left(a^{*}\right)\leq\frac{k_{n}}{q\left(n,1\right)}.
\]
This holds for all $n$ if $g\left(a^{*}\right)\leq c$. Thus, the
probability of deceiving the incumbents is at most $\frac{k_{n}}{g\left(a^{*}\right)}$.
The fact that $g\left(a^{*}\right)\leq c$ implies that the average
payoff of the mutants against the incumbents is less than ${\color{purple}\hat{\pi}+g\left(a^{*}\right)\cdot\frac{k_{n}}{g\left(a^{*}\right)}\leq\,}\hat{\pi}+k_{n}$,
and thus if the mutants are sufficiently rare they are weakly outperformed
(due to paying the cognitive cost of $k_{n}$). We conclude that $\left(\mu,a^{*}\right)$
is an NSC. 
\item \emph{``Only if side.''} Assume that $\left(\mu,a^{\ast}\right)$
is an NSC. Theorem \ref{thm:highest-type-behaivor} implies that $\pi\left(a^{*},a^{*}\right)=\hat{\pi}$.
Assume that $g\left(a^{*}\right)>c.$ The definition of the effective
cost of deception implies that there exists a cognitive level $n$
such that $\frac{k_{n}}{q\left(n,1\right)}<g\left(a^{*}\right)$.
Lemma \ref{Lemma: level 1-1} implies that all the incumbents have
cognitive level 1. Consider mutants with cognitive level $n$ and
completely indifferent preferences (i.e. preferences that induce indifference
between all action profiles). Let $a'$ be a best reply against $a^{*}$.
There is a post-entry focal configuration in which (i) the incumbents
play $a^{*}$ against the mutants, (ii) the mutants play $a'$ when
they deceive an incumbent opponent and $a^{*}$ when they do not deceive
an incumbent opponent, and (iii) the mutants play $a^{*}$ when they
are matched with another mutant. Note that the mutants achieve at
least $\hat{\pi}+g\left(a^{*}\right)\cdot q\left(n,1\right)$ when
they are matched against the incumbents. The gain relative to incumbents,
$g\left(a^{*}\right)\cdot q\left(n,1\right)$, outweighs their additional
cognitive cost of $k_{n}$, by our assumption that $g\left(a^{*}\right)>c.$
Thus the mutants strictly outperform the incumbents. 
\end{enumerate}
\end{proof}

\subsection{Characterisation of NSCs in Generic Games}

In this section we characterise NSCs in generic games, by which we
mean games in which any two different action profiles each give the
same player a different payoff, and each yield a different sum of
payoffs.
\begin{defn}
A (symmetric) game is generic if for each $a,a',b,b'\in A$, $\left\{ a,a'\right\} \neq\left\{ b,b'\right\} $
implies

\[
\pi\left(a,a'\right)\neq\pi\left(b,b'\right),\textrm{\,\ and\,\,}\pi\left(a,a'\right)+\pi\left(a',a\right)\neq\pi\left(b,b'\right)+\pi\left(b',b\right).
\]

For example, if the entries of the payoff matrix $\pi$ are drawn
independently from a continuous distribution on an open subset of
the real numbers, then the induced game is generic with probability
one. 

Note that a generic game admits at most one efficient action profile.
From Corollary \ref{cor:no-efficient-no-NSC} we know that if the
game does not have a symmetric efficient profile then it does not
admit any NSC (and as discussed in Remark \ref{rem:We-note-that},
essentially every interaction admits a symmetric efficient profile).
Hence we can restrict attention to games with exactly one efficient
action profile. Let $\bar{a}$ denote the action played in this unique
profile.
\end{defn}
Next we present our main result: all incumbent types play efficiently
in any NSC of a generic game. 
\begin{thm}
\label{thm-main-result-efficiency} If $\left(\mu^{*},b^{*}\right)$
is an NSC of a generic game with a (unique) efficient outcome $\left(\bar{a},\bar{a}\right)$,
then $b^{*}\equiv\bar{a}$, for all $\theta,\theta'\in C\left(\mu^{*}\right)$;
i.e. all types play the pure action $\bar{a}$ in all matches. 
\end{thm}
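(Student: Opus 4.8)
The plan is to run the two-stage argument sketched in the introduction. Since an NSC exists, Corollary~\ref{cor:no-efficient-no-NSC} guarantees that the unique efficient profile of the generic game is the symmetric one $(\bar a,\bar a)$; write $\hat\pi=\pi(\bar a,\bar a)$. Genericity will enter only through two consequences of the fact that $\pi(a,a')+\pi(a',a)$ attains its maximum $2\hat\pi$ at a single pair $\{\bar a\}$: for every profile $(\sigma,\sigma')$ one has $\pi(\sigma,\sigma')+\pi(\sigma',\sigma)\le 2\hat\pi$, with equality only if $(\sigma,\sigma')=(\bar a,\bar a)$; and, taking $\sigma'=\sigma$, $\pi(\sigma,\sigma)\le\hat\pi$ with equality only if $\sigma=\bar a$.

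\textbf{Stage 1: every incumbent type plays $(\bar a,\bar a)$ against itself.} Two copies of a type $\theta=(u,n)\in C(\mu^{*})$ have equal cognitive level and cannot deceive each other, so $\bigl(b^{N}_{\theta}(\theta),b^{N}_{\theta}(\theta)\bigr)$ is a \emph{symmetric} Nash equilibrium of the game induced by $(u,u)$, with own-fitness $d_{\theta}:=\pi_{\theta}(\theta\mid(\mu^{*},b^{*}))=\pi(b^{N}_{\theta}(\theta),b^{N}_{\theta}(\theta))$. I claim $d_{\theta}=\hat\pi$ for every incumbent (which, by genericity, forces $b^{N}_{\theta}(\theta)=\bar a$). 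For a highest type this is Theorem~\ref{thm:highest-type-behaivor}(1). For $\theta$ with $n<\bar n$, suppose $d_{\theta}<\hat\pi$ and introduce a single mutant $\hat\theta=(\hat u,n)$ of the \emph{same} cognitive level, with $\hat u$ a perturbation of $u$ chosen so that (i) $\bar a$ is a best reply to itself under $\hat u$; (ii) as a deceiver, $\hat u$ induces fitness-maximising deception equilibria, like the ``indifferent and pro-generous'' preferences in the proof of Theorem~\ref{thm:highest-type-behaivor}; and (iii) the set of undominated strategies under $\hat u$ is the minimal enlargement of $\Sigma(u)$ consistent with (i), namely $\Sigma(u)$ itself if $\bar a\in\Sigma(u)$ and $\Sigma(u)\cup\{\bar a\}$ otherwise. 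In a focal post-entry configuration the incumbents keep their mutual play, while $\hat\theta$ plays $(\bar a,\bar a)$ against another $\hat\theta$ (legitimate by (i)), reproduces $\theta$'s behaviour against every incumbent of level $\ge n$, matches $\theta$ in the no-deception event against lower-level incumbents and plays a fitness-maximising deception equilibrium in the deception event (earning at least what $\theta$ earns there, by (ii)), and—by (iii), since a higher-level deceiver exploits a victim precisely through the victim's undominated set—fares no worse than $\theta$ when it is itself deceived from above. Then $\hat\theta$ does at least as well as $\theta$ against every incumbent and strictly better against $\hat\theta$; combined with balancedness of the NSC (Remark~\ref{Remark-internal-stability}) and the unchanged mutual play of the incumbents, this contradicts neutral stability of $(\mu^{*},b^{*})$.

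\textbf{Stage 2: every match plays $(\bar a,\bar a)$.} Taking $\mu'=\mu^{*}$ in the definition of NSC shows $\mu^{*}$ is an NSS of $\Gamma_{(C(\mu^{*}),b^{*})}$ (Remark~\ref{Remark-internal-stability}), hence a symmetric Nash equilibrium with $\tilde\pi(\mu^{*},\theta)\ge\tilde\pi(\theta,\theta)$ for every $\theta\in C(\mu^{*})$, where $\tilde\pi$ is the payoff in $\Gamma_{(C(\mu^{*}),b^{*})}$. Averaging over $\theta$ with weights $\mu^{*}(\theta)$ and using Stage~1 ($\tilde\pi(\theta,\theta)=\hat\pi-k_{\theta}$) gives
\[
\Pi_{(\mu^{*},b^{*})}=\sum_{\theta,\theta'}\mu^{*}(\theta)\mu^{*}(\theta')\,\pi_{\theta}(\theta'\mid(\mu^{*},b^{*}))-\sum_{\theta}\mu^{*}(\theta)k_{\theta}\;\ge\;\hat\pi-\sum_{\theta}\mu^{*}(\theta)k_{\theta}.
\]
Conversely, split the double sum into diagonal terms, each equal to $\mu^{*}(\theta)^{2}\hat\pi$ by Stage~1, and unordered off-diagonal pairs; bound each off-diagonal pair by $\pi_{\theta}(\theta'\mid(\mu^{*},b^{*}))+\pi_{\theta'}(\theta\mid(\mu^{*},b^{*}))\le 2\hat\pi$ (apply the first genericity fact to the Nash component and, separately, to the deception component of the $\theta$--$\theta'$ match, then average over the two events); using $\sum_{\theta}\mu^{*}(\theta)^{2}+2\sum_{\{\theta,\theta'\}:\theta\ne\theta'}\mu^{*}(\theta)\mu^{*}(\theta')=1$ this yields $\Pi_{(\mu^{*},b^{*})}\le\hat\pi-\sum_{\theta}\mu^{*}(\theta)k_{\theta}$, with equality only if $\pi_{\theta}(\theta'\mid(\mu^{*},b^{*}))+\pi_{\theta'}(\theta\mid(\mu^{*},b^{*}))=2\hat\pi$ for all $\theta\ne\theta'$. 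The two bounds coincide, so equality holds, and by genericity every active component (Nash, and deception whenever its probability is positive) of every match plays $(\bar a,\bar a)$; together with Stage~1 this is precisely $b^{*}\equiv\bar a$.

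\textbf{Where the difficulty lies.} Stage~2 is a short aggregation once Stage~1 is available. The real work is Stage~1 for types below the top cognitive level: the secret-handshake mutant must play the efficient profile against its own kind—which forces $\bar a$ into its undominated set—yet must not become a more attractive target for a higher-level deceiver than $\theta$ is, which is delicate precisely because a deceiver's leverage over a victim is the victim's undominated set. Constructing a single $\hat u$ with a minimally enlarged undominated set that nonetheless keeps $\bar a$ self-best-replying, reproduces $\theta$'s best-reply structure against same- and higher-level incumbents, and induces fitness-maximising deception against lower types—together with verifying that the resulting post-entry configuration is focal, valid, and actually destabilising (so that one genuinely violates the NSS inequality, possibly by exploiting slack in the available Nash or deception equilibria against $\theta$)—is the technical heart of the proof, and is exactly what is new relative to \citet{Dekel_Ely_Yilankaya_2006}, where there is no deception and a mutant committed to the efficient action suffices.
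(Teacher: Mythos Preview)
Your two-stage decomposition is right, and your Stage~2 is in fact cleaner than the paper's. Where the paper rules out cross-type inefficiency by splitting into two sub-cases---a negative semi-definiteness argument on the type-game payoff matrix for pairs at weakly comparable levels (its case~(B)), plus a separate appeal to Theorem~\ref{thm:highest-type-behaivor}(2) for the case where a lower type earns more than $\hat\pi$ against a higher one (its case~(C))---you get both at once from the NSS second-order inequality $\tilde\pi(\mu^{*},\theta)\ge\tilde\pi(\theta,\theta)$ averaged over $\theta$, sandwiching the population fitness at exactly $\hat\pi-\sum_\theta\mu^{*}(\theta)k_\theta$ and then reading off every match by genericity. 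That unification is a genuine simplification.

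Stage~1, however, has a real gap in the treatment of deception from above. Enlarging the victim's undominated set by $\{\bar a\}$ only enlarges the deceiver's menu, so your claim that $\hat\theta$ ``fares no worse than $\theta$ when it is itself deceived from above'' does not follow from (iii) alone: a higher-level deceiver $\theta'$ may now find it subjectively optimal to induce $\bar a$ and respond with some $a'\neq\bar a$, handing $\hat\theta$ fitness $\pi(\bar a,a')$ below what $\theta$ received. The paper closes this by a device you do not use: it does \emph{not} pick an arbitrary inefficient $\theta$, but the type $\mathring\theta$ of \emph{highest} cognitive level satisfying any of three ``bad'' properties (own-inefficiency, pairwise inefficiency with a weakly higher type, or letting a strictly lower type earn more than $\hat\pi$). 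This guarantees every incumbent $\theta'$ with $n_{\theta'}>n_{\mathring\theta}$ is well-behaved---in particular plays $(\bar a,\bar a)$ against itself, so $\bar a\in BR_{u_{\theta'}}(\bar a)$---and gives $\mathring\theta$ at most $\hat\pi$; with that in hand, whenever the enlarged undominated set makes the deceiver switch, the new deception equilibrium can be taken to be $(\bar a,\bar a)$, yielding $\hat\theta$ exactly $\hat\pi$, which dominates $\mathring\theta$'s deceived payoff. A second, related tension: your requirements (ii) and (iii) on $\hat u$ pull in opposite directions---``indifferent and pro-generous'' (to guarantee FMDEs exist against lower types) is hard to reconcile with a minimal perturbation of $u$ (to keep $\Sigma(\hat u)$ barely larger). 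The paper resolves this by a case split: if $\Sigma(u_{\mathring\theta})=\Delta(A)$ it takes $\hat u\in U_{GI}$; otherwise it uses the minimal perturbation and, instead of FMDEs, pure selections from $DE(\mathring\theta,\cdot)$ that weakly improve fitness (their Lemma~\ref{lem: pure selection from DE}).
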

\begin{proof}
Assume to the contrary that configuration $\left(\mu^{*},b^{*}\right)$
is an NSC such that there are some $\theta,\theta'\in C\left(\mu^{*}\right)$
such that $b_{\theta}^{N}\left(\theta'\right)\neq\bar{a}$ and $q\left(\theta_{n},\theta_{n'}\right)+q\left(\theta_{n'},\theta_{n}\right)<1$,
or $b_{\theta}^{D}\left(\theta'\right)\neq\bar{a}$ and $q\left(\theta_{n},\theta_{n'}\right)>0$.
Let $\mathring{\theta}$ be the type with the highest cognitive level
among the types that satisfy at least one of the following conditions:

\begin{enumerate}
\item[(A)] $\mathring{\theta}$ plays inefficiently against itself, i.e. $\pi\left(\mathring{\theta},\mathring{\theta}\right)<\hat{\pi}$.
\item[(B)] $\mathring{\theta}$ and an opponent with a weakly higher type play
an inefficient strategy profile, i.e. $0.5\cdot\left(\pi\left(\mathring{\theta},\theta'\right)+\pi\left(\theta',\mathring{\theta}\right)\right)<\hat{\pi}$
for some $\theta'\neq\mathring{\theta}$ with $n_{\mathring{\theta}}\leq n_{\theta'}$.
\item[(C)] A strictly lower type earns strictly more than $\hat{\pi}$ against
$\mathring{\theta}$, i.e. $\pi\left(\theta'{\color{purple}'},\mathring{\theta}\right)>\hat{\pi}$
for some $\theta''\neq\mathring{\theta}$ with $n_{\mathring{\theta}}>n_{\theta''}$.
\end{enumerate}
We will now successively rule out each of these cases. 

Assume first that (A) holds. Let $\hat{u}$ be a utility function
that is identical to $u_{\mathring{\theta}}$ except that: (i) the
payoff of the outcome $\left(\bar{a},\bar{a}\right)$ is increased
by the minimal amount required to make it a best reply to itself,
and (ii) the payoff of some other outcome is altered slightly (to
ensure $\hat{u}$ is not already an incumbent) in a way that does
not change the behaviour of agents. (The formal definition of $\hat{u}$
is provided in Appendix \ref{subsec:Proof-of-Case}.) Suppose that
mutants of type $\hat{\theta}=\left(\hat{u},n_{\theta}\right)$ invade
the population. Consider a focal post-entry configuration in which
the mutants mimic the play of the type $\mathring{\theta}$ incumbents
in all matches except that: (i) the mutants play the efficient profile
$\left(\bar{a},\bar{a}\right)$ among themselves (which yields a higher
payoff than what $\bar{\theta}$ achieves when matched against $\mathring{\theta}$),
and (ii) when the mutants face a higher type they play either $\left(\bar{a},\bar{a}\right)$
or the same deception/Nash equilibrium that the higher types play
against $\bar{\theta}$. It follows that the mutants $\hat{\theta}$
earn a strictly higher payoff than $\mathring{\theta}$ against $\hat{\theta}$,
and a weakly higher fitness than type $\theta$ against all other
types. Thus the mutants strictly outperform the incumbents, which
contradicts the assumption that $\left(\mu^{*},b^{*}\right)$ is an
NSC. The full technical details of this argument are given in Appendix
\ref{subsec:Proof-of-Case}.

Next, assume that case (B) holds and that case (A) does not hold.
This implies that
\[
0.5\cdot\left(\pi\left(\mathring{\theta},\theta'\right)+\pi\left(\theta',\mathring{\theta}\right)\right)<\hat{\pi}=\pi\left(\mathring{\theta},\mathring{\theta}\right)=\pi\left(\theta',\theta'\right).
\]

That is, in the subpopulation that includes types $\mathring{\theta}$
and $\theta'$ the within-type matchings yield higher payoffs than
out-group matchings (an average payoff of less than $\hat{\pi}$).
The following formal argument shows that this property implies dynamic
instability. The fact that $\left(\mu^{*},b^{*}\right)$ is an NSC
implies that $\mu^{*}$ is an NSS in the type game $\Gamma_{\left(\mu^{*},b^{*}\right)}$.
Let $\mathbf{B}$ be the payoff matrix of the type game $\Gamma_{\left(\mu^{*},b^{*}\right)}$
and let $n=\left|C\left(\mu^{*}\right)\right|$. It is well known
(e.g., \citealp[Exercise 6.4.3]{Hofbauer_Sigmund_1988_book}, and
\citealp[ pp. 1--2]{hofbauer2011deterministic}) that an interior
Nash equilibrium of a normal-form game is an NSS if and only if the
payoff matrix is negative semi-definite with respect to the tangent
space, i.e. if and only if $x^{T}\mathbf{B}x\leq0$ for each $x\in\mathbb{R}^{n}$
such that $\sum_{i}x_{i}=0$. Assume without loss of generality that
type $\mathring{\theta}$ ($\theta'$) is represented by the $j$$^{th}$
($k^{th}$) row of the matrix $B$. Let the column vector $x$ be
defined as follows: $x\left(j\right)=1$, $x\left(k\right)=-1$, and
$x\left(i\right)=0$ for each $i\notin\left\{ j,k\right\} $. That
is, the vector $x$ has all entries equal to zero, except for the
$j^{th}$ entry, which is equal to $1$, and the $k^{th}$ entry,
which is equal to $-1$. We have

\begin{eqnarray*}
x^{T}\mathbf{B}x & = & B_{jj}-B_{kj}-B_{jk}+B_{kk}\\
 & = & \pi\left(\bar{a},\bar{a}\right)-k_{n_{\mathring{\theta}}}+\pi\left(\bar{a},\bar{a}\right)-k_{n_{\theta'}}-\left(\pi\left(b_{\mathring{\theta}}\left(\theta'\right),b_{\theta'}\left(\mathring{\theta}\right)\right)-k_{n_{\mathring{\theta}}}+\pi\left(b_{\theta'}\left(\mathring{\theta}\right),b_{\mathring{\theta}}\left(\theta'\right)\right)-k_{n_{\theta'}}\right)\\
 & = & 2\cdot\pi\left(\bar{a},\bar{a}\right)-\left(\pi\left(b_{\mathring{\theta}}\left(\theta'\right),b_{\theta'}\left(\mathring{\theta}\right)\right)+\pi\left(b_{\theta'}\left(\mathring{\theta}\right),b_{\mathring{\theta}}\left(\theta'\right)\right)\right)>0.
\end{eqnarray*}
Thus $\mathbf{B}$ is not negative semi-definite. 

Finally, assume that only case (C) holds. Let $\bar{\theta}$ be an
incumbent type with the highest cognitive level. The fact that case
(B) does not hold implies that $\pi\left(\bar{\theta},\mathring{\theta}\right)=\pi\left(\mathring{\theta},\bar{\theta}\right)=\hat{\pi}$.
The fact that case (C) holds implies that $\pi\left(\theta'',\mathring{\theta}\right)>\hat{\pi}$,
which implies that type $\mathring{\theta}$ has an undominated action
that can yield a deceiving opponent a payoff of more than $\hat{\pi}$
in a deception equilibrium. This contradicts part (2) of Theorem \ref{thm:highest-type-behaivor},
according to which we should have $\left(b_{\bar{\theta}}^{D}\left(\mathring{\theta}\right),b_{\mathring{\theta}}^{D}\left(\bar{\theta}\right)\right)=FMDE\left(\bar{\theta},\mathring{\theta}\right)$.\\
We have shown that no type in the population satisfies either (A),
(B), or (C). The fact that no type satisfies (A) implies that in any
match of agents of the same type both agents play action $\bar{a}$,
and the fact that no type satisfies (B) implies that in any match
between two agents of different types both players play action $\bar{a}$. 
\end{proof}
Combining the results of this section with the above characterisation
of pure NSCs yields the following corollary, which fully characterises
the NSCs of generic games that admit punishment actions (as discussed
in Remark \ref{rem:punishment-actions}, such actions exist in many
economic interactions). The result shows that such games admit an
NSC iff the deviation gain from the pure efficient symmetric profile
is smaller than the effective cost of defection, and when an NSC exists,
its outcome is the pure efficient symmetric profile. In particular,
in any game that admits an efficient symmetric pure Nash equilibrium,
this equilibrium is the unique NSC outcome, and in the Prisoner's
Dilemma mutual cooperation is the unique NSC outcome iff the gain
from defecting against a cooperator is less than the effective cost
of deception, and no NSC exists otherwise.
\begin{cor}
\label{cor:full-characterization-generic-games}Let G be a generic
game that admits a punishment action. The environment admits an NSC
iff there exists an efficient symmetric pure profile $\left(a^{*},a^{*}\right)$
satisfying $g\left(a^{*}\right)\leq c$ (i.e. the deviation gain is
smaller than the effective cost of deception). Moreover, if $\left(\mu,b\right)$
is an NSC, then $b\equiv a^{*}$, and $n=1$ for all $\left(u,n\right)\in C\left(\mu\right)$. 
\end{cor}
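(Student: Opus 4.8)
The plan is to derive Corollary \ref{cor:full-characterization-generic-games} by simply combining the three structural results already established: Theorem \ref{thm-main-result-efficiency}, Lemma \ref{Lemma: level 1-1}, and Proposition \ref{pro-full-characterization-pure-outcoes}. The only genuinely new observation needed is that in a generic game the efficient action profile, if symmetric, is unique, so there is a single candidate action $a^{*}=\bar{a}$ whose deviation gain $g(\bar{a})$ we must compare to the effective cost of deception $c$.

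First I would handle the case where the generic game has no symmetric efficient profile. By Corollary \ref{cor:no-efficient-no-NSC}, the environment then admits no NSC, and the stated biconditional holds vacuously (there is no efficient symmetric pure profile). So for the rest assume the unique efficient profile is $(\bar a,\bar a)$ for some $\bar a\in A$.

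Next, for the forward direction, suppose $(\mu,b)$ is an NSC. Since the game is generic and admits a punishment action, Theorem \ref{thm-main-result-efficiency} applies and gives $b\equiv\bar a$ in every match — that is, $(\mu,b)$ is the pure configuration $(\mu,\bar a)$. Lemma \ref{Lemma: level 1-1} then forces $n=1$ for every $(u,n)\in C(\mu)$, establishing the ``Moreover'' clause. Finally, since $(\mu,\bar a)$ is a pure NSC and the game admits a punishment action, the ``only if'' direction of Proposition \ref{pro-full-characterization-pure-outcoes} yields $\pi(\bar a,\bar a)=\hat\pi$ (so $\bar a$ is indeed the action of the efficient symmetric profile, consistent with genericity) and $g(\bar a)\le c$. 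Hence an efficient symmetric pure profile with $g(a^{*})\le c$ exists. For the converse, suppose such a profile $(a^{*},a^{*})$ exists; by genericity it must be $(\bar a,\bar a)$. The ``if'' direction of Proposition \ref{pro-full-characterization-pure-outcoes} (which only requires a punishment action, available by hypothesis) produces a type distribution $\mu$ — all incumbents of cognitive level $1$ — such that $(\mu,a^{*})$ is an NSC, so the environment admits an NSC.

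There is essentially no obstacle here: the corollary is a bookkeeping assembly of earlier theorems. The one point requiring a line of care is invoking genericity to identify the unique efficient symmetric pure profile, so that the quantifier ``there exists an efficient symmetric pure profile satisfying $g(a^{*})\le c$'' in statement of the corollary is equivalent to the single condition $g(\bar a)\le c$ on the distinguished action $\bar a$; this guarantees that the $\mu$ produced by the ``if'' direction of Proposition \ref{pro-full-characterization-pure-outcoes} and the NSC assumed in the forward direction refer to the same outcome. The Prisoner's Dilemma and efficient-Nash-equilibrium remarks then follow by instantiating: in a generic Prisoner's Dilemma the unique symmetric efficient profile is mutual cooperation, and if $(\bar a,\bar a)$ is itself a Nash equilibrium then $g(\bar a)=0\le c$ automatically.
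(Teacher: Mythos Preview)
Your proposal is correct and follows exactly the route the paper intends: the corollary is stated immediately after the sentence ``Combining the results of this section with the above characterisation of pure NSCs yields the following corollary,'' and your argument is precisely that combination---Corollary~\ref{cor:no-efficient-no-NSC} for the asymmetric-efficient case, Theorem~\ref{thm-main-result-efficiency} to force purity, Lemma~\ref{Lemma: level 1-1} to force level~$1$, and the two directions of Proposition~\ref{pro-full-characterization-pure-outcoes} for the deviation-gain condition. One cosmetic point: Theorem~\ref{thm-main-result-efficiency} does not itself require a punishment action (only genericity and a symmetric efficient profile), so you need not mention the punishment hypothesis when invoking it; the punishment action is used only through Proposition~\ref{pro-full-characterization-pure-outcoes}.
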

\begin{rem}
Corollary \ref{cor:full-characterization-generic-games} shows that
generic games do not admit NSCs if the effective cost of deception
is less than the deviation gain of the efficient profile. In such
cases the distribution of types and their induced behaviour will not
converge to a static population state. We leave the formal analysis
of environments that do not admit NSCs to future research. One conjecture
for the dynamic behaviour in such environments is a never-ending cycle
between states in which almost all agents are naive and play an efficient
action profile, and states in which different cognitive levels coexist,
and agents play inefficient action profiles (see the related analysis
of cyclic behaviour in the Prisoner\textquoteright s Dilemma with
cheap talk and material preferences in \citealp{wiseman2001cooperation}). 
\end{rem}

\begin{rem}
Corollary \ref{cor:full-characterization-generic-games} states that
in an NSC of a generic game everyone has the same cognitive level.
One may wonder how this relates to the apparent cognitive heterogeneity
in the real world. Our analysis in this paper assumes a single underlying
game, while in reality we face a potentially infinite set of games.
If an individual's fitness is the result of interactions in a set
of games that includes generic games with an NSC as well as non-generic
games (see Section \ref{subsec:NSCs-in-non-Generic}) or generic games
that do not admit any NSC (see previous remark), then evolution may
lead to states in which different cognitive levels coexist, possibly
with a never-ending cycle between states with different mixtures of
cognitive levels.
\end{rem}

\begin{rem}
Corollary \ref{cor:full-characterization-generic-games} assumes that
the underlying game admits a punishment action $\tilde{a}$, that
gives an opponent a payoff strictly smaller than the efficient payoff
$\hat{\pi}$, regardless of the opponent's play. This punishment action
is used in the construction of the NSC that induces the efficient
action $a^{*}$. Specifically, a non-deceived incumbent plays the
punishment action $a'$ against any mutant who does not always plays
action $a^{*}$. If the game does not admit a punishment action, then
(1) a complicated game-specific construction of the way in which incumbents
behave against mutants who do not always play $a^{*}$ may be required
to support the efficient action as the outcome of an NSC, and (2)
this construction may require further restrictions on the effective
cost of deception, in addition to $g\left(a^{*}\right)\leq c$. We
leave the study of these issues to future research.\\
\end{rem}

\subsection{Non-Pure NSCs in Non-generic Games\label{subsec:NSCs-in-non-Generic}}

The previous two subsections fully characterise (i) pure NSCs and
(ii) NSCs in generic games. In this section we analyse non-pure NSCs
in non-generic games. Non-generic games may be of interest in various
setups, such as: (1) normal-form representation of generic extensive-form
games (the induced matrix is typically non-generic), and (2) interesting
families of games, such as zero-sum games. Unlike generic games, non-generic
games can admit NSCs that are not pure and that may therefore contain
multiple cognitive levels. To demonstrate this we consider the Rock-Paper-Scissors
game, with the following payoff matrix:\footnote{For the construction presented in this subsection to work, the underlying
game must be non-generic. Observe that if one slightly perturbs the
payoffs of the Rock-Paper-Scissors game to make it a strictly competitive
almost-zero-sum generic game, then Corollary \ref{cor:full-characterization-generic-games}
applies, and the only candidate to be an NSC is a configuration in
which all agents have cognitive level one, and they all play an efficient
action profile.}
\[
\begin{array}{cccc}
 & R & P & S\\
R & 0,0 & -1,1 & 1,-1\\
P & 1,-1 & 0,0 & -1,1\\
S & -1,1 & 1,-1 & 0,0
\end{array}.
\]

To simplify the analysis and the notations we assume in this subsection
that a player always succeeds in deceiving an opponent with a lower
cognitive level, i.e. that $q\left(n,n'\right)=1$ whenever $n{\color{purple}>}n'$.
The analysis can be extended to the more general setup. 

The result below shows that, under mild assumptions on the cognitive
cost function, this game admits an NSC in which all players have the
same materialistic preferences, but players of different cognitive
levels coexist, and non-Nash profiles are played in all matches between
two individuals of different cognitive levels. More precisely, when
individuals of different cognitive levels meet, the higher-level individual
deceives the lower-level individual into taking a pure action that
the higher-level individual then best-replies to. Thus the higher-level
individual earns $1$ and her opponent earns $-1$. Individuals of
the same cognitive level play the unique Nash equilibrium. This means
that higher-level types will obtain a payoff of $1$ more often than
lower-level types, and lower-level types will obtain a payoff of $-1$
more often than higher-level types. In the NSC this payoff difference
is offset exactly by the higher cognitive cost paid by the higher
types. Moreover, the cognitive cost is increasing and unbounded such
that at some point the cost of cognition outweighs any payoff differences
that may arise from the underlying game. This implies that there is
an upper bound on the cognitive sophistication in the population.
\begin{prop}
\label{Prop RPS neutral}Let $G$ be a Rock-Paper-Scissors game. Let
$u^{\pi}$ denote the (materialistic) preference such that $u^{\pi}\left(a,a^{\prime}\right)=\pi\left(a,a^{\prime}\right)$
for all profiles $\left(a,a^{\prime}\right)$. Assume that $q\left(n,n'\right)=1$
whenever $n\neq n'$. Further assume that the marginal cognitive cost
is small but non-vanishing, so that (a) there is an $N$ such that
$k_{N}\leq2<k_{N+1}\text{,}$ and (b) it holds that $1>k_{n+1}-k_{n}\text{ for all }n\leq N$.\textbf{
}Under these assumptions there exists an NSC $\left(\mu^{\ast},b^{\ast}\right)$
such that $C\left(\mu^{\ast}\right)\subseteq\{\left(u^{\pi},n\right)\}_{n=1}^{N}$,
and $\mu^{\ast}$ is mixed (i.e. $\left\vert C\left(\mu^{\ast}\right)\right\vert >1$).
The behaviour of the incumbent types is as follows: if the individuals
in a match are of different cognitive levels, then the higher level
plays Paper and the lower level plays Rock; if both individuals in
a match are of the same cognitive level, then they both play the unique
Nash equilibrium (i.e. \foreignlanguage{british}{randomise} uniformly
over the three actions).
\end{prop}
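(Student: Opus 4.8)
The plan is to produce a configuration of the announced form and check it directly against Definition \ref{Def ESC}.

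\textbf{Step 1 (the configuration).} Take $C(\mu^{\ast})=\{(u^{\pi},n):n\in S\}$ for a set of cognitive levels $S\subseteq\{1,\dots,N\}$ with $1\in S$ and $|S|\ge2$, with frequencies $(p_{n})_{n\in S}$, all to be pinned down in Step~2. Since $u^{\pi}=\pi$ and in Rock-Paper-Scissors every strategy is undominated (each pure action best-replies to the uniform strategy), $\Sigma(u^{\pi})=\Delta(A)$; hence for $n>m$ the set $DE((u^{\pi},n),(u^{\pi},m))=FMDE((u^{\pi},n),(u^{\pi},m))$ contains the profile (Paper, Rock), whose material value is $\pi(\text{Paper},\text{Rock})=1$, and the uniform strategy is the unique symmetric Nash equilibrium. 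So the behaviour $b^{\ast}$ of the statement is a legitimate behaviour policy, and under it a level-$n$ incumbent earns the material payoff $\operatorname{sgn}(n-m)$ against a level-$m$ incumbent (the higher level wins, equal levels tie at $0$), so its fitness is $\Pi_{n}=f(n)-k_{n}$, where $f(n):=\bigl(\sum_{m\in S,\,m<n}p_{m}\bigr)-\bigl(\sum_{m\in S,\,m>n}p_{m}\bigr)$.

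\textbf{Step 2 (choice of $S$ and $(p_{n})$; the main obstacle).} By Remark \ref{Remark-internal-stability} an NSC must be balanced, and equating $\Pi_{n}$ over consecutive elements of $S$ forces the recursion $p_{n_{j}}+p_{n_{j+1}}=k_{n_{j+1}}-k_{n_{j}}$; with $\sum_{n\in S}p_{n}=1$ this is a square linear system whose common value, since $1\in S$, is $\Pi^{\ast}=p_{1}-1\in(-1,0)$. As $\sum_{n\in S}p_{n}f(n)=\sum_{n,m\in S}p_{n}p_{m}\operatorname{sgn}(n-m)=0$ by antisymmetry and $k_{n}=f(n)-\Pi^{\ast}$ on $S$, one also gets $\sum_{n\in S}p_{n}k_{n}=-\Pi^{\ast}$. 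The delicate point is to pick $S$ so that the (unique) solution is strictly positive and, furthermore, (i) $p_{n_{j}}\le k_{n_{j}+1}-k_{n_{j}}$ at every gap $n_{j+1}>n_{j}+1$, and (ii) $k_{\max S+1}\ge1-\Pi^{\ast}$ whenever $\max S<N$ --- these inequalities are precisely what will make all levels outside $S$ unprofitable in Step~4. The hypotheses force $1<k_{N}\le2$ (since $k_{N+1}=k_{N}+\delta_{N}<k_{N}+1$ and $k_{N+1}>2$, writing $\delta_{n}:=k_{n+1}-k_{n}$), and together with $\delta_{n}<1$ for $n\le N$ this makes such a choice possible with $|S|\ge2$; this combinatorial construction is the real content of the proof, and I would carry it out in Appendix \ref{sec:Constructions-of-Heterogeneous}.

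\textbf{Step 3 (internal stability).} By balance, $\mu^{\ast}$ is a symmetric interior Nash equilibrium of the type game $\Gamma_{(C(\mu^{\ast}),b^{\ast})}$, whose payoff matrix $\mathbf{B}$ has entries $\mathbf{B}_{nm}=\operatorname{sgn}(n-m)-k_{n}$. For every $x$ with $\sum_{i}x_{i}=0$ one has $x^{T}\mathbf{B}x=\sum_{n,m}x_{n}x_{m}\operatorname{sgn}(n-m)-\bigl(\sum_{n}k_{n}x_{n}\bigr)\bigl(\sum_{m}x_{m}\bigr)=0-0=0$ (the first term vanishes by antisymmetry, the second because $\sum_{m}x_{m}=0$), so $\mathbf{B}$ is negative semi-definite on the tangent space and, by the standard characterisation, $\mu^{\ast}$ is an NSS of $\Gamma_{(C(\mu^{\ast}),b^{\ast})}$. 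Since adding mutants whose types already belong to $C(\mu^{\ast})$ does not alter this type game, this settles all such $\mu'$, in particular $\mu'=\mu^{\ast}$.

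\textbf{Step 4 (new mutant types).} Consider a mutant $(\hat{u},n)\notin C(\mu^{\ast})$ and any focal post-entry configuration $(\tilde\mu,\tilde b)$; I will bound its payoff against $\mu^{\ast}$ in the corresponding type game, namely $\sum_{m\in S}p_{m}\,\pi_{(\hat u,n)}(\theta_{m})-k_{n}$. The mutant's material payoff against a level-$m$ incumbent is at most $1$ if $m<n$ (the deceiver's payoff in any deception equilibrium is at most $\max_{a,a'}\pi(a,a')=1$), at most $0$ if $m=n$ (the incumbent, being materialistic, best-replies and earns $\ge0$, so the mutant earns $\le0$ by the zero-sum structure), and exactly $-1$ if $m>n$; the last claim uses the one-line lemma that for any $\hat u$ the action among $R,P,S$ with the largest $\hat u$-payoff against the uniform strategy cannot be strictly dominated by a mixture of the other two, so $\Sigma(\hat u)$ contains a pure action $a$, whereupon the materialistic higher-level incumbent (for whom $DE=FMDE$) induces $a$ and plays the action beating it, earning $1$ and leaving the mutant $-1$. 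Hence the mutant's payoff against $\mu^{\ast}$ is at most $f(n)-k_{n}$, and by the choice of $S$ this is $\le\Pi^{\ast}=u(\mu^{\ast},\mu^{\ast})$ for every level $n$ --- with equality possible only when $n\in S$ or at a gap where (i) is tight, and strictly when $n>N$ since $k_{n}\ge k_{N+1}>2>1-\Pi^{\ast}$ --- so $\mu^{\ast}$ is a symmetric Nash equilibrium of the type game. For the NSS tie-breaking clause, any best reply $\sigma'$ to $\mu^{\ast}$ is supported on types whose level $n$ satisfies $f(n)-k_{n}=\Pi^{\ast}$; such a mutant type must play the uniform strategy against same-level incumbents and must fully exploit lower ones, and it earns $0$ against a copy of itself because every symmetric Rock-Paper-Scissors profile pays $0$. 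Writing $r_{n}$ for the weight $\sigma'$ assigns to level $n$, one then computes that the payoff of $\mu^{\ast}$ against $\sigma'$ equals $-\sum_{n}r_{n}k_{n}$ while that of $\sigma'$ against itself is at most $-\sum_{n}r_{n}k_{n}$ (using $\sum_{n,n'}r_{n}r_{n'}\operatorname{sgn}(n-n')=0$ and $f(n)=k_{n}+\Pi^{\ast}$ on the support of $\sigma'$), so the clause holds. Therefore $\mu^{\ast}$ is an NSS of every focal post-entry type game, i.e.\ $(\mu^{\ast},b^{\ast})$ is an NSC with the claimed behaviour. The only genuinely non-routine step is Step~2; Steps~3 and~4 are essentially bookkeeping, the tie-breaking in Step~4 being the one remaining subtlety.
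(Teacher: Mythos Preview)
Your approach is essentially the one the paper takes (its full proof is relegated to the online Appendix~C, but the outline in the main text matches yours: balance the sign-payoffs against the cognitive costs, then verify neutral stability in the induced type game). Two small points are worth tightening.

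First, in the tie-breaking part of Step~4 you compute $u(\sigma',\sigma')$ by writing the between-type material payoff as $\operatorname{sgn}(n-n')$. That identity holds for incumbent--incumbent and incumbent--mutant pairs (you proved both), but it is not justified for mutant--mutant pairs: a level-$n$ mutant with non-materialistic $\hat u$ deceives a lower-level mutant by maximising $\hat u$, not $\pi$, so its material gain there need not be $1$. The clean fix is to drop the $\operatorname{sgn}$ bookkeeping and use the zero-sum structure of Rock--Paper--Scissors directly: for \emph{any} behaviour policy the bilinear material payoff satisfies $\text{mat}(\nu,\nu')+\text{mat}(\nu',\nu)=0$, hence $\text{mat}(\sigma',\sigma')=0$ and $u(\sigma',\sigma')=-K_{\sigma'}$ exactly. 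Combined with $u(\sigma',\mu^\ast)=\Pi^\ast=-K_{\mu^\ast}$ (best reply) and $\text{mat}(\mu^\ast,\sigma')=-\text{mat}(\sigma',\mu^\ast)$, you get $u(\mu^\ast,\sigma')=-K_{\sigma'}=u(\sigma',\sigma')$, so the NSS clause holds with equality and no further case analysis is needed.

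Second, your Step~2 correctly identifies the linear system and the side constraints (i)--(ii), and you are right that this is where the genuine work lies; the paper likewise pushes this construction to its appendix. Just make sure your write-up there actually exhibits an $S$ with $|S|\ge 2$ and strictly positive $p_n$: the recursion $p_{n_j}+p_{n_{j+1}}=k_{n_{j+1}}-k_{n_j}$ has alternating-sign behaviour, so positivity of all $p_n$ is not automatic and is the step that uses both hypotheses (a) and (b) in an essential way.
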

Appendix C contains a formal proof of this result and relates it to
a similar construction in \citet{Conlisk_2001}. 

Our next result gives a lower bound to the fitness obtained in NSCs.
Let $\underline{M}$ be the pure maxmin value of the underlying game:
\[
\underline{M}=\max_{a_{1}\in A}\min_{a_{2}\in A}\pi\left(a_{1},a_{2}\right).
\]
The pure maxmin value \textit{$\underline{M}$} is the minimal fitness
payoff a player can guarantee herself in the sequential game in which
she plays first, and the opponent replies in an arbitrary way (i.e.
not necessarily maximising the opponent's fitness.)

Proposition \ref{pro:above-min-max-1} shows that the pure maxmin
value is a lower bound on the fitness payoff obtained in an NSC. The
intuition is that if the payoff is lower, then a mutant of cognitive
level $1$, with preferences such that the maxmin action $a_{\text{\textit{\ensuremath{\underline{M}}}}}$
is dominant, will outperform the incumbents.
\begin{prop}
\label{pro:above-min-max-1}If $\left(\mu^{*},b^{*}\right)$ is an
NSC then $\Pi\left(\mu^{*},b^{*}\right)\geq\underline{M}$. 
\end{prop}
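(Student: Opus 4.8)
The plan is to exhibit a monomorphic mutant of the lowest cognitive level that is guaranteed, no matter how the incumbents react, to earn at least $\underline{M}$ in every focal post-entry configuration, and then to contradict neutral stability whenever $\Pi_{\left(\mu^{*},b^{*}\right)}<\underline{M}$.

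Concretely, I would first fix a pure action $a_{\underline{M}}\in A$ attaining the outer maximum in $\underline{M}=\max_{a_{1}\in A}\min_{a_{2}\in A}\pi\left(a_{1},a_{2}\right)$, so that $\pi\left(a_{\underline{M}},\sigma'\right)\geq\underline{M}$ for every $\sigma'\in\Delta\left(A\right)$. Take as mutant the type $\theta'=\left(u',1\right)$, where $u'$ assigns subjective utility $1$ to every profile in which the mutant plays $a_{\underline{M}}$ and utility $0$ to all other profiles (a slight perturbation of $u'$ on one profile not involving $a_{\underline{M}}$ keeps $a_{\underline{M}}$ strictly dominant while making $u'$ differ from the finitely many incumbent preferences; the remaining edge case is immediate). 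For this $u'$ the action $a_{\underline{M}}$ is strictly dominant, hence the unique best reply to every belief, and $\Sigma\left(u'\right)=\left\{ a_{\underline{M}}\right\}$. The key point is that in \emph{every} match the mutant plays $a_{\underline{M}}$: having the minimal level $1$ it never deceives anyone; when it is deceived by a higher-level incumbent it best-replies to its (false) belief, which is still $a_{\underline{M}}$; in a deception equilibrium $DE\left(\theta,\theta'\right)$ the constraint $\sigma'\in\Sigma\left(u_{\theta'}\right)=\left\{ a_{\underline{M}}\right\}$ forces the mutant's component to be $a_{\underline{M}}$; in a no-deception match it observes the opponent and plays a Nash equilibrium of the induced game, in which $a_{\underline{M}}$ is dominant; and two mutants play $\left(a_{\underline{M}},a_{\underline{M}}\right)$. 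Consequently, in any focal post-entry configuration $\left(\tilde{\mu},\tilde{b}\right)$ and any type $\theta\in C\left(\tilde{\mu}\right)$, the quantity $\pi_{\theta'}\left(\theta|\left(\tilde{\mu},\tilde{b}\right)\right)$ is a convex combination of terms of the form $\pi\left(a_{\underline{M}},\cdot\right)$, each at least $\underline{M}$; and since $k_{\theta'}=k_{1}=0$ the mutant's net payoff against every type is at least $\underline{M}$.

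Next I would invoke neutral stability. Since the type-game payoffs depend only on the behaviour policy and not on the frequencies, $\Gamma_{\left(\tilde{\mu},\tilde{b}\right)}$ is a fixed bilinear game on the strategy set $C\left(\mu^{*}\right)\cup\left\{ \theta'\right\}$ whose restriction to $C\left(\mu^{*}\right)$ recovers, via focality, the value $\Pi_{\left(\mu^{*},b^{*}\right)}$ as the payoff of $\mu^{*}$ against itself. By the previous step, the payoff of $\theta'$ against the incumbent population $\mu^{*}$ equals $\sum_{\theta\in C\left(\mu^{*}\right)}\mu^{*}\left(\theta\right)\left(\pi_{\theta'}\left(\theta|\left(\tilde{\mu},\tilde{b}\right)\right)-k_{1}\right)\geq\underline{M}$. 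If one assumes $\Pi_{\left(\mu^{*},b^{*}\right)}<\underline{M}$, then $\theta'$ strictly beats $\mu^{*}$ against the incumbent population, hence --- by bilinearity and continuity --- also against the post-entry mixture $\left(1-\varepsilon\right)\mu^{*}+\varepsilon\theta'$ for all sufficiently small $\varepsilon$, so $\mu^{*}$ fails to be a neutrally stable strategy of $\Gamma_{\left(\tilde{\mu},\tilde{b}\right)}$, contradicting that $\left(\mu^{*},b^{*}\right)$ is an NSC. Therefore $\Pi_{\left(\mu^{*},b^{*}\right)}\geq\underline{M}$.

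The substantive step is the construction of $u'$ together with the verification that it pins the mutant to $a_{\underline{M}}$ \emph{even inside a deception equilibrium}; this works precisely because a successful deceiver's power is exercised only subject to the deceived party's set $\Sigma\left(u_{\theta'}\right)$ of undominated strategies, which here is the singleton $\left\{ a_{\underline{M}}\right\}$. The remainder --- the focality bookkeeping and the passage from ``beats $\mu^{*}$ against the incumbents'' to ``beats $\mu^{*}$ at the post-entry state for small $\varepsilon$'' --- is routine.
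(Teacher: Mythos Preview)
Your proposal is correct and takes essentially the same approach as the paper's own proof: both introduce a monomorphic mutant of cognitive level~$1$ with preferences that make the pure maxmin action $a_{\underline{M}}$ strictly dominant, observe that this mutant is pinned to $a_{\underline{M}}$ in every match and hence earns at least $\underline{M}$ regardless of the post-entry configuration, and then derive a contradiction via continuity when $\Pi_{\left(\mu^{*},b^{*}\right)}<\underline{M}$. Your write-up is somewhat more explicit than the paper's (in particular about why $\Sigma\left(u'\right)=\left\{ a_{\underline{M}}\right\}$ forces the mutant's action even inside a deception equilibrium, and about the edge case where the mutant type happens to coincide with an incumbent), but there is no substantive difference in the argument.
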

\begin{proof}
Assume to the contrary that $\Pi\left(\mu^{*},b^{*}\right)<\underline{M}$.
Let $a_{\text{\textit{\ensuremath{\underline{M}}}}}$ be a maxmin
action of a player, which guarantees that the player's payoff is at
least \textit{\textsubbar{M}}, i.e.\textit{ }$a_{\underline{M}}\in\arg\max_{a_{1}\in A}\min_{a_{2}\in A}\pi\left(a_{1},a_{2}\right).$

Let $u^{a_{\underline{M}}}$ be the preferences in which the player
obtains a payoff of 1 if she plays $a_{\underline{M}}$ and a payoff
of 0 otherwise. Consider a monomorphic group of mutants with type
$\left(u^{a_{\text{\textit{\ensuremath{\underline{M}}}}}},1\right)$.
The fact that $a_{\text{\textit{\ensuremath{\underline{M}}}}}$ is
a maxmin action implies that $\pi_{\left(u^{a_{\underline{M}}},1\right)}\left(\tilde{\mu},\tilde{b}\right)\geq\underline{M}$
in any post-entry configuration. Furthermore, due to continuity it
holds that $\Pi_{\theta}\left(\tilde{\mu},\tilde{b}\right)<\underline{M}$
for any $\theta\in C\left(\mu\right)$ in all sufficiently close focal
post-entry configurations. This contradicts that $\mu^{*}$ is an
NSS in $\Gamma_{\left(\tilde{\mu},\tilde{b}\right)}$, and thus it
contradicts that $\left(\mu^{*},b^{*}\right)$ is an NSC. 
\end{proof}
We conclude by demonstrating that the lower bound of the maxmin payoff
is binding. Specifically, Example \ref{exa:-one} shows an NSC in
a zero-sum game in which the fitness of the incumbents is arbitrarily
close to the lowest feasible payoff in the underlying game -1 (which
is equal to the maxmin payoff).
\begin{example}
\label{exa:-one} Consider the Rock-Paper-Scissors game described
above. Assume that $k_{2}=1$, $k_{3}>2$$,$ and $q\left(2,1\right)=1$.
For each $\epsilon\in\left(0,1\right)$, consider a population in
which $\epsilon$ of the agents have cognitive level 1, and the remaining
$1-\epsilon$ of the agents have level 2. The agents' behaviour is
according to the behaviour described in Proposition \ref{Prop RPS neutral},
i.e.: (1) an agent of level 2 deceives a level-1 opponent into taking
a pure action that the level-2 agent then best-replies to; thus the
level-2 agent earns $1$ and her opponent earns $-1$; and (2) individuals
of the same cognitive level play the unique Nash equilibrium, and
obtain a payoff of zero in the underlying game. When one takes into
account the cognitive cost $k_{2}=1$ of the level-2 agents, this
behaviour implies that all incumbents obtain a fitness of $\epsilon-1$.
An analogous argument to the proof of Proposition \ref{Prop RPS neutral}
implies that this configuration is an NSC.
\end{example}

\section{Extensions\label{sec:Extensions}}

\subsection{Partial Observability When There Is No Deception\label{subsec:Partial-Observability-When}}

As mentioned above, our basic model assumes perfect observability,
and Nash equilibrium behaviour, in matches without deception. In what
follows we briefly describe the results of a robustness check that
relaxes the first of these two assumptions. For brevity, we detail
the full technical analysis in Appendix \ref{sec:partial-observability}.

Specifically, we follow \citet{Dekel_Ely_Yilankaya_2006} and assume
that in matches without deception, each player privately observes
the opponent's type with an exogenous probability $p$, and with the
remaining probability observes an uninformative signal. This general
model extends both our baseline model (where $p=1$) and \citeauthor{Dekel_Ely_Yilankaya_2006}'s
\citeyearpar{Dekel_Ely_Yilankaya_2006} model (which can be viewed
as assuming arbitrarily high deception costs).

The main results of the baseline model ($p=1$) show that (1) only
efficient profiles can be NSCs, and (2) there exist non-Nash efficient
NSCs, provided that the cost of deception is sufficiently large. Our
analysis shows that the former result (namely, stability implies efficiency)
is robust to the introduction of partial observability: (1) a somewhat
weaker notion of efficiency is satisfied by the behaviour of the incumbents
with the highest cognitive level in any NSC for any $p>0$, and (2)
in games such as the Prisoner's Dilemma, we show that only the efficient
profile can be the outcome of an NSC. 

On the other hand, our analysis shows that our second main result
(namely, the stability of non-Nash efficient outcomes) is not robust
to the introduction of partial observability. Specifically, we show
that: (1) non-Nash efficient profiles cannot be NSC outcomes for any
$p<1$ in games like the Prisoner's Dilemma, even when the effective
cost of deception is arbitrarily large; and (2) non-Nash efficient
outcomes cannot be pure NSC outcomes in all games. If a game admits
a profile that is both efficient and Nash, then the profile is an
NSC outcome for any $p\in\left[0,1\right]$. If the underlying game
does not admit such a profile, then our results show that the environment
does not admit a pure NSC for any $p\in\left(0,1\right)$, and that
games like the Prisoner's Dilemma do not admit any NSC. This suggests
that in order to study stability in such environments one might need
to apply weaker solution concepts or to follow a dynamic (rather than
static) approach.

\subsection{Interdependent Preferences\label{subsec:Interdependent-Preferences}}

In the main text we deal exclusively with preferences that are defined
only over action profiles. In what follows we briefly describe how
to extend the analysis to interdependent preferences, i.e. preferences
that may also depend on the opponent's type. A detailed formal analysis
is presented in Appendix \ref{Sect Interdependent}. \citet{Herold_Kuzmics_2009}
study a similar setup while assuming perfect observability of types
among all individuals. Their key result is that any mixed action that
gives each player a payoff above her maxmin payoff can be the outcome
of a stable configuration.\footnote{\citet{Herold_Kuzmics_2009} expand the framework of \citet{Dekel_Ely_Yilankaya_2006}
to include interdependent preferences, i.e. preferences that depend
on the opponent's preference type. Under perfect or almost perfect
observability, if all preferences that depend on the opponent's type
are considered, then any symmetric outcome above the minmax material
payoff is evolutionarily stable. In our setting a pure profile also
has to be a Nash equilibrium in order to be the sole outcome supported
by evolutionarily stable preferences. \citet{Herold_Kuzmics_2009}
find that non-discriminating preferences (including selfish materialistic
preferences) are typically not evolutionarily stable on their own.
By contrast, certain preferences that exhibit discrimination are evolutionarily
stable. Similarly, evolutionary stability requires the presence of
discriminating preferences also in our setup.} 

Our main result for interdependent preferences in our setup shows
that a pure configuration is stable essentially iff: (1) all incumbents
have the same cognitive level $n$, (2) the cost of level $n$ is
smaller than the difference between the incumbents' (fitness) payoff
and the minmax/maxmin value, and (3) the deviation gain is smaller
than the effective cost of deception against an opponent with cognitive
level $n$. In particular, if the marginal effective cost of deception
is sufficiently small, then only Nash equilibria can be the outcomes
of pure stable configurations, while if the effective cost of deceiving
some cognitive level $n$ is sufficiently high (while the cost of
achieving level $n$ is sufficiently low), then essentially any action
profile is the outcome of a pure stable configuration (similar to
the result of \citealp{Herold_Kuzmics_2009}, in the setup without
deception). 

The last part of Appendix \ref{Sect Interdependent} characterises
stable configurations in the Hawk-Dove game. We show that such games
admit heterogeneous stable configurations in which players with different
levels coexist, each type has preferences that induce cooperation
only against itself, and higher types ``exploit'' lower types (and
this is offset by their higher cognitive cost).

\section{Conclusion and Directions for Future Research \label{Sect discussion}}

We have developed a model in which preferences coevolve with the ability
to detect others' preferences and misrepresent one's own preferences.
To this end, we have allowed for heterogeneity with respect to costly
cognitive ability. The assumption of an exogenously given level of
observability of the opponent's preferences, which has characterised
the indirect evolutionary approach up until now, is replaced by the
Machiavellian notion of deception equilibrium, which endogenously
determines what each player observes. Our model assumes a very powerful
form of deception. This allows us to derive sharp results that clearly
demonstrate the effects of endogenising observation and introducing
deception. We think that the ``Bayesian\textquotedblright{} deception
is an interesting model for future research: each incumbent type is
associated with a signal, agents with high cognitive levels can mimic
the signals of types with lower cognitive levels, and agents maximise
their preferences given the received signals and the correct Bayesian
inference about the opponent's type.

In a companion paper (\citealp{HellerMohlin-OoC}) we study environments
in which players are randomly matched, and make inferences about the
opponent's type by observing her past behaviour (rather than directly
observing her type, as is standard in the ``indirect evolutionary
approach\textquotedblright ). In future research, it would be interesting
to combine both approaches and allow the observation of past behaviour
to be influenced by deception.

Most papers taking the indirect evolutionary approach study the stability
of preferences defined over material outcomes. Moreover, it is common
to restrict attention to some parameterised class of such preferences.
Since we study preferences defined on the more abstract level of action
profiles we do not make predictions about whether some particular
kind of preferences over material outcomes, from a particular family
of utility functions, will be stable or not. It would be interesting
to extend our model to such classes of preferences. Furthermore, with
preferences defined over material outcomes it would be possible to
study coevolution of preferences and deception not only in isolated
games, but also when individuals play many different games using the
same preferences. We hope to come back to these questions and we invite
others to employ and modify our framework in these directions.

\appendix

\section{Formal Proofs of Theorems \ref{thm:highest-type-behaivor} and \ref{thm-main-result-efficiency}\label{sec:Formal-Proofs-of}}

\subsection{Preliminaries\label{subsec:Preliminaries-pro-genorus}}

This subsection contains notation and definitions that will be used
in the following proofs.

A generous action is an action such that if played by the opponent,
it allows a player to achieve the maximal fitness payoff. Formally:
\begin{defn}
Action $a_{g}\in A$ is \emph{generous}, if there exists $a\in A$
such that $\pi\left(a,a_{g}\right)\geq\pi\left(a',a''\right)$ for
all $a',a''\in A$.
\end{defn}
Fix a generous action $a_{g}\in A$ of the game $G$. A second-best
generous action is an action such that if played by the opponent,
it allows a player to achieve the fitness payoff that is maximal under
the constraint that the opponent is not allowed to play the generous
action $a_{g}$. Formally: 
\begin{defn}
Action $a_{g_{2}}\in A$ is \emph{second-best generous}, conditional
on $a_{g}\in A$ being first-best generous, if there exists $a\in A$
such that $\pi\left(a,a_{g_{2}}\right)\geq\pi\left(a',a''\right)$
for all $a',a''\in A$ such that $a''\neq a_{g}$.

Fix a generous action $a_{g}\in A$, and fix a second-best generous
action $a_{g_{2}}\in A$, conditional on $a_{g}\in A$ being first-best
generous. For each $\alpha\geq\beta\geq0$, let $u_{\alpha,\beta}$
be the following utility function:
\[
u_{\alpha,\beta}\left(a,a'\right)=\begin{cases}
\alpha & a'=a_{g}\\
\beta & a'=a_{g_{2}}\\
0 & \textrm{otherwise.}
\end{cases}
\]
Observe that such a utility function $u_{\alpha,\beta}$ satisfies: 
\end{defn}
\begin{enumerate}
\item \emph{Indifference:} the utility function only depends on the opponent's
action; i.e. the player is indifferent between any two of her own
actions.
\item \emph{Pro-generosity:} the utility is highest if the opponent plays
the generous action, second-highest if the opponent plays the second-best
generous action, and lowest otherwise. 
\end{enumerate}
Let $U_{GI}=\left\{ u_{\alpha,\beta}|\alpha\geq\beta\geq0\right\} $
be the family of all such preferences, called \emph{pro-generous indifferent
preferences}. Note that $U_{{\color{purple}GI}}$ includes a continuum
of different utilities (under the assumption that $G$ includes at
least three actions). Thus, for any set of incumbent types, we can
always find a utility function in $U_{{\color{purple}GI}}$ that does
not belong to any of the current incumbents.

\subsection{Proof of Theorem \ref{thm:highest-type-behaivor} (Behaviour of the
Highest Types)\label{subsec:Proof-of-Theorem-1}}

\subsubsection{Proof of Theorem \ref{thm:highest-type-behaivor}, Part 1}

Assume to the contrary that $\pi\left(b_{\bar{\theta}}^{N}\left(\bar{\theta}\right),b_{\bar{\theta}}^{N}\left(\bar{\theta}\right)\right)<\hat{\pi}$.
(Note that the definition of $\hat{\pi}$ implies that the opposite
inequality is impossible.) Let $a_{1},a_{2}\in A$ be any two actions
such that $\left(a_{1},a_{2}\right)$ is an efficient action profile,
i.e. $0.5\cdot\left(\pi\left(a_{1},a_{2}\right)+\pi\left(a_{1},a_{2}\right)\right)=\hat{\pi}$.
Let $\theta_{1},\theta_{2},\theta_{3}$ be three types that satisfy
the following conditions: (1) the types are not incumbents: $\theta_{1},\theta_{2},\theta_{3}\notin C\left(\mu^{*}\right)$,
(2) the types have the highest incumbent cognitive level: $n_{\theta_{1}}=n_{\theta_{2}}=n_{\theta_{3}}=\bar{n}$,
and (3) the types have different pro-generosity indifferent preferences;
$u_{\theta_{1}},u_{\theta_{2}},u_{\theta_{3}}\in U_{GI}$ and $u_{\theta_{i}}\neq u_{\theta_{j}}$
for each $i\neq j\in\left\{ 1,2,3\right\} $. Let $\mu'$ be the distribution
that assigns mass $\frac{1}{3}$ to each of these types. The post-entry
type distribution is $\tilde{\mu}=\left(1-\epsilon\right)\cdot\mu^{*}+\epsilon\cdot\mu'$.
Let the post-entry behaviour policy $\tilde{b}$ be defined as follows: 
\begin{enumerate}
\item Behaviour among incumbents respects focality: $\tilde{b}_{\theta}^{N}\left(\theta'\right)=b_{\theta}^{N}\left(\theta'\right)$
and $\tilde{b}_{\theta}^{D}\left(\theta'\right)=b_{\theta}^{D}\left(\theta'\right)$~for
each incumbent pair $\theta,\theta'\in C\left(\mu^{*}\right)$.
\item The mutants play fitness-maximising deception equilibria against incumbents
with lower cognitive levels: $\left(\tilde{b}_{\theta_{i}}^{D}\left(\theta'\right),\tilde{b}_{\theta'}^{D}\left(\theta_{i}\right)\right)\in FMDE\left(\theta_{i},\theta'\right)$
for each $i\in\left\{ 1,2,3\right\} $ and $\theta'\in C\left(\mu^{*}\right)$
with $n_{\theta'}<\bar{n}$. Note that $FMDE\left(\theta_{i},\theta'\right)$
is nonempty in virtue of the construction of $U_{GI}$.
\item In matches without deception between mutants and incumbents, the mutants
mimic $\bar{\theta}$ and the incumbents play the same way they play
against $\bar{\theta}$: $\left(\tilde{b}_{\theta_{i}}^{N}\left(\theta'\right),\tilde{b}_{\theta'}^{N}\left(\theta_{i}\right)\right)=\left(b_{\bar{\theta}}^{N}\left(\theta'\right),b_{\theta'}^{N}\left(\bar{\theta}\right)\right)$,
for each $i\in\left\{ 1,2,3\right\} $ and $\theta'\in C\left(\mu^{*}\right)$. 
\item Two mutants of \textit{different} types play efficiently when meeting
each other: $\tilde{b}_{\theta_{i}}^{N}\left(\theta_{\left(i+1\right)\,\textrm{mod}\,3}\right)=a_{1}$
and $\tilde{b}_{\theta_{i}}^{N}\left(\theta_{\left(i-1\right)\,\textrm{mod}\,3}\right)=a_{2}$
for each $i\in\left\{ 1,2,3\right\} $. 
\item When two mutants of the \textit{same} type meet, they play the same
way $\bar{\theta}$ plays against itself: $\tilde{b}_{\theta_{i}}^{N}\left(\theta_{i}\right)=b_{\bar{\theta}}^{N}\left(\bar{\theta}\right)$
for each $i\in\left\{ 1,2,3\right\} $. 
\end{enumerate}
In virtue of point 1 the construction $\left(\tilde{\mu},\tilde{b}\right)$
is a focal configuration (with respect to $\left(\mu^{*},b^{*}\right)$).
By points 2 and 3 each mutant $\theta_{i}$ earns weakly more than
$\bar{\theta}$ against all incumbent types. By points 4 and 5 each
mutant earns strictly more than $\bar{\theta}$ against the mutants.
In total the average fitness earned by each mutant is strictly higher
than that of $\bar{\theta}$, against a population that follows $\left(\tilde{\mu},\tilde{b}\right)$.
This implies that $\mu'$ is a strictly better reply against $\mu^{*}$
in the population game $\Gamma_{\left(\tilde{\mu},\tilde{b}\right)}$.
Thus, $\mu^{*}$ is not a symmetric Nash equilibrium, and therefore
it is not an NSS, in $\Gamma_{\left(\tilde{\mu},\tilde{b}\right)}$,
which implies that $\mu^{*}$ is not an NSC.

\subsubsection{Proof of Theorem \ref{thm:highest-type-behaivor}, Part 2}

Assume to the contrary that $\left(\left(b_{\bar{\theta}}^{D}\left(\underline{\theta}\right),b_{\underline{\theta}}^{D}\left(\bar{\theta}\right)\right)\right)\not\in FMDE\left(\bar{\theta},\underline{\theta}\right)$.
Let $\hat{\theta}$ be a type that satisfies the conditions of: (1)
not being an incumbent: $\hat{\theta}\notin C\left(\mu^{*}\right)$,
(2) having the highest incumbent cognitive level: $n_{\hat{\theta}}=\bar{n}$,
and (3) having pro-generous indifferent preferences: $u_{\hat{\theta}}\in U_{GI}$.
Let $\mu'$ be the distribution that assigns mass one to type $\hat{\theta}$.
The post-entry type distribution is $\tilde{\mu}=\left(1-\epsilon\right)\cdot\mu^{*}+\epsilon\cdot\mu'$.
Let the post-entry behaviour policy $\tilde{b}$ be defined as follows: 
\begin{enumerate}
\item Behaviour among incumbents respects focality: $\tilde{b}_{\theta}^{N}\left(\theta'\right)=b_{\theta}^{N}\left(\theta'\right)$
and $\tilde{b}_{\theta}^{D}\left(\theta'\right)=b_{\theta}^{D}\left(\theta'\right)$
$\forall\theta,\theta'\in C\left(\mu^{*}\right)$. 
\item In matches with deception between mutants and incumbents , behaviour
is such that the mutants maximise their fitness: $\left(\tilde{b}_{\hat{\theta}}^{D}\left(\theta'\right),\tilde{b}_{\theta'}^{D}\left(\hat{\theta}\right)\right)\in FMDE\left(\hat{\theta},\theta'\right)$
for each $\theta'\in C\left(\mu^{*}\right)$ with $n_{\theta'}<\bar{n}$. 
\item In matches without deception between mutants and incumbents, the mutants
mimic $\bar{\theta}$ and the incumbents play the same way they play
against $\bar{\theta}$: $\left(\tilde{b}_{\hat{\theta}}^{N}\left(\theta'\right),\tilde{b}_{\theta'}^{N}\left(\hat{\theta}\right)\right)=\left(b_{\bar{\theta}}^{N}\left(\theta'\right),b_{\theta'}^{N}\left(\bar{\theta}\right)\right)$,
for each $\theta'\in C\left(\mu^{*}\right)$. 
\item The mutant $\hat{\theta}$ plays against itself the same way $\bar{\theta}$
plays against itself: $\left(\tilde{b}_{\hat{\theta}}^{N}\left(\hat{\theta}\right),\tilde{b}_{\hat{\theta}}^{N}\left(\hat{\theta}\right)\right)=\left(\tilde{b}_{\bar{\theta}}^{N}\left(\bar{\theta}\right),\tilde{b}_{\bar{\theta}}^{N}\left(\bar{\theta}\right)\right)$. 
\end{enumerate}
Note that $\left(\tilde{\mu},\tilde{b}\right)$ is a focal configuration
(with respect to $\left(\mu^{*},b^{*}\right)$) and that $\hat{\theta}$
obtains a strictly higher fitness than $\bar{\theta}$ against a population
that follows $\left(\tilde{\mu},\tilde{b}\right)$. This implies that
$\mu'$ is a strictly better reply against $\mu^{*}$ in the population
game $\Gamma_{\left(\tilde{\mu},\tilde{b}\right)}$. Thus, $\mu^{*}$
is not a symmetric Nash equilibrium, and therefore it is not an NSS,
in $\Gamma_{\left(\tilde{\mu},\tilde{b}\right)}$, which implies that
$\mu^{*}$ is not an NSC.

\subsubsection{Proof of Theorem \ref{thm:highest-type-behaivor}, Part 3}

Assume to the contrary that $\pi\left(\underline{\theta},\bar{\theta}\right)>\hat{\pi}$,
which immediately implies that $\pi\left(\bar{\theta},\underline{\theta}\right)<\hat{\pi}$
and that either $\pi\left(b_{\underline{\theta}}^{|D}\left(\bar{\theta}\right),b_{\bar{\theta}}^{D}\left(\underline{\theta}\right)\right)>\hat{\pi}$
or $\pi\left(b_{\underline{\theta}}^{N}\left(\bar{\theta}\right),b_{\bar{\theta}}^{N}\left(\underline{\theta}\right)\right)>\hat{\pi}$.
Let $\hat{\theta}$ be a type that satisfies the conditions of: (1)
not being an incumbent: $\hat{\theta}\notin C\left(\mu^{*}\right)$,
(2) having the highest incumbent cognitive level: $n_{\hat{\theta}}=\bar{n}$,
and (3) having pro-generous indifferent preferences: $u_{\hat{\theta}}\in U_{GI}$.
Let $\mu'$ be the distribution that assigns mass one to type $\hat{\theta}$.
The post-entry type distribution is $\tilde{\mu}=\left(1-\epsilon\right)\cdot\mu^{*}+\epsilon\cdot\mu'$.
Let the post-entry behaviour policy $\tilde{b}$ be defined as follows: 
\begin{enumerate}
\item Behaviour among incumbents respects focality: $\tilde{b}_{\theta}^{N}\left(\theta'\right)=b_{\theta}^{N}\left(\theta'\right)$
and $\tilde{b}_{\theta}^{D}\left(\theta'\right)=b_{\theta}^{D}\left(\theta'\right)$
$\forall\theta,\theta'\in C\left(\mu^{*}\right)$. 
\item In matches with deception between mutants and incumbents, behaviour
is such that the mutants maximise their fitness: $\left(\tilde{b}_{\hat{\theta}}^{D}\left(\theta'\right),\tilde{b}_{\theta'}^{D}\left(\hat{\theta}\right)\right)\in FMDE\left(\hat{\theta},\theta'\right)$
for each $\theta'\in C\left(\mu^{*}\right)$ with $n_{\theta'}<\bar{n}$. 
\item In a match between a mutant $\hat{\theta}$ and the incumbent $\bar{\theta}$,
the mutant mimics $\underline{\theta}$, and the incumbent $\bar{\theta}$
plays the same way it plays against $\underline{\theta}$: $\left(\tilde{b}_{\hat{\theta}}^{N}\left(\bar{\theta}\right),\tilde{b}_{\bar{\theta}}^{N}\left(\hat{\theta}\right)\right)=\left(b_{\underline{\theta}}^{N}\left(\bar{\theta}\right),b_{\bar{\theta}}^{N}\left(\underline{\theta}\right)\right)$
if $\pi\left(b_{\underline{\theta}}^{N}\left(\bar{\theta}\right),b_{\bar{\theta}}^{N}\left(\underline{\theta}\right)\right)>\hat{\pi}$,
and $\left(\tilde{b}_{\hat{\theta}}^{N}\left(\bar{\theta}\right),\tilde{b}_{\bar{\theta}}^{N}\left(\hat{\theta}\right)\right)=\left(b_{\underline{\theta}}^{D}\left(\bar{\theta}\right),b_{\bar{\theta}}^{D}\left(\underline{\theta}\right)\right)$
otherwise. 
\item The mutant $\hat{\theta}$ plays against itself the same way $\bar{\theta}$
plays against itself: $\left(\tilde{b}_{\hat{\theta}}^{N}\left(\hat{\theta}\right),\tilde{b}_{\hat{\theta}}^{N}\left(\hat{\theta}\right)\right)=\left(\tilde{b}_{\bar{\theta}}^{N}\left(\bar{\theta}\right),\tilde{b}_{\bar{\theta}}^{N}\left(\bar{\theta}\right)\right)$. 
\item The mutant $\hat{\theta}$ mimics $\bar{\theta}$ against all other
incumbents without deception, and these incumbents play against $\hat{\theta}$
in the same way they play against $\bar{\theta}$: $\left(\tilde{b}_{\hat{\theta}}^{N}\left(\theta'\right),\tilde{b}_{\theta'}^{N}\left(\hat{\theta}\right)\right)=\left(b_{\bar{\theta}}^{N}\left(\theta'\right),b_{\theta'}^{N}\left(\bar{\theta}\right)\right)$
for each $\theta'\neq\bar{\theta}$. 
\end{enumerate}
Note that $\left(\tilde{\mu},\tilde{b}\right)$ is a focal configuration
(with respect to $\left(\mu^{*},b^{*}\right)$). By point 2 the mutant
$\hat{\theta}$ earns weakly more than $\bar{\theta}$ against lower
types. By point 3 and Theorem \ref{thm:highest-type-behaivor}.1,
the mutants earn strictly more than $\bar{\theta}$ against type $\bar{\theta}$.
By points 3 and 4 and Theorem \ref{thm:highest-type-behaivor}.1,
the mutant earns strictly more than $\bar{\theta}$ against the mutant.
By point 5 the mutant $\hat{\theta}$ earns the same as $\bar{\theta}$
against all other types. In total the average fitness earned by $\hat{\theta}$
is strictly higher than that of $\bar{\theta}$, against a population
that follows $\left(\tilde{\mu},\tilde{b}\right)$. Recall (Remark
\ref{Remark-internal-stability} in Section \ref{subSect evolution})
that all the incumbent types have the same fitness in $\left(\mu^{*},b^{*}\right)$.
By a standard continuity argument, the fitness of incumbent $\bar{\theta}$
is arbitrarily close (for a sufficiently small $\epsilon$) to the
fitness levels of any other incumbent type in the focal post-entry
configuration $\left(\tilde{\mu},\tilde{b}\right)$. This implies
that $\mu'$ is a strictly better reply against $\mu^{*}$ in the
type game $\Gamma_{\left(\tilde{\mu},\tilde{b}\right)}$. Thus, $\mu^{*}$
is not a symmetric Nash equilibrium, and therefore it is not an NSS,
in $\Gamma_{\left(\tilde{\mu},\tilde{b}\right)}$, which implies that
${\color{purple}\left(\mu^{*},b^{*}\right)}$ is not an NSC.

\subsection{Proof of Case (A) in Theorem \ref{thm-main-result-efficiency}\label{subsec:Proof-of-Case}}

In what follows we fill in the missing technical details for the part
of the proof of Theorem \ref{thm-main-result-efficiency} that concerns
case (A). We begin by proving a lemma.
\begin{lem}
\label{lem: pure selection from DE}If $\left(\sigma_{1},\sigma_{2}\right)\in DE\left(\theta_{1},\theta_{2}\right)$
then there exist actions $a_{1},a_{1}'\in C\left(\sigma_{1}\right)$
and $a_{2},a_{2}'\in C\left(\sigma_{2}\right)$ such that: $\left(a_{1},a_{2}\right)\in DE\left(\theta_{1},\theta_{2}\right)$,
and $\left(a_{1}',a_{2}'\right)\in DE\left(\theta_{1},\theta_{2}\right)$,
with $\pi\left(a_{1},a_{2}\right)\geq\pi\left(\sigma_{1},\sigma_{2}\right)$,
and $\pi\left(a_{1}',a_{2}'\right)\leq\pi\left(\sigma_{1},\sigma_{2}\right)$.

\begin{proof}
Note that for any mixed deception equilibrium $\left(\sigma_{1},\sigma_{2}\right)$
and any action $a\in C\left(\sigma_{2}\right)$, the profile $\left(\sigma_{1},a\right)$
is also a deception equilibrium (because otherwise the deceiver would
not induce the deceived party to take a mixed action that puts positive
weight on $a$). It follows that there are actions $a_{2},a_{2}'\in C\left(\sigma_{2}\right)$
such that $\left(\sigma_{1},a_{2}\right)$ and $\left(\sigma_{1},a_{2}'\right)$
are deception equilibria, with $\pi\left(\sigma_{1},a_{2}\right)\geq\pi\left(\sigma_{1},\sigma_{2}\right)$
and $\pi\left(\sigma_{1},a_{2}'\right)\leq\pi\left(\sigma_{1},\sigma_{2}\right)$.
Furthermore, if $\left(\sigma_{1},a_{2}\right)$ and $\left(\sigma_{1},a_{2}'\right)$
are deception equilibria, then for any action $a\in C\left(\sigma_{1}\right)$,
the profiles $\left(a,a_{2}\right)$ and $\left(a,a'_{2}\right)$
are also deception equilibria, with $\pi\left(\sigma_{1},a_{2}\right)=\pi\left(a,a_{2}\right)$
and $\pi\left(\sigma_{1},a_{2}'\right)=\pi\left(a,a_{2}'\right)$.
Hence there are actions $a_{1},a_{1}'\in C\left(\sigma_{1}\right)$
such that $\left(a_{1},a_{2}\right)$ and $\left(a_{1}',a'_{2}\right)$
are deception equilibria, with $\pi\left(a_{1},a_{2}\right)=\pi\left(\sigma_{1},a_{2}\right)\geq\pi\left(\sigma_{1},\sigma_{2}\right)$,
and $\pi\left(a_{1},a_{2}'\right)=\pi\left(\sigma_{1},a_{2}'\right)\leq\pi\left(\sigma_{1},\sigma_{2}\right)$.
\end{proof}
\end{lem}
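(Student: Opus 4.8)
The plan is to combine two ``purification'' observations about the deceiver's optimisation problem with an elementary convexity argument. Recall that a deception equilibrium between the higher type $\theta_1$ and the lower type $\theta_2$ is, by definition, a pair $(\sigma_1,\sigma_2)$ with $\sigma_2\in\Sigma(u_{\theta_2})$ that maximises $u_{\theta_1}(\sigma,\sigma')$ over $(\sigma,\sigma')\in\Delta(A)\times\Sigma(u_{\theta_2})$. The first step is to purify the deceived party's action: I would show that $(\sigma_1,a_2)\in DE(\theta_1,\theta_2)$ for every $a_2\in C(\sigma_2)$. Two points need checking. First, $a_2\in\Sigma(u_{\theta_2})$: since $\sigma_2$ is a best reply (for $\theta_2$) to some belief $\sigma'$, and a mixed best reply puts weight only on pure best replies, every $a_2\in C(\sigma_2)$ is itself a best reply to $\sigma'$, hence undominated. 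Second, $u_{\theta_1}(\sigma_1,a_2)=u_{\theta_1}(\sigma_1,\sigma_2)$ for every such $a_2$: the quantity $u_{\theta_1}(\sigma_1,\sigma_2)$ is the $\sigma_2$-average of the numbers $u_{\theta_1}(\sigma_1,a_2)$, each of the profiles $(\sigma_1,a_2)$ is feasible, and the average equals the maximum, so no term can be strictly below it. Hence each $(\sigma_1,a_2)$ is again a maximiser, i.e. a deception equilibrium.

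The second step purifies the deceiver's own action. Since $\sigma_1$ maximises $\sigma\mapsto u_{\theta_1}(\sigma,\sigma_2)$, every $a_1\in C(\sigma_1)$ is also a maximiser, so $u_{\theta_1}(a_1,\sigma_2)=u_{\theta_1}(\sigma_1,\sigma_2)$ and, more generally (using the first step), $u_{\theta_1}(a_1,a_2)=u_{\theta_1}(\sigma_1,\sigma_2)$ for every $(a_1,a_2)\in C(\sigma_1)\times C(\sigma_2)$. Since each such $a_2$ lies in $\Sigma(u_{\theta_2})$, this shows $(a_1,a_2)\in DE(\theta_1,\theta_2)$ for \emph{every} pure profile in $C(\sigma_1)\times C(\sigma_2)$.

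The fitness inequalities now follow purely from convexity. We have $\pi(\sigma_1,\sigma_2)=\sum_{a_1\in C(\sigma_1)}\sum_{a_2\in C(\sigma_2)}\sigma_1(a_1)\,\sigma_2(a_2)\,\pi(a_1,a_2)$, a weighted average of the finitely many numbers $\pi(a_1,a_2)$ with $(a_1,a_2)\in C(\sigma_1)\times C(\sigma_2)$. Taking $(a_1,a_2)$ to be a maximiser of $\pi$ over this set gives $\pi(a_1,a_2)\ge\pi(\sigma_1,\sigma_2)$, and taking $(a_1',a_2')$ to be a minimiser gives $\pi(a_1',a_2')\le\pi(\sigma_1,\sigma_2)$; by the second step both profiles lie in $DE(\theta_1,\theta_2)$, which is exactly the claim.

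I expect the only subtle step to be the purification of the deceived party: one needs the fact that pure actions in the support of an undominated mixed strategy are themselves undominated, together with the observation that the deceiver's apparent indifference across $C(\sigma_2)$ is forced --- it is not built into the definition, but follows because every $(\sigma_1,a_2)$ is feasible in the deceiver's problem and the $\sigma_2$-average of $u_{\theta_1}(\sigma_1,\cdot)$ over $C(\sigma_2)$ attains the maximum. Once these are in place, purifying the deceiver's action and the convexity argument are routine.
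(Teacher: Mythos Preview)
Your proof is correct and follows essentially the same approach as the paper's: purify each coordinate of the mixed deception equilibrium (using that pure actions in the support of $\sigma_2$ remain undominated and that pure actions in the support of $\sigma_1$ remain optimal for the deceiver), and then invoke linearity of $\pi$ to select pure profiles bounding the fitness payoff from above and below. Your version is in fact slightly more careful than the paper's, since you first establish that \emph{every} pure profile in $C(\sigma_1)\times C(\sigma_2)$ is a deception equilibrium and then use a convexity inequality for $\pi$, whereas the paper's sequential purification asserts the equality $\pi(\sigma_1,a_2)=\pi(a,a_2)$ for all $a\in C(\sigma_1)$, which need not hold (only the subjective utility $u_{\theta_1}$ is constant across $C(\sigma_1)$, not the fitness $\pi$).
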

Assume that case (A) holds: there is an incumbent $\mathring{\theta}$
that plays inefficiently against itself, i.e. $\left(b_{\mathring{\theta}}^{N}\left(\mathring{\theta}\right),b_{\mathring{\theta}}^{N}\left(\mathring{\theta}\right)\right)$
$\neq\left(\bar{a},\bar{a}\right)$, and there is no incumbent type
with a strictly higher cognitive level than $\mathring{\theta}$ that
satisfies any of the cases (A), (B), or (C). To prove that this cannot
hold in an NSC we introduce a mutant $\hat{\theta}=\left(\hat{u},n_{\mathring{\theta}}\right)\notin C\left(\mu^{*}\right).$
If $\Sigma\left(u_{\mathring{\theta}}\right)=\Delta$, then we let
$\hat{u}\in U_{GI}$ be such that $\hat{\theta}=\left(\hat{u},n_{\mathring{\theta}}\right)\notin C\left(\mu^{*}\right)$.
If $\Sigma\left(u_{\mathring{\theta}}\right)\neq\Delta$, then we
fix a dominated action $\underline{a}\in A\backslash\Sigma\left(u_{\mathring{\theta}}\right)$,
and let $\hat{u}$ be defined as follows:
\[
\hat{u}\left(a,a'\right)=\begin{cases}
max_{a\in A}\left(u_{\mathring{\theta}}\left(a,\bar{a}\right)\right) & a=a'=\bar{a}\\
u_{\mathring{\theta}}\left(\underline{a},a'\right)-\beta_{a'} & a=\underline{a}\,\,and\,\,a'\neq\bar{a}\\
u_{\mathring{\theta}}\left(a,a'\right) & \textrm{otherwise,}
\end{cases}
\]
where each $\beta_{a'}\geq0$ is chosen such that $\hat{\theta}=\left(\hat{u},n_{\mathring{\theta}}\right)\notin C\left(\mu^{*}\right)$.
That is, if $\Sigma\left(u_{\mathring{\theta}}\right)\neq\Delta$,
then the utility function $\hat{u}$ is constructed from the utility
function $u_{\mathring{\theta}}$ by arbitrarily lowering the payoff
of some of the outcomes associated with the (already) dominated action
$\underline{a}$ and that do not involve action $\bar{a}$, while
increasing the payoff of the outcome $\left(\bar{a},\bar{a}\right)$
by the minimal amount that makes $\bar{a}$ a best reply to itself.
Note that this definition of $\hat{u}$ is valid also for the case
of $\bar{a}=\underline{a}$. It follows that $a\in\Sigma\left(u_{\mathring{\theta}}\right)\cup\left\{ \bar{a}\right\} $
iff $a\in\Sigma\left(\hat{u}\right)$. To see this, note that if $\Sigma\left(u_{\mathring{\theta}}\right)\neq\Delta$
and $\underline{a}=\bar{a}$, then $\Sigma\left(\hat{u}\right)=\Sigma\left(u_{\mathring{\theta}}\right)\cup\left\{ \bar{a}\right\} $.
Otherwise $\Sigma\left(\hat{u}\right)=\Sigma\left(u_{\mathring{\theta}}\right)$.
Thus, $\hat{\theta}$ can be induced to play exactly the same pure
actions as $\mathring{\theta}$, unless $\bar{a}=\underline{a}$,
in which case $\hat{\theta}$ can be induced to play $\bar{a}$ in
addition to all actions that $\mathring{\theta}$ can be induced to
play.

Let $\mu'$ be the distribution that assigns mass one to type $\left(\hat{u},n_{\mathring{\theta}}\right)$.
Let the post-entry type distribution be $\tilde{\mu}=\left(1-\epsilon\right)\cdot\mu^{*}+\epsilon\cdot\mu'$,
and let the post-entry behaviour policy $\tilde{b}$ be defined as
follows:
\begin{enumerate}
\item Behaviour among incumbents respects focality: $\tilde{b}_{\theta}^{N}\left(\theta'\right)=b_{\theta}^{N}\left(\theta'\right)$
and $\tilde{b}_{\theta}^{D}\left(\theta'\right)=b_{\theta}^{D}\left(\theta'\right)$
$\forall\theta,\theta'\in C\left(\mu^{*}\right)$.
\item In matches without deception between the mutant type $\hat{\theta}$
and any incumbent type $\theta'$, the mutant $\hat{\theta}$ mimics
$\mathring{\theta}$, and the incumbent $\theta'$ treats the mutant
$\hat{\theta}$ like the incumbent $\mathring{\theta}$: $\left(\tilde{b}_{\hat{\theta}}^{N}\left(\theta'\right),\tilde{b}_{\theta'}^{N}\left(\hat{\theta}\right)\right)=\left(b_{\mathring{\theta}}^{N}\left(\theta'\right),b_{\theta'}^{N}\left(\mathring{\theta}\right)\right)$
for all $\theta'$ such that $n_{\theta'}=n_{\mathring{\theta}}$
and $\theta'\neq\hat{\theta}$. 
\item In matches with deception between the mutant type $\hat{\theta}$
and any lower type $\theta'\in C\left(\mu^{*}\right)$ (with $n_{\theta'}<n_{\hat{\theta}}$),
we distinguish two cases.

\begin{enumerate}
\item Suppose that $\Sigma\left(u_{\mathring{\theta}}\right)=\Delta$. In
this case let $\left(\tilde{b}_{\hat{\theta}}^{D}\left(\theta'\right),\tilde{b}_{\theta'}^{D}\left(\hat{\theta}\right)\right)\in FMDE\left(\hat{\theta},\theta'\right)$.
Note that $FMDE\left(\hat{\theta},\theta'\right)$ is nonempty since
in this case $\hat{u}\in U_{GI}$. 
\item Suppose that $\Sigma\left(u_{\mathring{\theta}}\right)\neq\Delta$.
In this case let $\left(\tilde{b}_{\hat{\theta}}^{D}\left(\theta'\right),\tilde{b}_{\theta'}^{D}\left(\hat{\theta}\right)\right)=\left(a_{1},a_{2}\right)$,
for some $\left(a_{1},a_{2}\right)\in DE\left(\mathring{\theta},\theta'\right)$
such that $\pi\left(a_{1},a_{2}\right)\geq\pi\left(b_{\mathring{\theta}}^{D}\left(\theta'\right),b_{\theta'}^{D}\left(\mathring{\theta}\right)\right)$.
By Lemma \ref{lem: pure selection from DE} above such a profile $\left(a_{1},a_{2}\right)$
exists.
\end{enumerate}
\item The mutant plays efficiently when meeting itself: $\tilde{b}_{\hat{\theta}}^{N}\left(\hat{\theta}\right)=\bar{a}$. 
\item In matches with deception between the mutant $\hat{\theta}$ and a
higher type $\theta'\in C\left(\mu^{*}\right)$ ($n_{\theta'}>n_{\hat{\theta}}$),
we distinguish two cases. Pick a profile $\left(a_{1},a_{2}\right)\in DE\left(\theta',\mathring{\theta}\right)$,
such that $\pi\left(a_{2},a_{1}\right)\geq\pi\left(b_{\mathring{\theta}}^{D}\left(\theta'\right),b_{\theta'}^{D}\left(\mathring{\theta}\right)\right)$.
By Lemma \ref{lem: pure selection from DE} above such a profile $\left(a_{1},a_{2}\right)$
exists. Moreover, by the construction of $\hat{u}$, it is either
the case that $\left(a_{1},a_{2}\right)\in DE\left(\theta',\hat{\theta}\right)$,
or there is some $\tilde{a}$ such that $u_{\theta'}\left(\tilde{a},\bar{a}\right)>u_{\theta'}\left(a_{1},a_{2}\right)$.
In the latter case we have $\left(\bar{a},\bar{a}\right)\in DE\left(\theta',\hat{\theta}\right)$,
due to the fact that $\left(b_{\theta'}^{D}\left(\theta'\right),b_{\theta'}^{D}\left(\theta'\right)\right)=\left(\bar{a},\bar{a}\right)$
implies that $\bar{a}$ is a best reply to $\bar{a}$ for type $\theta'$.

\begin{enumerate}
\item If $u_{\theta'}\left(a_{1},a_{2}\right)>u_{\theta'}\left(\bar{a},\bar{a}\right)$
let $\left(\tilde{b}_{\theta'}^{D}\left(\hat{\theta}\right),\tilde{b}_{\hat{\theta}}^{D}\left(\theta'\right)\right)=\left(a_{1},a_{2}\right)$.
Note that by the definition of $\left(a_{1},a_{2}\right)$ it holds
that $\pi\left(a_{2},a_{1}\right)\geq\pi\left(b_{\mathring{\theta}}^{D}\left(\theta'\right),b_{\theta'}^{D}\left(\mathring{\theta}\right)\right)$.
\item If $u_{\theta'}\left(a_{1},a_{2}\right)\leq u_{\theta'}\left(\bar{a},\bar{a}\right)$
let $\left(\tilde{b}_{\theta'}^{D}\left(\hat{\theta}\right),\tilde{b}_{\hat{\theta}}^{D}\left(\theta'\right)\right)=\left(\bar{a},\bar{a}\right)$.
Note that by the definition of $\mathring{\theta}$ it holds that
$\pi\left(\bar{a},\bar{a}\right)\geq\pi\left(b_{\mathring{\theta}}^{D}\left(\theta'\right),b_{\theta'}^{D}\left(\mathring{\theta}\right)\right)$. 
\end{enumerate}
\end{enumerate}
By point 1, $\left(\tilde{\mu},\tilde{b}\right)$ is a focal configuration
(with respect to $\left(\mu^{*},b^{*}\right)$). By point 2 the mutant
$\hat{\theta}$ earns weakly more than $\mathring{\theta}$ against
lower types. By point 3 the mutant $\hat{\theta}$ earns the same
as $\mathring{\theta}$ against all incumbents of level $n_{\mathring{\theta}}$.
By points 3 and 4 (and the assumption that $\mathring{\theta}$ does
not play efficiently against itself), the mutant $\hat{\theta}$ earns
strictly more than $\mathring{\theta}$ against $\hat{\theta}$. By
point 5 the mutant $\hat{\theta}$ earns weakly more than $\mathring{\theta}$
against all incumbents of a higher cognitive level. In total the average
fitness earned by $\hat{\theta}$ is strictly higher than that of
$\mathring{\theta}$, against a population that follows $\left(\tilde{\mu},\tilde{b}\right)$.
This implies that $\mu'$ is a strictly better reply against $\mu^{*}$
in the population game $\Gamma_{\left(\tilde{\mu},\tilde{b}\right)}$.
Thus, $\mu^{*}$ is not a symmetric Nash equilibrium, and therefore
it is not an NSS of $\Gamma_{\left(\tilde{\mu},\tilde{b}\right)}$,
which implies that $\mu^{*}$ is not an NSC. Thus we have shown that
$\mathring{\theta}$ plays efficiently against itself. 

\section{Type-interdependent Preferences\label{Sect Interdependent}}

As argued by \citet[pp. 542--543]{Herold_Kuzmics_2009}, people playing
a game seem to care not only about the outcome, but also their opponent\textquoteright s
intentions and they discriminate between different types of opponents
(for experimental evidence, see, e.g., \citealp{falk2003nature,charness2007intention}).
Motivated by this observation, in this appendix we extend our baseline
model to allow preferences to depend not only on action profiles,
but also on an opponent's type.

\subsection{Changes to the Baseline Model\label{subsec:Changes-to-the}}

We briefly describe how to extend the model to handle type-interdependent
preferences. Our construction is similar to that of \citet{Herold_Kuzmics_2009}.

When the preferences of a type depend on the opponent's type, we can
no longer work with the set of all possible preferences, because it
would create problems of circularity and cardinality.\footnote{The circularity comes from the fact that each type contains a preferences
component, which is identified with a utility function defined over
types (and action profiles). To see that this creates a problem if
the set of types is unrestricted, let $U_{\ast}$ be the set of all
utility functions that we want to include in our model. Hence $\Theta_{\ast}=U_{\ast}\times\mathbb{N}$
is the set of all types. If $U_{\ast\ast}$ is the set of \emph{all}
mappings $u:A\times A\times\Theta_{\ast}\rightarrow\mathbb{R}$, or,
equivalently, $U_{\ast\ast}$ is the set of \emph{all} mappings $u:A\times A\times U_{\ast}\times\mathbb{N}\rightarrow\mathbb{R}$,
then clearly we have $U_{\ast\ast}\neq U_{\ast}$. See also footnote
10 in \citet{Herold_Kuzmics_2009}.} Instead, we must restrict attention to a pre-specified set of feasible
preferences. We begin by defining $\Theta_{ID}$ as an arbitrary set
of labels. Each label is a pair $\theta=\left(u,n\right)\in\Theta_{ID}$,
where $n\in\mathbb{N}$ and $u$ is a type-interdependent utility
function that depends on the played action profile as well as the
opponent's label, $u:A\times A\times\Theta_{ID}\rightarrow\mathbb{R}$.

Each label $\theta=\left(u,n\right)$ may now be interpreted as a
type. The definition of $u$ extends to mixed actions in the obvious
way. We use the label $u$ also to describe its associated utility
function $u$. Thus $u\left(\sigma,\sigma^{\prime},\theta^{\prime}\right)$
denotes the subjective payoff that a player\ with preferences $u$
earns when she plays strategy $\sigma$ against an opponent with type
$\theta^{\prime}$ who plays strategy $\sigma^{\prime}$.

Let $U_{ID}$ denote the set of all preferences that are part of some
type in $\Theta_{ID}$, i.e. $U_{ID}=\{u:\exists n\in\mathbb{N}$
s.t. $\left(u,n\right)\in\Theta_{ID}\}$. For each preference $\tilde{u}\in U$
of the baseline model (which is defined only over the action profiles)
we can define an equivalent type-interdependent preference $u\in U_{ID}$,
which is independent of the opponent's type; that is, $u\left(\sigma,\sigma^{\prime},\theta^{\prime}\right)=u\left(\sigma,\sigma^{\prime},\theta^{\prime\prime}\right)=\tilde{u}\left(\sigma,\sigma^{\prime}\right)$
for each $\theta^{\prime},\theta^{\prime\prime}\in\Theta_{ID}$ and
$\sigma,\sigma'\in\Delta\left(A\right)$. Let $U_{N}$ denote the
set of all such type-interdependent versions of the preferences of
the baseline model. To simplify the statements of the results of Section
\ref{subsec:Characterisation-of-Pure-interdepen}, in what follows
we assume that $U_{N}\subseteq U_{ID}$.

Next, we amend the definitions of Nash equilibrium, undominated strategies,
and deception equilibrium. The best-reply correspondence now takes
both strategies and types as arguments: $BR_{u}\left(\sigma^{\prime},\theta^{\prime}\right)=\arg\max_{\sigma\in\Delta\left(A\right)}u\left(\sigma,\sigma^{\prime},\theta^{\prime}\right)$.
Accordingly we adjust the definition of the set of Nash equilibria,
\[
NE\left(\theta,\theta^{\prime}\right)=\left\{ \left(\sigma,\sigma^{\prime}\right)\in\Delta\left(A\right)\times\Delta\left(A\right):\sigma\in BR_{u}\left(\sigma^{\prime},\theta^{\prime}\right)\text{ and }\sigma^{\prime}\in BR_{u^{\prime}}\left(\sigma,\theta\right)\right\} ,
\]
and the set of \emph{undominated strategies,}
\[
\Sigma\left(\theta\right)=\left\{ \sigma\in\Delta\left(A\right):\text{there exists }\sigma^{\prime}\in\Delta\left(A\right)\text{ and }\theta^{\prime}\in\Theta_{ID}\text{ such that }\sigma\in BR_{u}\left(\sigma^{\prime},\theta^{\prime}\right)\right\} .
\]
Finally, we adapt the definition of deception equilibrium. Given two
types $\theta,\theta^{\prime}$ with $n_{\theta}>n_{\theta^{\prime}},$
a strategy profile $\left(\tilde{\sigma},\tilde{\sigma}^{\prime}\right)$
is a \emph{deception equilibrium }if\emph{ }
\[
\left(\tilde{\sigma},\tilde{\sigma}^{\prime}\right)\in\arg\max_{\sigma\in\Delta\left(A\right),\sigma^{\prime}\in\Sigma\left(\theta^{\prime}\right)}u_{\theta}\left(\sigma,\sigma^{\prime},\theta^{\prime}\right).
\]
The interpretation of this definition is that the deceiver is able
to induce both a belief about the deceiver's preferences, and a belief
the deceiver's intention, in the mind of the deceived party. Let $DE\left(\theta,\theta^{\prime}\right)$
be the set of all such deception equilibria. The rest of our model
remains unchanged.

Some of the following results rely on the existence of preferences
$u_{\tilde{a}^{\prime},\tilde{n}}^{_{\tilde{a}}}$ that satisfy two
conditions: (1) action $\tilde{a}$ is a (subjective) dominant action
against an opponent with the same preferences and with cognitive level
$\tilde{n}$, and (2) action $\tilde{a}^{\prime}$ is the dominant
action against all other opponents. Formally:
\begin{defn}
Given any two actions $\tilde{a},\tilde{a}^{\prime}\in A,$ let $u_{\tilde{a}^{\prime},\tilde{n}}^{_{\tilde{a}}}$\ be
the discriminating preferences defined by the following utility function:
for all $a,a^{\prime}\in A$ and $\theta'\in U_{ID}$,
\[
u_{\tilde{a}^{\prime},\tilde{n}}^{_{\tilde{a}}}\left(a,a^{\prime},\theta^{\prime}\right)=\left\{ \begin{array}{cc}
1 & \left(\theta^{\prime}=\left(u_{\tilde{a}^{\prime},\tilde{n}}^{_{\tilde{a}}},\tilde{n}\right)\,\text{and}\,a=\tilde{a}\right)\,or\,\left(\theta^{\prime}\neq\left(u_{\tilde{a}^{\prime},\tilde{n}}^{_{\tilde{a}}},\tilde{n}\right)\ \text{and }a=\tilde{a}^{\prime}\right)\\
0 & \text{otherwise}.
\end{array}\right.
\]

Finally, define the \emph{effective cost of deceiving cognitive level}
$n$, denoted by $c\left(n\right)$, as the minimal ratio between
the additional cognitive cost and the probability of deceiving an
opponent of cognitive level $n$: 
\[
c\left(n\right)=\min_{m>n}\,\,\frac{k_{m}-k_{n}}{q\left(m,n\right)}.
\]
Note that $c\left(1\right)\equiv c$, which coheres with the definition
of the effective cost of deception\emph{ }(with respect to cognitive
level $1$) in the baseline model. 
\end{defn}

\subsection{Pure Maxmin and Minimal Fitness\label{subsec:Pure-Maxmin-and}}

The pure maxmin and minmax values give a minimal bound to the fitness
of an NSC. Given a game $G=\left(A,\pi\right),$ define $\underline{M}$
and $\bar{M}$ as its pure maxmin and minmax values, respectively:
\[
\underline{M}=\max_{a_{1}\in A}\min_{a_{2}\in A}\pi\left(a_{1},a_{2}\right),\,\,\,\,\,\,\,\,\,\,\,\,\,\,\overline{M}=\min_{a_{2}\in A}\max_{a_{1}\in A}\pi\left(a_{1},a_{2}\right).
\]
The pure maxmin value \textit{$\underline{M}$} is the minimal fitness
payoff a player can guarantee herself in the sequential game in which
she plays first, and the opponent replies in an arbitrary way. The
pure minmax value $\overline{M}$ is the minimal fitness payoff a
player can guarantee herself in the sequential game in which her opponent
plays first an arbitrary action, and she best-replies to the opponent's
pure action. It is immediate that $\underline{M}\leq\overline{M}$
and that the minmax value in mixed actions is between these two values.

Let $a_{\text{\textit{\b{M}}}}$ be a maxmin action of a player; i.e.
an action $a_{\text{\textit{\b{M}}}}$ guarantees that the player's
payoff is at least \textit{\textsubbar{M},} and let $a_{\bar{M}}$
be a minmax action, i.e. an action that guarantees that the opponent's
payoff is at most $\bar{M}$:\textit{ }
\[
a_{\underline{M}}\in\arg\max_{a_{1}\in A}\min_{a_{2}\in A}\pi\left(a_{1},a_{2}\right),\,\,\,\,\,\,\,\,\,\,\,a_{\bar{M}}\in\arg\min_{a_{2}\in A}\max_{a_{1}\in A}\pi\left(a_{1},a_{2}\right).
\]

The proof of Proposition \ref{pro:above-min-max-1} holds with minor
changes also in the setup of interdependent preferences (under the
assumption that $\left(u^{a_{\underline{M}}},1\right)\in\Theta_{ID}$),
and this implies that the maxmin value is a lower bound on the fitness
payoff obtained in an NSC (i.e. if $\left(\mu,b\right)$ is an NSC
then $\Pi\left(\mu,b\right)\geq\underline{M}$ ).

\subsection{Characterisation of Pure Stable Configurations\label{subsec:Characterisation-of-Pure-interdepen}}

In this subsection we show that, essentially, a pure configuration
is stable if and only if (1) all incumbents have the same cognitive
level $n$, (2) the cost of level $n$ is smaller than the difference
between the incumbents' (fitness) payoff and the minmax/maxmin values,
and (3) the deviation gain is smaller than the effective cost of deceiving
cognitive level $n$. 

We begin by formally stating and proving the necessity claim. 
\begin{prop}
If $\left(\mu^{*},a^{\ast}\right)$ is a pure NSC then the following
holds: (1) if $\theta,\theta'\in C\left(\mu^{*}\right)$ then $n_{\theta}=n_{\theta'}=n$
for some $n$, (2) $\pi\left(a^{\ast},a^{\ast}\right)-\underline{M}\geq k_{n}$,
and (3) $g\left(a^{\ast}\right)\leq c\left(n\right)$.
\end{prop}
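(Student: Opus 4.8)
The plan is to follow the template of the baseline results --- Lemma~\ref{Lemma: level 1-1} and Propositions~\ref{pro-full-characterization-pure-outcoes} and~\ref{pro:above-min-max-1} --- adapting each step to the type-interdependent setting. For \emph{Part~(1)}, note that in a pure configuration every incumbent obtains the material payoff $\pi(a^{\ast},a^{\ast})$ in \emph{every} match (since $b_{\theta}^{N}(\theta')=b_{\theta}^{D}(\theta')=a^{\ast}$ for all incumbent pairs), so the fitness of a type $\theta$ is exactly $\pi(a^{\ast},a^{\ast})-k_{n_{\theta}}$. By Remark~\ref{Remark-internal-stability} (the case $\mu'=\mu^{*}$ in Definition~\ref{Def ESC}) every NSC is balanced, hence all incumbents have the same cognitive cost; since $k$ is strictly increasing this forces a common level $n$, and in particular $\Pi(\mu^{*},b^{*})=\pi(a^{\ast},a^{\ast})-k_{n}$.

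For \emph{Part~(2)}: as remarked in Section~\ref{subsec:Pure-Maxmin-and}, the proof of Proposition~\ref{pro:above-min-max-1} carries over to the interdependent setting (using $(u^{a_{\underline{M}}},1)\in\Theta_{ID}$) --- the level-$1$ mutant whose preference makes the maxmin action $a_{\underline{M}}$ dominant earns at least $\underline{M}$ in \emph{any} focal post-entry configuration, so if $\Pi(\mu^{*},b^{*})<\underline{M}$ it strictly outperforms the incumbents for small $\varepsilon$, contradicting that $\mu^{*}$ is an NSS in the post-entry type game. Thus $\Pi(\mu^{*},b^{*})\geq\underline{M}$, which by Part~(1) is exactly $\pi(a^{\ast},a^{\ast})-k_{n}\geq\underline{M}$.

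For \emph{Part~(3)}: suppose $g(a^{\ast})>c(n)$, so there is $m>n$ with $q(m,n)\,g(a^{\ast})>k_{m}-k_{n}$. Pick $a'\in\arg\max_{a}\pi(a,a^{\ast})$, so $\pi(a',a^{\ast})=\pi(a^{\ast},a^{\ast})+g(a^{\ast})$. Let a monomorphic group of mutants of type $\hat{\theta}=(\hat{u},m)$ enter, with $\hat{u}$ a non-discriminating completely indifferent preference (such preferences lie in $U_{N}\subseteq U_{ID}$ and form a continuum, so $\hat{\theta}\notin C(\mu^{*})$ can be arranged). Consider a focal post-entry configuration in which (i) incumbents keep their pre-entry play among themselves; (ii) in each deception match the mutant --- the deceiver, since $m>n$ --- plays $a'$ and induces its incumbent opponent to play $a^{\ast}$, a legitimate deception equilibrium because $\hat{u}$ is indifferent and $a^{\ast}\in\Sigma(\theta')$ for every incumbent $\theta'$ (pureness gives $a^{\ast}\in BR_{u_{\theta'}}(a^{\ast},\theta')$); (iii) in each non-deception match between a mutant and an incumbent, both play $a^{\ast}$; and (iv) mutants play $a^{\ast}$ against one another. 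Then the mutant's type-game payoff against the incumbents is $q(m,n)\pi(a',a^{\ast})+(1-q(m,n))\pi(a^{\ast},a^{\ast})-k_{m}=\pi(a^{\ast},a^{\ast})+q(m,n)\,g(a^{\ast})-k_{m}$, which strictly exceeds the incumbents' own type-game payoff $\pi(a^{\ast},a^{\ast})-k_{n}$. Hence $\mu^{*}$ is not a symmetric Nash equilibrium --- and so not an NSS --- in the post-entry type game, contradicting that $(\mu^{*},a^{\ast})$ is an NSC.

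The step I expect to be the main obstacle is item~(iii). With type-interdependent preferences, $u_{\theta'}(\cdot,\cdot,\hat{\theta})$ need not rank $a^{\ast}$ as a best reply to $a^{\ast}$, so one cannot in general force the incumbents to respond with $a^{\ast}$ to a mutant playing $a^{\ast}$ in a non-deception sub-match --- whereas in the baseline model incumbent preferences ignore the opponent's type and $a^{\ast}\in BR_{u_{\theta'}}(a^{\ast},\theta')$ settles it at once. One therefore has to pick $\hat{u}$ so that the subjective game between $\hat{\theta}$ and each incumbent type has $(a^{\ast},a^{\ast})$ as a Nash equilibrium, or else lower-bound the mutant's non-deception payoff by other means (for instance via the maxmin value together with Part~(2)), and do so uniformly over a possibly heterogeneous incumbent population; one must also check that this single mutant is not strictly outperformed against the rare other mutants, nor by any incumbent type. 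Carrying out this case analysis is the technical heart of the proof.
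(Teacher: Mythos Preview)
Your argument is essentially the paper's. Parts (1) and (2) coincide (for Part (2) the paper invokes the materialistic mutant $(\pi,1)$ rather than $(u^{a_{\underline{M}}},1)$, but the role is identical and Section~\ref{subsec:Pure-Maxmin-and} already notes that the Proposition~\ref{pro:above-min-max-1} argument carries over). For Part (3) the paper likewise picks $m>n$ with $q(m,n)\,g(a^{\ast})>k_{m}-k_{n}$, introduces a level-$m$ non-discriminating mutant, and exhibits a focal post-entry configuration in which every interaction features $a^{\ast}$ except that the deceiving mutant plays the fitness best reply $\tilde{a}$; the payoff inequality is then exactly the one you write. The only real difference is the mutant's preference: instead of complete indifference the paper takes $\tilde{u}\in U_{N}$ given by $\tilde{u}(a,a',\theta')=\mathbf{1}_{a\in\{a^{\ast},\tilde{a}\}\text{ and }a'=a^{\ast}}$, so that $(\tilde{a},a^{\ast})$ is a \emph{forced} deception equilibrium (not merely one permitted by indifference) and $a^{\ast}$ is automatically a best reply for the mutant in a non-deception match whenever the incumbent plays $a^{\ast}$.

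On the obstacle you flag in item~(iii) --- that an incumbent with type-interdependent preferences need not have $a^{\ast}$ as a best reply to $a^{\ast}$ when facing the new type $\hat{\theta}$ --- the paper does not engage with it. It simply asserts that ``there is a focal post-entry configuration in which all agents play action $a^{\ast}$ in all interactions except when a deceiving mutant plays action $\tilde{a}$'' and proceeds directly to the fitness comparison. The case analysis you outline (tailoring $\hat{u}$ so that $(a^{\ast},a^{\ast})$ is a Nash equilibrium against every incumbent, or lower-bounding the non-deception payoff via the maxmin value) is absent from the published proof.
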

\begin{proof}
~

\begin{enumerate}
\item Since all players earn the same game payoff of $\pi\left(a^{\ast},a^{\ast}\right),$
they must also incur the same cognitive cost, or else the fitness
of the different incumbent types would not be balanced (which would
contradict the fact that $\left(\mu,a^{\ast}\right)$ is an NSC). 
\item Assume to the contrary that $\pi\left(a^{\ast},a^{\ast}\right)-\underline{M}<k_{n}$.
A mutant of type $\left(\pi,1\right)$ will be able to earn at least
$\underline{M}$ against incumbents in any post-entry focal configuration.
As the fraction of mutants vanishes the average fitness of mutants
is weakly higher than $\underline{M}$, whereas the fitness of the
incumbents converges to $\pi\left(a^{\ast},a^{\ast}\right)-k_{n}$.
Thus, if it were the case that $\pi\left(a^{\ast},a^{\ast}\right)-\underline{M}<k_{n}$,
then the mutants would outperform the incumbents. 
\item Assume to the contrary that $g\left(a^{\ast}\right)>c\left(n\right)$.
This implies that there exists a cognitive level $m>n$ such that
$g\left(a^{\ast}\right)>\frac{k_{m}-k_{n}}{q\left(m,n\right)}$. Let
$\tilde{a}$ be the fitness best reply against $a^{*}$. Let $\tilde{u}\in U_{N}$
be the preferences that assign a subjective payoff of one if the agent
plays either $\tilde{a}$ or $a^{*}$ and the opponent plays $a^{*}$,
and zero otherwise, i.e. $\tilde{u}\left(a,a',\theta'\right)=\mathbf{1}_{a\in\left\{ a^{*},\tilde{a}\right\} \,and\,a'=a^{*}}$.
There is a focal post-entry configuration in which all agents play
action $a^{*}$ in all interactions except when a deceiving mutant
plays action $\tilde{a}$. A mutant of type $\left(\tilde{u},m\right)$
will then earn $\pi\left(a^{\ast},a^{\ast}\right)+g\left(a^{\ast}\right)\cdot q\left(m,n\right)$
against the incumbents. As the fraction of mutants vanishes the average
fitness of mutants is weakly higher than 
\[
\pi\left(a^{\ast},a^{\ast}\right)+g\left(a^{\ast}\right)\cdot q\left(m,n\right)-k_{m}>\pi\left(a^{\ast},a^{\ast}\right)+\left(k_{m}-k_{n}\right)-k_{m}=\pi\left(a^{\ast},a^{\ast}\right)-k_{n},
\]
whereas the fitness of the incumbents is weakly below $\pi\left(a^{\ast},a^{\ast}\right)-k_{n}$.
Thus, if it were true that $g\left(a^{\ast}\right)>c\left(n\right)$,
the mutants would strictly outperform the incumbents.
\end{enumerate}
\end{proof}
Next, we state and prove the sufficiency claim.
\begin{prop}
\label{Prop: high-cost-implies-stability-1}Suppose \textup{that}
$\hat{\theta}:=\left(u_{a_{\bar{M},n}}^{_{a^{\ast}}},n\right)\in\Theta_{ID}$.
If $\pi\left(a^{\ast},a^{\ast}\right)-\overline{M}>k_{n}$, and $g\left(a^{\ast}\right)<c\left(n\right)$,
then $\left(\hat{\theta},a^{\ast}\right)$ is an ESC. 
\end{prop}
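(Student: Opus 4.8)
The plan is to prove the stronger fact that the monomorphic type $\hat{\theta}=(u^{a^{*}}_{a_{\overline{M}},n},n)$, which lies in $\Theta_{ID}$ by hypothesis, is a \emph{strict} symmetric Nash equilibrium of every type game $\Gamma_{(\tilde{\mu},\tilde{b})}$ arising from a focal post-entry configuration; since a strict symmetric Nash equilibrium is automatically an ESS, this immediately gives that $(\hat{\theta},a^{*})$ is an ESC, and the key inequality will turn out not to involve the invasion share $\varepsilon$ at all. First I would record the behaviour that the discriminating preferences $u^{a^{*}}_{a_{\overline{M}},n}$ force on $\hat{\theta}$: the action $a^{*}$ is the unique subjective best reply of $\hat{\theta}$ against an opponent of type $\hat{\theta}$, while the pure minmax action $a_{\overline{M}}$ is the unique best reply of $\hat{\theta}$ against \emph{every} other opponent. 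Hence $\Sigma(\hat{\theta})=\{a^{*},a_{\overline{M}}\}$ (no mixture is ever a best reply), and these are two distinct actions, since $a^{*}=a_{\overline{M}}$ would force $\pi(a^{*},a^{*})\le\overline{M}$, contradicting $\pi(a^{*},a^{*})-\overline{M}>k_{n}\ge 0$. In particular $(a^{*},a^{*})$ is the unique Nash equilibrium of the game induced by two copies of $\hat{\theta}$, there is no deception between incumbents because $q(n,n)=0$, and $(\hat{\theta},a^{*})$ is a well-defined pure configuration in which every incumbent earns $\pi(a^{*},a^{*})-k_{n}$.

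The crux is a uniform upper bound, valid for every focal behaviour policy $\tilde{b}$, on the fitness $\pi_{\theta}(\hat{\theta}\mid(\tilde{\mu},\tilde{b}))$ that an arbitrary mutant type $\theta\ne\hat{\theta}$ earns against an incumbent. If $n_{\theta}\le n$ there is no deception between $\theta$ and $\hat{\theta}$, or else $\hat{\theta}$ deceives $\theta$; in both cases $\hat{\theta}$ plays its strictly dominant action $a_{\overline{M}}$ against the foreign type $\theta$, so $\theta$ obtains at most $\overline{M}$. If $n_{\theta}>n$, then with probability $q(n_{\theta},n)$ the mutant deceives $\hat{\theta}$ but can induce it only to play an element of $\Sigma(\hat{\theta})=\{a^{*},a_{\overline{M}}\}$, so in that event $\theta$ gets at most $\max\{\,\max_{a}\pi(a,a^{*}),\ \max_{a}\pi(a,a_{\overline{M}})\,\}=\max\{\pi(a^{*},a^{*})+g(a^{*}),\ \overline{M}\}=\pi(a^{*},a^{*})+g(a^{*})$; with the complementary probability there is no deception and $\theta$ again faces $a_{\overline{M}}$ and gets at most $\overline{M}$. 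Writing $P:=\pi(a^{*},a^{*})$, this yields $\pi_{\theta}(\hat{\theta}\mid(\tilde{\mu},\tilde{b}))\le B(\theta)$, where $B(\theta)=\overline{M}$ for $n_{\theta}\le n$ and $B(\theta)=q(n_{\theta},n)(P+g(a^{*}))+(1-q(n_{\theta},n))\overline{M}$ for $n_{\theta}>n$. Since the incumbent earns exactly $P-k_{n}$ against $\hat{\theta}$, the type $\hat{\theta}$ is a strict Nash equilibrium of $\Gamma_{(\tilde{\mu},\tilde{b})}$ as soon as $P-k_{n}>B(\theta)-k_{n_{\theta}}$ for each of the finitely many mutant types $\theta\in C(\mu')\setminus\{\hat{\theta}\}$.

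It remains to check this inequality in the two regimes. For $n_{\theta}\le n$ we have $k_{n}-k_{n_{\theta}}\le k_{n}<P-\overline{M}=P-B(\theta)$, using $k_{n_{\theta}}\ge 0$ and the hypothesis $\pi(a^{*},a^{*})-\overline{M}>k_{n}$. For $n_{\theta}>n$, put $q:=q(n_{\theta},n)\in(0,1]$; the claim $P-B(\theta)>k_{n}-k_{n_{\theta}}$ rearranges to $(1-q)(P-\overline{M})>q\,g(a^{*})+\bigl(k_{n}-k_{n_{\theta}}\bigr)$, and here the left side is nonnegative (as $q\le 1$ and $P>\overline{M}$) while the right side is strictly negative, because $g(a^{*})<c(n)\le (k_{n_{\theta}}-k_{n})/q$ gives $q\,g(a^{*})<k_{n_{\theta}}-k_{n}$. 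Hence $\hat{\theta}$ is a strict Nash equilibrium, and therefore an ESS, of every focal $\Gamma_{(\tilde{\mu},\tilde{b})}$, so $(\hat{\theta},a^{*})$ is an ESC. I expect the only genuinely delicate steps to be the identification $\Sigma(\hat{\theta})=\{a^{*},a_{\overline{M}}\}$ — this is exactly what caps at $P+g(a^{*})$ the payoff a higher‑level mutant can wring out of deceiving an incumbent — and the verification that the bound $B(\theta)$ holds regardless of which deception equilibria and Nash equilibria the focal policy $\tilde{b}$ selects; once these are pinned down, the remainder is arithmetic.
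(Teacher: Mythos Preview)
Your proof is correct and follows essentially the same route as the paper's: identify that the discriminating preferences force $\hat{\theta}$ to play $a^{*}$ against itself and $a_{\overline{M}}$ against every other type, bound a low-level mutant's payoff against $\hat{\theta}$ by $\overline{M}$, bound a high-level mutant's payoff by combining the deception and non-deception events, and then invoke the two hypotheses $\pi(a^{*},a^{*})-\overline{M}>k_{n}$ and $g(a^{*})<c(n)$ to get strict inequalities. Your explicit computation of $\Sigma(\hat{\theta})=\{a^{*},a_{\overline{M}}\}$ and your sharper bound $B(\theta)=q(P+g(a^{*}))+(1-q)\overline{M}$ for $n_{\theta}>n$ are slightly more careful than the paper, which simply uses the looser bound $P+q\,g(a^{*})$ (obtained from yours by replacing $\overline{M}$ with $P$ in the non-deception term), but the substance is the same.
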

\begin{proof}
Suppose that all incumbents are of type $\left(u_{a_{\bar{M,n}}}^{_{a^{\ast}}},n\right)$.
Note that in all focal post-entry configurations the incumbent $\hat{\theta}$
always plays either $a^{\ast}$ or $a_{\bar{M}}$. Moreover, whenever
an incumbent agent is non-deceived, then she plays action $a^{*}$
against a fellow incumbent and action $a_{\bar{M}}$ against a mutant.
The fact that $\pi\left(a^{\ast},a^{\ast}\right)-k_{n}>\overline{M}$
implies that any mutant $\theta\neq\hat{\theta}$ with cognitive level
$n_{\theta'}\leq n$ earns a strictly lower payoff against the incumbents
in any focal post-entry configuration. As a result, if the frequency
of mutants is sufficiently small, then they are strictly outperformed.
Against a mutant $\left(\theta',n'\right)$ with cognitive level $n'>n$,
an incumbent may play action $a^{\ast}$ only when she is being deceived.
Since $\pi\left(a^{\ast},a^{\ast}\right)>\overline{M}$ the mutants
earn (on average) at most $\pi\left(a^{\ast},a^{\ast}\right)+g\left(a^{\ast}\right)\cdot q\left(n',n\right)$
in matches against incumbents. Consequently, as the fraction of mutants
vanishes, the average fitness of mutants is weakly less than 
\[
\pi\left(a^{\ast},a^{\ast}\right)+g\left(a^{\ast}\right)\cdot q\left(n',n\right)-k_{n'}<\pi\left(a^{\ast},a^{\ast}\right)+\frac{k_{n'}-k_{n}}{q\left(n',n\right)}\cdot q\left(n',n\right)-k_{n'}=\pi\left(a^{\ast},a^{\ast}\right)-k_{n},
\]
and the average fitness of the incumbents converges to $\pi\left(a^{\ast},a^{\ast}\right)-k_{n}$.
Hence, the mutants are outperformed. 
\end{proof}
In particular, our results imply that:
\begin{enumerate}
\item Any pure equilibrium that induces a payoff above the minmax value
$\overline{M}$ is the outcome of a pure ESC (regardless of the cost
of deception). 
\item If the effective cost of deception is sufficiently small, then only
Nash equilibria can be the outcomes of pure NSCs. Specifically, this
is the case if $c\left(n\right)<g\left(a\right)$ for each cognitive
level $n$ and each action $a$ such that $\left(a,a\right)$ is not
a Nash equilibrium of the fitness game.
\item If there is a cognitive level $n$, such that (1) the cost of achieving
level $n$ is sufficiently small, and (2) the effective cost of deceiving
an opponent of level $n$ is sufficiently high, then essentially any
pure profile is the outcome of a pure ESC (similar to the results
of \citealp{Herold_Kuzmics_2009}, in the setup without deception).
Formally, let $A'\subseteq A$ be the set of actions that induce a
payoff above the minmax value: $A'=\left\{ a\in A|\pi\left(a,a\right)>\bar{M}\right\} $.
Assume that there is a cognitive level $n$, such that (1) $k_{n}<\pi\left(a,a\right)-\bar{M}$
for each action $a\in A'$ and (2) $c\left(n\right)>g\left(a\right)$
for each action $a$. Then any action $a\in A'$ is the outcome of
a pure ESC (in which all incumbents have cognitive level $n$).
\end{enumerate}

\subsection{Application: In-group Cooperation and Out-group Exploitation\label{subsec:Application-Hawk-Dove}}

The following table represents a family of Hawk-Dove games. When both
players play $D$ (Dove) they earn $1$ each and when they both play
$H$ (Hawk) they earn $0$. When a player plays $H$ against an opponent
playing $D$, she obtains an additional gain of $g>0$ and the opponent
incurs a loss of $l\in\left(0,1\right)$.
\begin{equation}
\begin{array}{ccc}
 & H & D\\
H & 0,0 & 1+g,1-l\\
D & 1-l,1+g & 1,1
\end{array}.\label{Matrix PD/HD}
\end{equation}
It is natural to think of a mutual play of $D$ as the cooperative
outcome. We define preferences that induce players to cooperate with
their own kind and to seek to exploit those who are not of their own
kind.
\begin{defn}
Let $u^{n}$ denote the preferences such that:

\begin{enumerate}
\item If $u_{\theta^{\prime}}=u^{n}$ and $n_{\theta^{\prime}}=n$ then
$u^{n}\left(D,a^{\prime},\theta^{\prime}\right)=1$ and $u^{n}\left(H,a^{\prime},\theta^{\prime}\right)=0$
for all $a^{\prime}$.
\item If $u_{\theta^{\prime}}\neq u^{n}$ or $n_{\theta^{\prime}}\neq n$
then $u^{n}\left(H,a^{\prime},\theta^{\prime}\right)=1$ and $u^{n}\left(D,a^{\prime},\theta^{\prime}\right)=0$
for all $a^{\prime}$. 
\end{enumerate}
\end{defn}
Thus, when facing someone who is of the same type, an individual with
$u^{n}$-preferences strictly prefers cooperation, in the sense of
playing $D$. When facing someone who is not of the same type, an
individual with $u^{n}$-preferences strictly prefers the aggressive
action $H$. 

To simplify the analysis and the notation in this example we assume\emph{
}that a player always succeeds in deceiving an opponent with a lower
cognitive level; i.e. we assume that $q\left(n,n'\right)=1$ whenever
$n{\color{purple}>}n'$. 

Under the assumption that $g>l$ and that the marginal cognitive costs
are sufficiently small (but non-vanishing), we construct an ESC in
which only individuals with preferences from $\{u^{i}\}_{i=1}^{\infty}$
are present. Individuals of different cognitive levels coexist, and
non-Nash profiles are played in all matches between equals. When individuals
of the same level meet, they play mutual cooperation $\left(D,D\right)$.
When individuals of different levels meet, the higher level plays
$H$ and the lower level plays $D$. The gain from obtaining the high
payoff of $1+g$ against lower types is exactly counterbalanced by
the higher cognitive costs. By contrast, if $g<l$ then the game does
not admit this kind of stable configuration. 
\begin{prop}
\label{Prop PD/HD heterogeneity}Let $G$ be the game represented
in (\ref{Matrix PD/HD}), where $g>0$ and $l\in\left(0,1\right)$.
Assume that $q\left(n,n'\right)=1$ whenever $n\neq n'$. Suppose
that the marginal cognitive cost is small but non-vanishing, so that
(a) there is an $N$ such that $k_{N}\leq l+g<k_{N+1}$, and (b) it
holds that $g>k_{n+1}-k_{n}\text{ for all }n\leq N$.

(i) If $g>l$ then there exists an ESC $\left(\mu^{\ast},b^{\ast}\right)$,
such that $C\left(\mu^{\ast}\right)\subseteq\{\left(u^{n},n\right)\}_{n=1}^{N}$,
and $\mu^{\ast}$ is mixed (i.e. $\left\vert C\left(\mu^{\ast}\right)\right\vert >1$).
The behaviour of the incumbents is as follows: if the individuals
in a match are of different cognitive levels, then the higher level
plays H and the lower level plays D; if both individuals in a match
are of the same cognitive level, then they both play D.

(ii) If $g=l$ then there exists an NSC with the above properties.

(iii) If $g<l$ then there does not exist any NSC $\left(\mu^{\ast},b^{\ast}\right)$,
such that $C\left(\mu^{\ast}\right)\subseteq\{\left(u^{n},n\right)\}_{n=1}^{\infty}$. 
\end{prop}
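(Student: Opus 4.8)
The plan is to prove parts (i) and (ii) by exhibiting the asserted configuration and checking the stability conditions directly, and to prove part (iii) by contradiction, using the negative-semidefiniteness criterion for neutral stability that already appears in the proof of Theorem~\ref{thm-main-result-efficiency} (case~B) together with the pure-configuration characterisation of Appendix~\ref{subsec:Characterisation-of-Pure-interdepen}.

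First I would pin down the candidate. Take the support to be $\{(u^{n},n):1\le n\le N^{*}\}$ for a truncation level $N^{*}\le N$ to be chosen, with the behaviour $b^{*}$ prescribed in the statement: same-level matches play $(D,D)$, and in a match of levels $n<n'$ the higher type deceives, plays $H$, and induces $D$ on the lower type. This is a legitimate configuration: by definition of $u^{n}$, action $D$ is strictly dominant for $u^{n}$ against an opponent of type $(u^{n},n)$, so $(D,D)$ is the unique Nash equilibrium of a same-level match; and $H$ is strictly dominant for $u^{n'}$ against any opponent whose type is not $(u^{n'},n')$, while both pure actions are undominated for $u^{n}$ (each is a best reply to some opponent type), hence $(H,D)\in DE((u^{n'},n'),(u^{n},n))$. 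The frequencies $\mu^{*}$ are taken to solve the balance equations $l\,\mu^{*}_{n+1}+g\,\mu^{*}_{n}=k_{n+1}-k_{n}$ (for $1\le n<N^{*}$) together with $\sum_{n}\mu^{*}_{n}=1$; I would use assumptions (a)--(b) (and $g\ge l$) to pick $N^{*}$ so that this linear system has a strictly positive solution and so that no level-$(N^{*}+1)$ agent can profitably invade --- when $N^{*}=N$ the latter is immediate from $k_{N+1}>l+g$ and $\Pi^{*}>1-l$. Producing this explicit distribution (and the right $N^{*}$) is the fiddly step; it parallels the heterogeneous construction of \citet{Conlisk_2001}.

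Given the candidate, balance holds by construction, so by Remark~\ref{Remark-internal-stability} it suffices to rule out profitable invasions. Any mutant type $(u,n)$ differs from every incumbent type (if $u=u^{n}$ and $n\le N^{*}$ it is an incumbent, not a mutant), so every incumbent treats it as an outsider: it plays $H$ against the mutant in a non-deception match and, when it is the higher type, induces an undominated action on the mutant (in the worst case $D$, giving the mutant $\pi(D,H)=1-l$). Consequently a mutant of level $n\le N^{*}$ earns game payoff at most $(1+g)F^{*}_{n-1}+(1-l)(1-F^{*}_{n-1})$ ($F^{*}$ the c.d.f.\ of $\mu^{*}$), and after subtracting $k_{n}$ and using the balance identity this is at most $\Pi^{*}-l\,\mu^{*}_{n}<\Pi^{*}$; a mutant of level $n>N^{*}$ earns game payoff at most $1+g$ and pays $k_{n}\ge k_{N^{*}+1}$, again strictly below $\Pi^{*}$ by the choice of $N^{*}$. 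For invasions by re-mixtures on $C(\mu^{*})$ I would use that an interior symmetric equilibrium is an NSS (resp.\ ESS) iff its payoff matrix is negative semidefinite (resp.\ definite) on the tangent space: writing $\mathbf B$ for the payoff matrix of $\Gamma_{(\mu^{*},b^{*})}$, the cognitive-cost terms and the constant part drop out on the tangent space and a short computation with $(\sum_i x_i)^{2}=0$ gives
\[
x^{T}\mathbf B\,x=-\frac{g-l}{2}\,\|x\|^{2}\qquad\text{whenever }\sum_{i}x_{i}=0 ,
\]
so $\mu^{*}$ is an ESS of $\Gamma_{(\mu^{*},b^{*})}$ when $g>l$ and an NSS when $g=l$. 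Combining the cases, $(\mu^{*},b^{*})$ is an ESC when $g>l$ (part (i)) and an NSC when $g=l$ (part (ii)).

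For part (iii), suppose $(\mu^{*},b^{*})$ is an NSC with support contained in $\{(u^{n},n)\}_{n=1}^{\infty}$. If the support is a singleton the configuration is pure with outcome $D$, and the necessary conditions of Appendix~\ref{subsec:Characterisation-of-Pure-interdepen} would force $\pi(D,D)-\underline M=l\ge k_{n}$ and $g(D)=g\le c(n)$; but $c(n)=k_{n+1}-k_{n}<g$ for $n\le N$ by (b), and $k_{n}>l+g>l$ for $n>N$ by (a), so neither can hold. If the support contains two levels $n<n'$, apply the same quadratic-form computation to this actually existing pair, with $x=e_{n}-e_{n'}$: whichever undominated action $\sigma'\in\{H,D\}$ the deceiver induces in $b^{*}$, one gets $x^{T}\mathbf B\,x=2-\big(\pi(H,\sigma')+\pi(\sigma',H)\big)\ge l-g>0$, so $\mathbf B$ is not negative semidefinite on the tangent space and $\mu^{*}$ cannot be an NSS of $\Gamma_{(\mu^{*},b^{*})}$ --- contradicting internal stability. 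The genuinely hard step is the construction in (i)--(ii): once the distribution is in hand everything reduces to the short verifications above, but extracting an explicit valid $\mu^{*}$ and truncation $N^{*}$ from (a)--(b) requires care.
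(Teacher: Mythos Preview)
The paper relegates the formal proof to the online supplementary Appendix~C, so a line-by-line comparison is not possible from the text provided; however, your approach is sound and almost certainly parallels the paper's, since it reuses exactly the two ingredients the paper has already developed: the Conlisk-type balanced construction (as in Proposition~\ref{Prop RPS neutral}) and the negative-semidefiniteness test on the tangent space (as in case~(B) of Theorem~\ref{thm-main-result-efficiency}).

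A few points to tighten. First, your balance equation $g\mu^{*}_{n}+l\mu^{*}_{n+1}=k_{n+1}-k_{n}$ and your quadratic-form identity $x^{T}\mathbf{B}x=-\tfrac{g-l}{2}\|x\|^{2}$ on $\{\sum x_i=0\}$ are both correct; the latter cleanly delivers ESS for $g>l$, NSS for $g=l$, and the instability for $g<l$ in the heterogeneous case of~(iii). Second, your external-stability bound for a level-$n$ mutant, namely at most $\Pi^{*}-l\mu^{*}_{n}$, is right, but it relies on level~$n$ actually lying in the support. You presuppose $C(\mu^{*})=\{1,\ldots,N^{*}\}$, whereas the statement only asserts $C(\mu^{*})\subseteq\{(u^{n},n)\}_{n=1}^{N}$; depending on the cost sequence the balanced solution may set some interior $\mu^{*}_{n}=0$, in which case those levels drop from the support and mutants there need a separate (still easy) check. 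This is precisely the ``fiddly'' construction you already flag, and condition~(b) taken all the way to $n=N$ is what makes it go through. Third, in part~(iii) your use of the pure-configuration necessary conditions from Appendix~\ref{subsec:Characterisation-of-Pure-interdepen} is correct: for $n\le N$ condition~(b) kills $g(D)\le c(n)$, and for $n>N$ condition~(a) kills $\pi(D,D)-\underline{M}\ge k_{n}$, exactly as you say. Your computation $x^{T}\mathbf{B}x=2-(\pi(H,\sigma')+\pi(\sigma',H))\ge l-g>0$ covers mixed $\sigma'$ as well, since the expression is affine in $\sigma'$.
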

The formal proof is presented in Appendix C.
\begin{rem}
It is possible to construct an ESC that is like Proposition \ref{Prop PD/HD heterogeneity}(i)
except that when incumbents of the same cognitive level meet they
play the mixed equilibrium of the Hawk-Dove game. Thus we can have
ESCs in which agents mix at the individual level. For instance, this
can be accomplished by considering preferences $u^{m}$ such that:
(1) if $u_{\theta^{\prime}}=u^{m}$ and $n_{\theta^{\prime}}=n$ then
$u^{m}\left(a,a^{\prime},\theta^{\prime}\right)=\pi\left(a,a^{\prime},\theta^{\prime}\right)$
for all $a$ and $a^{\prime}$, and (2) if $u_{\theta^{\prime}}\neq u^{m}$
or $n_{\theta^{\prime}}\neq n$ then $u^{n}\left(H,a^{\prime},\theta^{\prime}\right)=1$
and $u^{n}\left(D,a^{\prime},\theta^{\prime}\right)=0$ for all $a^{\prime}$.
\end{rem}

\section{Constructions of Heterogeneous NSCs in Examples\label{sec:Constructions-of-Heterogeneous}}

Appendix C appears in the supplementary material that can be found
online.

\section{\label{sec:partial-observability}Partial Observability When There
Is No Deception}

Appendix D appears in the supplementary material that can be found
online.

\bibliographystyle{econometrica}
\phantomsection\addcontentsline{toc}{section}{\refname}\bibliography{references}

\end{document}